\documentclass[%
reprint,
superscriptaddress,
 amsmath,amssymb,
aps,
]{revtex4-1}

\usepackage[english]{babel}
\usepackage[utf8]{inputenc}

\setcitestyle{sort&compress,numbers}
\usepackage{amsthm, color, mathtools, amsmath, amssymb, setspace,bbm, tensor, subfigure, mathtools, placeins, graphicx, wrapfig,float, verbatim, xcolor}
\usepackage{graphicx}
\usepackage[normalem]{ulem}
\usepackage[unicode=true, breaklinks=false, pdfborder={0 0 1}, backref=false, colorlinks=true, linkcolor=blue, citecolor=blue]{hyperref}

\usepackage{cancel,bm}

\let\oldsqrt\sqrt
\def\sqrt{\mathpalette\DHLhksqrt}
\def\DHLhksqrt#1#2{%
\setbox0=\hbox{$#1\oldsqrt{#2\,}$}\dimen0=\ht0
\advance\dimen0-0.2\ht0
\setbox2=\hbox{\vrule height\ht0 depth -\dimen0}%
{\box0\lower0.4pt\box2}}

\newcommand{\tr}{\operatorname{tr}}
\newcommand{\ptr}[1]{\operatorname{tr}_{#1}}


\newcommand{\Bcal}{\mathcal{B}}
\newcommand{\Ccal}{\mathcal{C}}

\newcommand{\Ecal}{\mathcal{E}}
\newcommand{\Fcal}{\mathcal{F}}

\newcommand{\Hcal}{\mathcal{H}}
\newcommand{\Ical}{\mathcal{I}}
\newcommand{\Mcal}{\mathcal{M}}

\newcommand{\Tcal}{\mathcal{T}}
\newcommand{\Ucal}{\mathcal{U}}

\newcommand{\Lcal}{\mathcal{L}}

\newcommand{\Jcal}{\mathcal{J}}
\newcommand{\Pcal}{\mathcal{P}}
\newcommand{\Pprob}{\mathbb{P}}

\newcommand{\Xcal}{\mathcal{X}}
\newcommand{\ident}{\mathbbm{1}}

\newcommand{\DOthm}{{\normalfont D{\O}}\ }
\newcommand{\DOpr}{D{\O}\ }
\newcommand{\inp}{\text{\normalfont \texttt{i}}}
\newcommand{\out}{\text{\normalfont \texttt{o}}}

\newcommand{\sbt}{\,\begin{picture}(-1,1)(-1,-3)\circle*{2.5}\end{picture}\ }

\def\bra#1{\mathinner{\langle{#1}|}}

\def\ket#1{\mathinner{|{#1}\rangle}}

\def\braket#1{\mathinner{\langle{#1}\rangle}}


\newcommand*\xbar[1]{%
   \hbox{%
     \vbox{%
       \hrule height 0.5pt 
       \kern0.5ex
       \hbox{%
         \kern-0.2em
         \ensuremath{#1}%
         \kern-0.0em
       }%
     }%
   }%
}



\def\BraVert{\egroup\,\mid\,\bgroup}

\def\ketbra#1#2{\ket{#1\vphantom{#2}}\!\bra{#2\vphantom{#1}}}

\def\bra#1{\mathinner{\langle{#1}|}}

\def\ket#1{\mathinner{|{#1}\rangle}}

\def\braket#1{\mathinner{\langle{#1}\rangle}}

\usepackage{bbm, braket}
\usepackage[mathscr]{euscript}
\allowdisplaybreaks
\exhyphenpenalty=9000\hyphenpenalty=9000 

\newtheorem*{theorem*}{Theorem}
\newtheorem{theorem}{Theorem}
\newtheorem{definition}{Definition}
\newtheorem{lemma}{Lemma}

\newtheorem{example}{Example}
\newtheorem{corollary}{Corollary}
\newtheorem*{corollary*}{Corollary}

\makeatletter
\newcommand{\neutralize}[1]{\expandafter\let\csname c@#1\endcsname\count@}
\makeatother

\newtheorem{thm}{Theorem}

\newenvironment{thmbis}[1]
  {%
   \neutralize{thm}\phantomsection
   \begin{thm}}
  {\end{thm}}

\theoremstyle{definition}

\newenvironment{manualtheorem}[1]{%
  \manualtheoreminner
}{\endmanualtheoreminner}

\graphicspath{{Images/}}

\begin{document}

\title{When is a non-Markovian quantum process classical?}

\author{Simon Milz}
\email{simon.milz@oeaw.ac.at} 
\affiliation{School of Physics and Astronomy, Monash University, Clayton, Victoria 3800, Australia}
\affiliation{Institute for Quantum Optics and Quantum Information, Austrian Academy of Sciences, Boltzmanngasse 3, 1090 Vienna, Austria}

\author{Dario Egloff}
\email{dario.egloff@mailbox.tu-dresden.de}
\affiliation{Institute  of  Theoretical  Physics and IQST,  Universit{\"a}t  Ulm,  Albert-Einstein-Allee  11D-89069  Ulm,  Germany}
\affiliation{Institute  of  Theoretical  Physics,  Technical University Dresden, D-01062 Dresden, Germany}

\author{Philip Taranto}
\affiliation{Institute for Quantum Optics and Quantum Information, Austrian Academy of Sciences, Boltzmanngasse 3, 1090 Vienna, Austria}

\author{Thomas Theurer}
\affiliation{Institute  of  Theoretical  Physics and IQST,  Universit{\"a}t  Ulm,  Albert-Einstein-Allee  11D-89069  Ulm,  Germany}

\author{Martin B. Plenio}
\affiliation{Institute  of  Theoretical  Physics and IQST,  Universit{\"a}t  Ulm,  Albert-Einstein-Allee  11D-89069  Ulm,  Germany}

\author{Andrea Smirne}
\affiliation{Institute  of  Theoretical  Physics and IQST,  Universit{\"a}t  Ulm,  Albert-Einstein-Allee  11D-89069  Ulm,  Germany}
\affiliation{\mbox{Dipartimento di Fisica ``Aldo Pontremoli", Universit{\`a} degli Studi di Milano, via Celoria 16, I-20133 Milan,
Italy}}
\affiliation{Istituto Nazionale di Fisica Nucleare, Sezione di Milano, via Celoria 16, I-20133 Milan, Italy}

\author{Susana F. Huelga}
\email{susana.huelga@uni-ulm.de}
\affiliation{Institute  of  Theoretical  Physics and IQST,  Universit{\"a}t  Ulm,  Albert-Einstein-Allee  11D-89069  Ulm,  Germany}

\begin{abstract}
More than a century after the inception of quantum theory, the question of which traits and phenomena are fundamentally quantum remains under debate. Here  we give an answer to this question for temporal processes which are
    probed sequentially by means of projective measurements of the same observable. Defining classical processes as those that can---in principle---be simulated by means of classical resources only, we fully characterize the set of such processes. Based on this characterization, we show that for non-Markovian processes (i.e., processes with memory), the absence of coherence does not guarantee the classicality of observed phenomena and furthermore derive an experimentally and computationally accessible measure for non-classicality in the presence of memory. We then provide a direct connection between classicality and the vanishing of quantum discord between the evolving system and its environment. Finally, we demonstrate that---in contrast to the memoryless setting---in the non-Markovian case, there exist processes that are \textit{genuinely quantum}, i.e., they display non-classical statistics independent of the measurement scheme that is employed to probe them.
\end{abstract}

\date{\today}
\maketitle
\section{Introduction}
Quantum coherence is considered to be one of the fundamental traits that distinguishes quantum from classical mechanics~\cite{streltsov_measuring_2015,streltsov2017colloquium,hu2018quantum}. Beyond its mathematical deviation from classical theory, it plays an important role in the enhancement of quantum metrology tasks~\cite{napoli2016robustness,marvian2016quantify}, constitutes a fundamental requirement for many quantum algorithms~\cite{hillery2016coherence,matera2016coherent}, and has been conjectured to be
necessary for the formulation of efficient transport models in biology
that are consistent with spectroscopic data~\cite{huelga2013vibrations,Scholes2017,Wang2019}. Consequently, the resource theory of coherence~\cite{aberg2006quantifying,baumgratz2014quantifying, levi2014quantitative, piani_robustness_2016,chitambar2016comparison,winter2016operational,chitambar2016critical,chitambar2018quantum,theurer2018quantifying} has been of tremendous interest in recent years, and has seen rapid development both on the theoretical as well as the experimental side~\cite{wu2019quantum}.

Despite such progress and the growing wealth of accompanying evidence that links coherence to non-classical phenomena, the explicit connection between the two remains unclear and subject to active debate~\cite{wilde_mark_m_could_2010,briggs_equivalence_2011,miller_perspective_2012,leon-montiel_highly_2013,oreilly_non-classicality_2014}. Put differently, the mere presence of coherence does not guarantee the existence of effects that cannot be explained on purely classical grounds, and an unambiguous relationship between coherence and non-classicality has not been established yet.

In order to provide such a connection, an operationally meaningful and clear-cut definition of classicality is crucial. One such possible definition is based on experimentally attainable quantities only, namely the joint probability distributions obtained from sequential measurements of an observable~\footnote{For a different demarcation line between classical and quantum physics, based on the memory cost required to simulate a given process, see, e.g., Ref.~\cite{budroni_memory_2019}}. If these satisfy the Kolmogorov consistency conditions for all considered sets of measurement times---which provide the starting point for the formulation of the theory of classical stochastic processes~\cite{kolmogorov_foundations_1956,breuer_theory_2007}---then they can, in principle, be explained by a fully classical model and there is therefore nothing inherently quantum about the observed phenomenon. If they do not, then there exists no underlying classical stochastic process that could lead to the observed joint probability distributions, and the corresponding process is considered non-classical. This characterization of classicality is in the spirit of the derivation of Leggett-Garg inequalities, where, instead of classicality, non-invasiveness and macroscopic realism are put to the test ~\cite{leggett1985quantum,leggett_realism_2008}. Indeed, any set of probability distributions that satisfies the Kolmogorov conditions does not violate the corresponding Leggett-Garg inequalities~\cite{emary_leggettgarg_2014, asano_violation_2014}.

Following this line of reasoning, and in a sense to be further specified later more precisely, in Ref.~\cite{smirne_coherence_2019} a one-to-one connection was derived between the notion of classicality based on the Kolmogorov conditions and the coherence properties of the dynamics of Markovian (i.e., memoryless) quantum processes: such a process is classical iff the corresponding dynamical propagators can never create coherence that can be detected at any later time. Thus, a direct relation between the mathematical notion of coherence and an operationally well-defined and broadly applicable notion of classicality has been established. In turn, this relation provides a direct interpretation of Markovian processes that violate Leggett-Garg inequalities in terms of the underlying quantum resources. However, this connection only holds in the memoryless case and does not straightforwardly apply to the non-Markovian scenario, where, amongst other issues, such propagators cannot be used to compute multi-time statistics~\cite{milz_introduction_2017}. 

Here, we go beyond this paradigm of memoryless processes and consider the general case of non-Markovian dynamics. Such general processes can be described in terms of higher-order quantum maps, so-called quantum combs~\cite{chiribella_transforming_2008, chiribella_transforming_2008, chiribella_theoretical_2009}. Recently, this framework has been tailored to the description of open quantum system dynamics~\cite{pollock_non-markovian_2018,pollock_operational_2018}, and has---amongst others---found direct application in the characterization of multi-time memory effects~\cite{taranto_2019L,taranto_2019A,taranto_2019S,TarantoThesis} and within the field of stochastic thermodynamics~\cite{strasberg_operational_2018,strasberg_stochastic_2019, strasberg_repeated_2019}. Here, we employ it to extend the results of Ref.~\cite{smirne_coherence_2019} to the non-Markovian case. In particular, we link spatial quantum correlations or, more precisely, the discord between an observed system and an environment to the non-classicality of the observed measurement statistics. Somewhat surprisingly, for the case of general processes---where memory effects play a non-negligible role---the presence of non-classical phenomena is not solely dependent on the ability of the process to create or detect coherence, in stark contrast to the memoryless case. As we will show, the absence of detectable coherence is not necessarily sufficient to enforce classical behavior in general. Rather, classicality of multi-time statistics is inherently linked to quantum discord---which was originally introduced as a means to distinguish classical spatial correlations from non-classical ones~\cite{Zurek_2000,henderson_classical_2001, ollivier_quantum_2001,modi_classical-quantum_2012}---between the evolving system and its environment. We characterize the complete set of classical processes and derive a concrete relation between the presence and detectability of discord and the non-classicality of observed multi-time measurement statistics. This, in turn, allows for the derivation of experimentally accessible quantifiers of non-classicality and the categorization of the resources required for the implementation of a non-classical, non-Markovian process, paving the way to a clear-cut understanding of non-classicality on operational grounds. 

In a similar manner to the analysis of coherences, our results will predominantly be phrased with respect to measurements in an arbitrary, but \textit{predetermined} basis i.e., with respect to a fixed observable, raising the question if classicality is merely a question of perspective; in principle, for every process, there could exist a sequential measurement scheme, that yields classical statistics. While this \textit{always} holds true for processes in classical physics, as well as memoryless quantum processes, we show by means of an explicit example, that this is not necessarily the case for quantum processes with memory; in the presence of quantum memory, there exists a fundamentally new class of processes, which we will call \textit{genuinely} quantum processes, that lead to non-classical statistics \textit{independent} of how they are probed.

Throughout this article, we investigate the question of when a physical process---with or without memory---can be considered classical, and what classicality implies if we assume the underlying theory to be quantum mechanics. Concretely, for the most part, we consider the scenario of a quantum system of interest that is sequentially probed in a fixed basis, that is, interrogated at successive points in time---like, for example, in Leggett-Garg type experiments---and we are interested in characterizing when the multi-time measurement statistics resulting from such a scenario can be simulated by a classical stochastic process, and thus be reasonably considered \textit{classical}.

As we will make no assumption about the underlying dynamics, the system of interest can be coupled to an environment that is out of the experimenter's control and can thus undergo an open evolution that displays complex classical and quantum memory effects. The classicality of the observed statistics then depends on the interplay of the dynamics of the system of interest, the pertinent memory effects, and the way in which the system is probed. We derive both the structural as well as dynamical properties of general classical non-Markovian processes, providing an answer to the question: \emph{What is a non-classical process, and what are its key features?} 

Finally, by dropping the restriction to fixed instruments, we show that an \textit{observer-independent} notion of non-classicality exists, i.e., that there are processes that, no matter how they are probed, display statistics that cannot be simulated by classical stochastic processes. As such processes cannot exist in the absence of memory, the interplay of quantum memory effects and quantum dynamics leads to a fundamentally new class of processes---genuinely quantum processes---that cannot hide their non-classicality.

\section{Summary of the main results}
Before providing detailed derivations in the subsequent sections, here, we give a more concrete overview of the main results of our work. Throughout this article, we define the classicality of a process based on observed multi-time statistics $\Pprob_n(x_n,t_n;\hdots;x_1,t_1)$ for measurements at different times $\{t_1,\hdots,t_n\}$. The number of possible outcomes is always considered to be finite, and, unless stated otherwise, the measurements are given by measurements in the computational basis $\{\ketbra{x_k}{x_k}\}$. With respect to these statistics, a process is considered classical (on $K$ times), if the made measurements are non-invasive, i.e., they satisfy the Kolmogorov conditions 
\begin{align}
	&\Pprob_{n-1}(x_n,t_n;\ldots;\cancel{x_j,t_j};\ldots;x_1,t_1)
	\\ = &\nonumber  \sum_{x_j} \Pprob_n(x_n,t_n;\ldots;x_j,t_j;\ldots;x_1,t_1)	\quad \forall \ n\leq K, \,\forall\ j \, .
\end{align}
On the other hand, it is Markovian, i.e., memoryless, if the respective conditional probabilities satisfy 
\begin{gather}
\Pprob(x_{n}|x_{n-1},\ldots, x_1) = \Pprob(x_{n}|x_{n-1}) \quad \forall \ n\leq K \, .
\end{gather}
In quantum mechanics, such a process can be modeled by means of completely positive trace preserving maps $\{\Lambda_{t_j,t_{j-1}}\}$, which act on the probed system and describe the dynamics between measurements, as well as an initial system state $\rho_{t_0}$. 

Going beyond the results of Ref.~\cite{smirne_coherence_2019}, we show that (see Theorem~\ref{thm::NCGD_classical}) a Markovian process is classical iff it can be modeled by a state $\rho_{t_0}$ that is diagonal in the measurement basis $\{\ketbra{x_k}{x_k}\}$ and non-coherence-generating-and-detecting (\textbf{NCGD}) maps $\Lambda_{t_k,t_{k-1}}$, i.e., maps that satisfy 
\begin{align}
\notag
    \Delta\circ\Lambda_{t_{j+1},t_j}\circ&\Delta \circ \Lambda_{t_j,t_{j-1}}\circ \Delta \\
&= \Delta \circ\Lambda_{t_{j+1},t_j}\circ \Lambda_{t_j,t_{j-1}}\circ \Delta \quad \forall j\, ,
\end{align}
where $\Delta$ is the completely dephasing map in the measurement basis, and $\circ$ denotes composition. Intuitively, maps that satisfy the above equation can create coherences, but not in a way that can be detected at a later time by means of the employed measurement basis. Thus, Theorem~\ref{thm::NCGD_classical} provides a direct connection between coherence and an experimentally testable notion of classicality in the Markovian case. 

Going beyond the Markovian case we show that this direct connection between coherence and classicality breaks down when memory is present. We provide an explicit example (Example~\ref{ex::class_state}) of a dynamics $U_{t_j,t_i} \ket{\ell, p} = e^{i \phi_\ell p (t_j-t_i)} \ket{\ell, p}$ acting on a qubit system (represented by $\ell$) coupled to a continuous degree of freedom (represented by $p$) that---for the right choice of initial environment state---never displays coherences in the system state, but exhibits non-classical statistics nonetheless. 

When memory plays a non-negligible role, individual CPTP maps that act on the system alone are insufficient for the computation of multi-time probabilities. Rather, probabilities are computed by means of higher order quantum maps, called quantum combs~\cite{chiribella_quantum_2008, chiribella_theoretical_2009}. These maps contain all information about the underlying process at hand, and multi-time joint probabilities can then be expressed as 
\begin{gather}
    \Pprob_{K}(x_K,t_K;\ldots;x_1,t_1) = \Ccal_{K}[\Pcal_{x_K},\hdots, \Pcal_{x_1}]\, ,
\end{gather}
where $\Ccal_{K}$ is the quantum comb of the process and $\{\Pcal_{x_j}\}$ are the CP maps corresponding to measurements with outcome $x_j$, i.e., $\Pcal_{x_j}[\rho] = \braket{x_j|\rho|x_j}\ketbra{x_j}{x_j}$. 

We derive a full characterization of combs that lead to classical statistics in Theorem~\ref{thm::KclassComb}, and make this characterization more concrete in Theorem~\ref{thm::KclassComb_prime}, employing the Choi-Jamio{\l}kowski isomorphism (\textbf{CJI}) that allows one to map higher order quantum maps $\Ccal_n$ onto multipartite quantum states $C_n$. 

Using this full characterization, a measure  $M(C)$ for the non-classicality of a process $C$ can be derived. We phrase this problem in terms of the operational task of deciding whether or not a given comb $C$ is classical, and show that the corresponding maximum probability to guess correctly is given by (see Eq.~\eqref{eqn::success_distinction})
\begin{gather}
    \Pprob(C) = \tfrac{1}{2}(1+M(C))\, ,
\end{gather}
where $M(C)$ can both be computed efficiently via a linear program (see Eq.~\eqref{eqn::linearProgMeas}) and is accessible experimentally---and could be evaluated based on already existing experimental data (e.g., in Ref.~\cite{smirne_experimental_2019}). We show that, e.g., in the two-time case
\begin{gather}
    M(C) \leq \sum_{x_2} \left|\Pprob(x_2) - \sum_{x_1} \Pprob(x_2,x_1) \right|\, ,
\end{gather}
holds, where the right hand side of the above equation is a natural quantifier of classicality, that is used both theoretically, as well as experimentally (for example in Leggett-Garg type scenarios) to quantify the non-classicality of sequential measurement statistics. 

In the same vein as in the Markovian case, the dynamical properties (in contrast to the aforementioned structural ones) of classical processes can be obtained. In the non-Markovian case, a process is fully defined by an initial system-environment state $\eta_{t_0}^{se}$ and intermediate system-environment CPTP maps $\Gamma_{t_j,t_{j-1}}$. We show that in the non-Markovian case, rather than the coherences of the system it is the (basis dependent) system-environment discord ~\cite{Zurek_2000, ollivier_quantum_2001,henderson_classical_2001, modi_classical-quantum_2012} that determines the classicality of the observed statistics. In particular, we demonstrate (see Thms.~\ref{thm::NDCG} and~\ref{thm::NDGD_dilation}) that a process is classical iff it can be modeled by an initial state $\eta_{t_0}^{se}$ with vanishing (basis dependent) discord, i.e., $\eta_{t_0}^{se} = \sum_m p_m\, \ketbra{x_m}{x_m} \otimes \xi_m$, and a set of system-environment maps that is non-discord-generating-and-detecting (\textbf{NDGD}), i.e., 
\begin{align}
\notag
    \Delta \circ \Gamma_{t_{j+1},t_j} \circ \Delta \circ &\Gamma_{t_j,t_{j-1}} \circ \Delta\\
    &\phantom{asdf}= \Delta \circ \Gamma_{t_{j+1},t_j} \circ \Gamma_{t_j,t_{j-1}} \circ \Delta\, ,
\end{align}
where the completely dephasing map $\Delta$ acts on the system alone. Analogously to the Markovian case, the above equation implies that the maps $\{\Gamma_{t_j,t_{j-1}}\}$ can create discord, but said discord cannot be detected by means of later measurements on the system in the chosen measurement basis. In turn, this result provides a direct connection between quantum discord and the classicality of a quantum process. Additionally, it also gives an \textit{a posteriori} explanation why the absence of coherence in Example~\ref{ex::class_state} did not lead to classical statistics (for an explicit discussion of the discord that leads to of non-classical statistics in Example~\ref{ex::class_state}, see its continuation Example~~\ref{ex::classical_prime}).

While, in principle, these aforementioned results do not rely on the fact that we assume measurements in \textit{one} fixed basis, but could similarly be obtained for different (but fixed) instruments at every time, they still depend on the fact that one specific measurement scheme is chosen beforehand. Classicality (or the absence thereof) of the observed statistics could thus depend on the respective choice of measurement schemes. This holds true in the Markovian case, where there is always a choice of measurement bases that renders the observed statistics classical. However, as we show by explicit example (see Sec.~\ref{sec:genuinelyquantumprocess}), there are processes with memory---dubbed genuinely quantum---that display non-classical statistics \textit{independent} of the employed measurement scheme.

The Paper is structured as follows: In Sec.~\ref{sec::gen_frame} we introduce the basic concepts that will be employed throughout this article to examine classicality. In Sec.~\ref{sec::Coh_Cla}, we reiterate and slightly generalize the results of Ref.~\cite{smirne_coherence_2019} linking non-classicality and coherence for the Markovian case, and discuss their breakdown when memory effects are present. This motivates our consideration of the non-Markovian case in Sec.~\ref{sec::combs}, where we fully characterize the set of general classical processes by means of the quantum comb framework. This characterization then enables us to formulate a quantifier of non-classicality, that is both experimentally accessible and can be computed efficiently. Based on these results, in Sec.~\ref{sec::DiscClass}, we subsequently establish the direct connection between (basis dependent) quantum discord and the classicality of temporal processes. Finally, in Sec.~\ref{sec:genuinelyquantumprocess}, we go beyond the paradigm of measurements in a fixed basis, and provide an example for processes that appear quantum independent of the scheme that is used to probe them. The paper concludes in Sec.~\ref{sec::ConclOut} with a summary and an outlook on  further research directions and open problems.

\section{General framework}
\label{sec::gen_frame}
The overarching aim of this paper is to characterize when a general quantum mechanical process can be considered classical in an operationally consistent manner and identify the structural properties consequently implied on the underlying evolution. Importantly, our investigation will be operational in the sense that it is based solely on experimentally accessible quantities; as such, it applies to situations where the underlying theory is classical mechanics, quantum mechanics, or some more general theory~\cite{chiribella_probabilistic_2010}.

Ultimately, any physical theory provides predictions about possible observations---only these can be tested by experiments. That is, any theory must (in principle) provide the correct probabilities for measurement outcomes (or sequences thereof) to occur when a system of interest is experimentally probed. The difference between predictions made regarding such observable quantities by classical physics and quantum (or post-quantum) theory can then be used to unambiguously demarcate between the theories on the investigated spatial and temporal scales. 

Following Ref.~\cite{smirne_coherence_2019}, we will thus define our notion of classicality by means of joint probability distributions pertaining to sequences of measurement outcomes, as these are precisely what is obtained when a temporal process is probed.

\subsection{Kolmogorov conditions and classicality}
In classical physics, a stochastic process on a set of $K$ times is fully described by a joint probability distribution 
\begin{gather}
	\label{eqn::stochClass}
	\Pprob_K(x_{K},t_{K};\dots;x_{1},t_1)\, 
\end{gather}
which yields the probability to measure the realizations $\{x_{K},\dots,x_1\}$ of the random variables $\{X_{K},\dots,X_1\}$ at times $\{t_{K},\dots, t_1\}$. For example, $\Pprob_2(x_2,t_2;x_1,t_1)$ could describe the probability to obtain both outcomes $\{x_2,x_1\}$ when measuring the position of a particle undergoing Brownian motion at times $t_1$ and $t_2>t_1$. In what follows, we will often omit the explicit time label, with the understanding that $x_j$ denotes an outcome of a measurement at time $t_j$. 

Crucially, in classical physics, joint probability distributions describing a stochastic process for different sets of times satisfy the so-called \textit{Kolmogorov consistency conditions}~\cite{kolmogorov_foundations_1956, feller_introduction_1968, breuer_theory_2007,tao_introduction_2011}: given a joint probability distribution $\Pprob_K$ for a set of times, the probability distributions for all subsets of times can be obtained by marginalization, that is 
\begin{align}\label{eqn::Kolmo_cond}
	&\Pprob_{n-1}(x_n,t_n;\ldots;\cancel{x_j,t_j};\ldots;x_1,t_1)
	\\ = &\nonumber  \sum_{x_j} \Pprob_n(x_n,t_n;\ldots;x_j,t_j;\ldots;x_1,t_1)	\quad \forall \ n\leq K, \,\forall\ j \, .
\end{align}
Just like the Leggett-Garg inequalities~\cite{leggett1985quantum, leggett_realism_2008,emary_leggettgarg_2014} for temporal correlations, the satisfaction of these requirements is based on the assumptions of realism \textit{per se}, i.e., the assumption that $x_j$ has a definite value at any time $t_j$, and the possibility to implement non-invasive measurements~\cite{milz_kolmogorov_2017}.

Importantly, an experimenter obtaining a family of joint probability distributions that satisfies the Kolmogorov conditions when probing a temporal process at different sets of times would not be able to distinguish said process from a classical one, as every such finite family can be obtained from a---potentially exotic---underlying \textit{classical stochastic process}. More generally, the \textit{Kolmogorov extension theorem} states that if all joint probability distributions for finite subsets of a time interval $[0,t]$ satisfy the consistency conditions of Eq.~\eqref{eqn::Kolmo_cond} amongst each other, then there exists an underlying classical stochastic process on said time interval that leads to the observed probability distributions~\cite{kolmogorov_foundations_1956, feller_introduction_1968, breuer_theory_2007, tao_introduction_2011}. In other words, if the Komogorov consistency conditions of Eq.~\eqref{eqn::Kolmo_cond} are satisfied (for all considered choices of $t_j$), then there is nothing inherently quantum mechanical about the observed process. We therefore define:
\begin{definition}[$K$-classical process~\cite{smirne_coherence_2019}]\label{def::N-classical_statistics}
Let $\Xcal$ be a finite set. A process defined on a set of times $\Tcal$, with $|\Tcal| = K$, that is described by the joint probabilities $\Pprob_n\left(x_n,t_n;\ldots;x_1,t_1\right)$, with $t_n\geq\dots\geq t_1$, $t_i\in \mathcal{T}$, $n \leq K$ and 
 $x_i\in \Xcal$, is said to be $K$-classical if the Kolmogorov consistency conditions of Eq.~\eqref{eqn::Kolmo_cond} are satisfied up to $n=K$. 
\end{definition}

Throughout this article, we will call a family of joint probabilities on a set of $K$ times a $K$-\textit{process}
and denote it by $\left\{\Pprob_n(x_n,\ldots, x_1)\right\}_{n\leq K}$. Here, the label $n\leq K$ is a short-hand notation for all the subsets of $\Tcal$ with $n$ ordered times $t_n \geq \ldots \geq t_1$, where $t_i \in \mathcal{T}$, for any $n\leq K$; moreover from here on we will not indicate explicitly the time arguments in the probability distributions, implying that the outcome
$x_j$ refers to time $t_j$.

While the above definition of classicality seems intuitive, some comments are in order. First, we choose to define classicality for a finite set of $K$ times.
 While this is motivated on a practical ground, the general definition of a classical stochastic process involves the joint probability distributions
associated with any number of ordered time instants $t_K \geq \ldots \geq t_1$, with $K\in\mathbbm{N}$, and any
choice of such instants. In particular, as said, the Kolmogorov extension theorem
infers the existence of a stochastic process from the validity of the consistency conditions on all such joint distributions. Here, instead, we fix a finite value of $K$ and the sequence of time instants beforehand, so that, 
given the $K$-time joint probability distribution of a $K$-classical process, the involved hierarchy of probability distributions can be constructed by iteratively applying the consistency conditions, at any intermediate time.

Second, the above definition of classicality is \textit{a priori} device independent, as it only relies on the inferred statistics without any assumptions on the underlying theory and/or measurement devices; as a consequence,  the classicality of a process according to the above definition depends upon the manner in which the system of interest is probed. Although often overlooked, this is also the case in classical physics: given some underlying classical stochastic process, not \textit{every} set of measurements that an experimenter might be able to perform will lead to a family of probability distributions that satisfies the above definition of $K$-classicality. In fact, if performing such measurements might potentially disturb the system (i.e., the measurement is invasive), the Kolmogorov condition fails in general, even if the underlying evolution is classical~\cite{milz_kolmogorov_2017}. 

For example, suppose that instead of merely measuring the position of a particle at different times when probing a Brownian motion process, an experimenter chooses to displace the particle at each time depending on where it was found. In this case, Eq.~\eqref{eqn::Kolmo_cond} would generally fail to hold for the joint probability distributions observed. Consequently, the Kolmogorov consistency conditions in Eq.~\eqref{eqn::Kolmo_cond} are in fact a statement of the \textit{non-invasiveness} of the performed measurements: if they hold true, then \emph{not} performing a measurement at any given time cannot be distinguished (for the given experimental situation) from averaging over their probabilities (i.e., forgetting the outcomes of the measurements performed).

In classical physics one assumes that, in principle, one could measure the system without disturbing it, and that therefore there exists a family of joint probability distributions that can consistently explain all possible outcome probabilities. Such a non-invasive and complete measurement is often referred to as an `ideal measurement' in the literature~\cite{piron_ideal_1981}.

On the other hand, in quantum mechanics any measurement disturbs \emph{some} system state and therefore ideal measurements do not exist in general in the strong sense discussed above. As a consequence, quantum mechanical processes generically do not satisfy Kolmogorov conditions~\cite{BreuerEA2016,milz_kolmogorov_2017}, a fact that fundamentally distinguishes them from the classical realm. 

More generally,
the violation of Bell, Kochen-Specker, or Leggett-Garg inequalities, which can be observed in quantum mechanics, 
are different manifestations of the impossibility to obtain the measured data by non-invasive measurements. Particularly, in the case of Leggett-Garg inequalities~\cite{fine1982hidden,leggett1985quantum}, it is precisely the breakdown of Kolmogorov conditions that is being probed~\cite{milz_kolmogorov_2017,smirne_coherence_2019}, and our above definition of classicality is hence in line with the wider program of determining fundamentally quantum traits of nature.  

\begin{figure}
    \centering
    \includegraphics[width=0.9\linewidth]{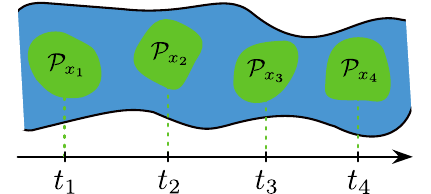}
    \caption{{\bf{Probing a process with projective measurements.}} At each time $t_j$, the process (depicted in blue) is probed by a projective measurement (depicted in green) with outcomes $x_j$,
    where each $x_j$ belongs to the same finite set $\mathcal{X}$. If the resulting family of probability distributions $\Pprob_n$ (depicted are the cases $n \leq 4$) satisfies the Kolmogorov consistency conditions, then not performing a measurement at a time $t_j$ cannot be distinguished from performing a measurement and averaging over the outcomes. In this case, this experiment cannot be distinguished from a classical one, even though the underlying evolution might be quantum mechanical.}
    \label{fig::Process}
\end{figure}

\subsection{Measurement setup}
As mentioned above, the structural properties of families of joint probability distributions depend on the way in which a system of interest is probed. Consequently, before being able to analyze the set of  quantum processes, it is crucial to fix the measurements that are used to probe a process at hand. Although there are no ideal measurements in quantum mechanics, projective measurements share some basic features with the classical ideal measurements discussed above, and are thus a natural choice. In particular, they guarantee repeatability, i.e., that two sequential measurements (without any evolution in between) would give the same value with unit probability, as well as a weaker form of ideality, namely that if an outcome occurs with certainty, then the state of the system before the measurement is not disturbed by the latter~\cite{Heinosaari2012}. It therefore suggests itself to start our analysis on the classical reproducibility of quantum processes by focusing on projective measurements; moreover, also following Ref.~\cite{smirne_coherence_2019}, we will further restrict to the case of orthogonal rank-1 (sharp) projectors, like, e.g., projective measurements with respect to the eigenbasis of any non-degenerate self-adjoint
operator.

In many experimental situations of interest, there is a preferred basis to select. For instance, if the dynamics is such that the system dephases to a given basis, the latter provides a natural choice. This occurs, e.g., in the case of open quantum systems dynamics that are subject to environmental fluctuations. In other cases it may make sense to choose the basis more arbitrarily (in advance), for instance when analyzing a specific protocol, or attempting to optimize it (see Ref.~\cite{Egloff2018} for more details). Finally, the experimental setup might only allow for a measurement of one particular observable, in which case the chosen basis would correspond to the eigenbasis of said observable. 

In what follows, we will analyze the classicality of a process based on the joint probability distributions obtained from sequential sharp measurements in a fixed basis $\{\ket{x}\}_{x=1}^d$---henceforth also called the classical, standard, or computational basis---with the action of a measurement with outcome $x$ on a state $\rho$ given by 
\begin{gather}
    \rho\mapsto \Pcal_x[\rho]\coloneqq \ketbra{x}{x}\rho\ketbra{x}{x}\,.
\end{gather}
See Fig.~\ref{fig::Process} for a graphical depiction. 

This freedom in the considered measurements makes the property of classicality fundamentally contingent on the respective choice of measurement basis. However, this basis dependence is unsurprising and mirrored by coherence theory~\cite{streltsov2017colloquium}. There, the existence of off-diagonal elements $\braket{m|\rho|n}$, i.e., coherences, depends on the choice of the basis a quantum state is represented in.  As they are considered to be a fundamentally quantum property, it is a natural question to ask how coherences (with respect to the computational basis) and classicality of a process (with respect to the same basis) are interrelated. Importantly, while the existence of coherences cannot be determined by projective measurements in the computational basis alone, the prevalence of non-classical effects can be. Thus, as we shall see below, providing an operationally accessible notion of classicality allows one to link coherence (and, more generally, quantum correlations) in a quantitative manner to experimentally observable deviations from classical physics.

\subsection{Open (quantum) system dynamics and memory effects}
The definition of classicality we use (introduced in Ref.~\cite{smirne_coherence_2019}) answers the question of whether or not there exists a classical stochastic process that can explain the multi-time probabilities obtained by measuring a quantum system at given times in the computational basis. To make our analysis as general as possible, we will consider the possibility that the measured system interacts with a surrounding environment, which can influence the resulting statistics. Explicitly, assuming that the system and environment in state $\eta$ are together closed and described by quantum mechanics, their joint dynamics between measurements is given by unitary evolution : $\Ucal_{t_{j+1},t_j}[\eta] = U_{t_{j+1},t_j} \eta\, U_{t_{j+1},t_j}^\dagger$. The resulting joint probability distributions read
\begin{align}
\label{eqn::Gen_Prob}
    \notag \Pprob_n(x_n,\dots,x_1) &= \tr\left\{(\Pcal^s_{x_n}\otimes \Ical^e)\circ \Ucal_{t_n,t_{n-1}} \circ \cdots \right. \\
     &\phantom{=}\left.\cdots\circ (\Pcal^s_{x_1}\otimes \Ical^e) \circ \Ucal_{t_1,t_0}[\eta^{se}_{t_0}]\right\}\, ,
\end{align}
where $\eta_{t_1}^{se}$ is the system-environment state at time $t_1$, $\Ical^e$ signifies the identity channel on the environment, $\Pcal^s_{x_j}$ corresponds to a measurement on the system in the computational basis at time $t_j$ with outcome $x_j$ and $\circ$ denotes composition (see Fig.~\ref{fig::Gen_Dilation} for a graphical representation). Whenever there is no risk of confusion, we will drop the additional superscripts $s$ and $e$ throughout this paper. Naturally, the classicality of the family of joint probability distributions obtained via Eq.~\eqref{eqn::Gen_Prob} crucially depends on the properties of the intermediate evolutions $\Ucal_{t_{j+1},t_j}$ and the initial state $\eta^{se}_{t_0}$. 

In general, such a multi-time statistics displays memory effects, i.e., it is non-Markovian: at any point in time $t_j$, the future statistics does not only depend on the measurement outcome $x_j$ at time $t_j$, but also on (potentially) all previous outcomes $x_{j-1}, \hdots, x_1$. Indeed, \textit{all} information about future statistics at $t_j$ is contained in the joint state of system \textit{and} environment, which depends upon the previous measurement outcomes. As this total state cannot be accessed by measurements on the system alone, this dependence on past measurements manifests itself as memory effects on the system level (see Sec.~\ref{sec::combs} for a detailed discussion). 

However, under some specific circumstances, the influence of such memory effects on the multi-time statistics
can be neglected; this is essentially the case when the \emph{quantum regression formula} \textbf{(QRF)} can be applied
\cite{Lax1968,Carmichael1993,breuer_theory_2007,Gardiner2004}. Under this assumption, the observed statistics can be understood in terms of dynamical propagators that act on the system alone, which, in turn, enables one to directly link the classicality of a process to the properties of said propagators in terms of coherence production and detection. The corresponding result has been obtained in Ref.~\cite{smirne_coherence_2019}, and we will reiterate and expand upon it in the coming section. 
Subsequently, employing quantum combs--- a powerful framework for the description of general, possibly non-Markovian open quantum processes---we characterize the set of quantum processes that can be described classically.

\section{Coherence and classicality}
\label{sec::Coh_Cla}

In this section, we reiterate the main result of Ref.~\cite{smirne_coherence_2019} on the connection between coherence and classicality for the memoryless case, generalizing it to the case of a divisible (but not necessarily semigroup~\cite{gorini1976,lindblad1976,breuer_theory_2007}) dynamics. As mentioned above, such a direct connection may be established, because memoryless processes can be understood in terms of propagators that are defined on the system alone, while this property fails to hold in the general, non-Markovian, case. 

After introducing an operational notion of Markovianity associated with the multi-time statistics due to sequential measurements of a (non-degenerate) observable, we present a one-to-one connection between the non-classicality of such statistics and the capability of the open system dynamics to generate and detect coherences with respect to the relevant basis. We also clarify the relation between the notion of Markovianity used in this paper and the QRF, which allows us to straightforwardly recover
the main result of Ref.~\cite{smirne_coherence_2019}. Finally, we lay out the subtleties that arise when generalizing the framework to allow for memory effects, motivating the main results of this work.

\subsection{One-to-one connection in the Markovian case}\label{sec:otoma}

\begin{figure}
    \centering
    \includegraphics[width=0.95\linewidth]{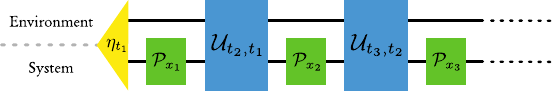}
    \caption{{\bf{General open quantum process.}} The state of the system at time $t_1$ is correlated with the environment (depicted by the yellow triangle representing the joint state). Measurements on the system (green boxes) are performed at times $t_1, t_2, \dots$. In between, the system and the environment undergo a unitary evolution (blue boxes). The distinction between system and environment is given by the degrees of freedom that the experimenter controls (system) and those that remain inaccessible to experimental control (environment).}
    \label{fig::Gen_Dilation}
\end{figure}

\begin{figure}
    \centering
    \includegraphics[width=0.95\linewidth]{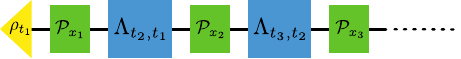}
    \caption{{\bf{Markovian process.}} For a Markovian process, the system dynamics in between intermediate times (depicted as the blue boxes) can be modeled by maps $\Lambda_{t_{j+1},t_j}$ that do not depend on previous outcomes  (i.e., there is no memory). The measurement statistics are obtained by measuring in the classical basis at times $t_1, t_2, t_3, \dots$ (depicted in green); before the first measurement the system is in the state
    $\rho_{t_1}$ (depicted in yellow)} 
    \label{fig::Markov_Dil}
\end{figure}

Classically, a process is Markovian (i.e., memoryless), if, for any chosen time $t_j$,
the future statistics only depend upon the outcome at time $t_j$, but not on any prior outcomes at $t_{j-1}, t_{j-2}, \cdots$; explicitly, a classical stochastic process is Markovian if its statistics satisfy
\begin{gather}
\label{eqn::Markov_prob}
\Pprob(x_j|x_{j-1},\dots,x_1) = \Pprob(x_j|x_{j-1})\, \quad \forall \ j,
\end{gather}
where $\Pprob(x_j|x_{j-1},\dots,x_1)$ is the conditional probability to obtain outcome $x_j$ at time $t_j$ given that outcomes $x_{j-1}, x_{j-2},\dots$ were measured at earlier times $t_{j-1}, t_{j-2}, \dots$~\cite{breuer_theory_2007}. Extending this definition to general (i.e., not necessarily classical) statistics and taking into account that, in practice, one only deals with systems probed at a finite number of times, we obtain the following definition of $K$-Markovianity:
\begin{definition}\label{def::markov}
	Let $\Xcal$ be a finite set. A process defined on a set of times $\Tcal$, with $|\Tcal| = K$ is called $K$-Markovian if it satisfies:
	\begin{gather}
	\label{eq::markov}
	\Pprob(x_{n}|x_{n-1},\ldots, x_1) = \Pprob(x_{n}|x_{n-1}) \quad \forall \ n\leq K \, ,
	\end{gather}
	for all ordered tuples of times $t_n \geq \ldots \geq t_1$, with $t_i\in \mathcal{T}$,
	and $x_i \in \mathcal{X}$.
\end{definition}
Just like our earlier definition of classicality and coherence, the absence of memory effects as defined in Definition~\ref{def::markov} is basis dependent: a process that appears Markovian in one basis may appear non-Markovian when probed in a different one. While there exist basis independent notions of Markovianity in the quantum case~\cite{Lindblad1979,accardi_quantum_1982, pollock_operational_2018,pollock_non-markovian_2018,li_concepts_2018}, the basis dependent one introduced here is best suited for the experimental situation we envision; as such, in what follows, we predominantly understand Markovianity with respect to measurements in the computational basis. We will briefly return to the relation between this basis dependence and the basis independent notion of Markovianity in Sec.~\ref{sec::combs}.

To establish a connection between non-classicality of a Markovian process and the coherence properties of the underlying dynamics, we need to introduce the maps that characterize the dynamical evolution of the open system. To this end, assume that at an initial time $t_0$ (with $t_0\leq t_1$)
the system and the environment are in a product state
 $\eta^{se}_{t_0}=\rho_{t_0}\otimes \sigma_{t_0}$ (for some fixed environment state $\sigma_{t_0}$), so that we
can define the
completely positive and trace preserving \textbf{(CPTP)} dynamical maps $\{\Lambda_{t_j,t_0}\}$ of the open system evolution between the initial time and the measurement times $t_j$~\cite{breuer_theory_2007,rivas_book_2012}
\begin{equation}\label{eq:map}
   \rho_{t_j}=  \Lambda_{t_j,t_0} [\rho_{t_0}] = \mbox{tr}_e\left[U_{t_j,t_0} \left(\rho_{t_0}\otimes \sigma_{t_0}\,\right) U_{t_j,t_0}^\dagger\right]\, ,
\end{equation}
where $\tr_e$ denotes the trace over the environmental degrees of freedom. 
Additionally, let us also assume that the dynamics is divisible~\cite{wolf_dividing_2008}, i.e, we can define the corresponding \textit{propagators} $\{\Lambda_{t_k,t_j}\}$ between \textit{any} two times via the composition rule
\begin{equation}\label{eq:prop}
    \Lambda_{t_k,t_0} = \Lambda_{t_k,t_j} \circ \Lambda_{t_j,t_0} \quad \forall \ t_k\geq t_j \geq t_0\, ,
\end{equation}
and they satisfy the composition law $\Lambda_{t_\ell,t_j} = \Lambda_{t_\ell,t_k} \circ \Lambda_{t_k,t_j}$ for all times $t_\ell \geq t_k \geq t_j$.
Under these assumptions, it is natural to ask, what conditions the propagators $\{\Lambda_{t_k,t_j}\}$ must satisfy in order for the resulting statistics to be classical. However, Eq.~\eqref{eq:prop} does not yet tell us how to obtain multi-time statistics~\cite{milz_cp_2019}. 

The relation we seek is provided by the QRF, which, for example, holds in the weak coupling and the singular coupling limits~\cite{Dumcke1983}, and constitutes a relation between the definition of Markovian processes given by Definition~\ref{def::markov} and the corresponding open system dynamics 
(see also Ref.~\cite{li_concepts_2018} for an extensive discussion of the QRF and its generalizations).
For the case of rank-1 projective measurements (in the computational basis), the QRF states that the multi-time probability distributions in Eq.~\eqref{eqn::Gen_Prob} can be equivalently expressed by
\begin{align}
\label{eqn::Markov1}
    &\Pprob_n(x_n,\dots,x_1) \\ \notag
    & = \tr\left[\Pcal_{x_n}\circ \Lambda_{t_n,t_{n-1}} \circ \cdots \circ \Lambda_{t_2,t_1} \circ \Pcal_{x_1} \circ \Lambda_{t_1,t_0}[\rho_{t_0}]\right]\,.
\end{align}
Importantly, this means that the full multi-time statistics can be obtained by means of maps that are independent of the respective previous measurement outcomes and which act on the system alone (see Fig.~\ref{fig::Markov_Dil} for a graphical representation). 

It is straightforward to see that satisfaction of the QRF (see Eq.~\eqref{eqn::Markov1}) implies Markovian statistics in the sense of Eq.~\eqref{eq::markov} and in particular we have the identities
\begin{align}
\braket{x_{k}|\Lambda_{t_{k},t_j}[\ketbra{x_j}{x_j}]|x_{k}} 
&= \Pprob(x_{k}|x_j)\quad \forall \ j\geq 1\, ,\label{eqn::extra1}\\
\text{and} \ \braket{x_1|\Lambda_{t_1,t_0}[\rho_{t_0}]|x_1} &= \Pprob(x_1)
\end{align}
In other words, the action of the propagators on the \textit{populations} (i.e., the diagonal terms of $\rho_{t_j}$, the state of the system at $t_j$) can be identified with the conditional probabilities between any two times. Crucially, this is not generally the case, and breaks down in situations where the QRF cannot be applied~\cite{Vacchini2011}. 

More generally, even if the QRF applies, the composition rule on the level of propagators does \textit{not} imply a composition rule on the level of the resulting measurement statistics, i.e., for a divisible process that satisfies the QRF, we generally have 
\begin{gather}
    \sum_{x_k} \Pprob(x_\ell|x_k) \Pprob(x_k|x_j)\neq \Pprob(x_\ell|x_j)\, ,
\end{gather}
which captures the deviation of quantum Markovian processes from classical ones. As mentioned previously, in order for the resulting process to be classical, not performing a measurement must be indistinguishable from performing a measurement and averaging over all possible outcomes. Put differently, for an observer that can only perform measurements in a fixed basis, the process is classical if they cannot detect the invasiveness of measurements in said basis. 

A measurement at time $t_j$ in the fixed basis where the measurement outcomes are averaged over
can be represented by the \textit{completely dephasing map}
\begin{gather}
    \Delta[\rho] = \sum_{x_j} \Pcal_{x_j} [\rho] = \sum_{x_j} \braket{x_j|\rho|x_j} \ketbra{x_j}{x_j}\,.
\end{gather}
The natural property of the propagators to look at in relation to classicality is thus that for all $t_j$:
\begin{align}
\label{eqn::NCGD}
&\Delta_{j+1}\circ\Lambda_{t_{j+1},t_j}\circ\Delta_{j} \circ \Lambda_{t_j,t_{j-1}}\circ \Delta_{j-1}
\\ \notag &= \Delta_{j+1}\circ\Lambda_{t_{j+1},t_j}\circ\Ical_{j} \circ \Lambda_{t_j,t_{j-1}}\circ \Delta_{j-1} \\ \notag
&= \Delta_{j+1}\circ\Lambda_{t_{j+1},t_{j-1}}\circ \Delta_{j-1}\, ,
\end{align} 
where $\Ical_j$ and $\Lambda_j$ are the identity map and the completely dephasing map at time $t_j$, respectively (see Fig.~\ref{fig::NCGD} for a graphical representation). In the last line of Eq.~\eqref{eqn::NCGD}
we used the composition law
$\Lambda_{t_{j+1},t_{j-1}} = \Lambda_{t_{j+1},t_j} \circ \Lambda_{t_j,t_{j-1}}$. Eq.~\eqref{eqn::NCGD} is, e.g., satisfied if none of the maps $\{\Lambda_{t_{j+1}, t_j}\}$ create coherences. More generally, each of the maps in Eq.~\eqref{eqn::NCGD} can in principle create coherences, as long as these coherences cannot be detected at the next time by means of measurements in the classical basis. Therefore, such a collection of maps satisfying Eq.~\eqref{eqn::NCGD} has been named \textit{non-coherence-generating-and-detecting} \textbf{(NCGD)}~\cite{smirne_coherence_2019}.
\begin{figure}
    \centering
    \includegraphics[width=0.85\linewidth]{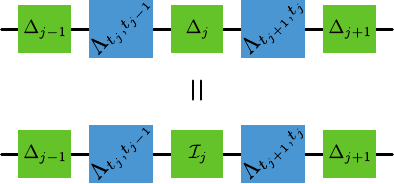}
    \caption{{\bf{NCGD dynamics.}} If the process is NCGD, then for a classical observer, `doing nothing' (i.e., performing the map $\Ical$) cannot be distinguished from a measurement in the classical basis and averaging over the outcomes (i.e., performing the map $\Delta$) at any point in time.}
    \label{fig::NCGD}
\end{figure}
The precise connection between NCGD and classicality is expressed by the following theorem: 
\begin{theorem}\label{thm::NCGD_classical}
	Let $\left\{\mathbb{P}_n(x_n,\ldots,x_1)\right\}_{n\leq K}$ be a $K$-Markovian process (Definition~\ref{def::markov}). 
	Then, the process is also $K$-classical (Definition~\ref{def::N-classical_statistics}) 
	if and only if there exist
	a system state $\rho_{t_0}$ (at a time $t_0\leq t_1$) which is diagonal 
	in the computational basis $\left\{\ket{x}\right\}_{x\in \mathcal{X}}$
	and a set of propagators $\left\{\Lambda_{t_j,t_{j-1}}\right\}_{j=1,\ldots,K}$ which are NCGD with respect to $\left\{\ket{x}\right\}_{x\in \mathcal{X}}$, 
	such that $\rho_{t_0}$ and $\left\{\Lambda_{t_j,t_{j-1}}\right\}_{j=1,\ldots,K}$ yield
	$\left\{\mathbb{P}_n(x_n,\ldots,x_1)\right\}_{n\leq K}$ via Eq.~\eqref{eqn::Markov1}.
\end{theorem}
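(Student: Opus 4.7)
Proof plan. My strategy is to reduce both directions to the behaviour of the diagonal ``populations''. Throughout I would use the basic identities $\Pcal_x\circ\Delta=\Pcal_x=\Delta\circ\Pcal_x$ (the rank-1 projector depends only on one diagonal entry of its input, and its output is already diagonal), which let one freely insert or remove dephasing maps adjacent to any measurement projector. For any ordered pair $t_k\ge t_j$ it is convenient to introduce the stochastic matrix
\[
T_{k,j}(x_k|x_j)\;:=\;\bra{x_k}\Lambda_{t_k,t_j}[\ketbra{x_j}{x_j}]\ket{x_k},
\]
which will play the role of a classical transition matrix.

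Sufficiency ($\Leftarrow$). Assume $\rho_{t_0}$ is diagonal and the divisible propagators $\{\Lambda_{t_j,t_{j-1}}\}$ are NCGD. Unfolding Eq.~\eqref{eqn::Markov1} with $\Pcal_{x_k}[\rho]=\bra{x_k}\rho\ket{x_k}\ketbra{x_k}{x_k}$ telescopes the joint probability into the classical Markov form
\[
\Pprob_K(x_K,\ldots,x_1)=p_1(x_1)\prod_{k=2}^{K}T_{k,k-1}(x_k|x_{k-1}),
\]
with $p_1(x_1)=\bra{x_1}\Lambda_{t_1,t_0}[\rho_{t_0}]\ket{x_1}$ (a probability distribution, since $\Lambda_{t_1,t_0}$ is trace-preserving). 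Applying both sides of Eq.~\eqref{eqn::NCGD} to $\ketbra{x_{j-1}}{x_{j-1}}$ and reading off the $(x_{j+1},x_{j+1})$ matrix element yields at once the classical composition rule $T_{j+1,j-1}=T_{j+1,j}\,T_{j,j-1}$. Summing the explicit product formula for $\Pprob_K$ over any interior index $x_j$ then collapses two adjacent factors into a single one and reproduces the same product formula on the complementary times, which is exactly Eq.~\eqref{eqn::Kolmo_cond} for $\Pprob_{K-1}$. The marginal over $x_K$ is handled by trace preservation alone; the marginal over $x_1$ uses the diagonality of $\rho_{t_0}$ to supply the missing left-most dephasing and then proceeds as in the interior case. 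Iterating gives consistency for every subset of times.

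Necessity ($\Rightarrow$). For a $K$-classical, $K$-Markovian process I would construct the representation explicitly. Set $t_0:=t_1$, $\Lambda_{t_1,t_0}:=\Ical$, and
\[
\rho_{t_0}\,:=\,\sum_{x_1}\Pprob(x_1)\ketbra{x_1}{x_1},\qquad
\Lambda_{t_k,t_j}[\rho]\,:=\sum_{x_k,x_j}\Pprob(x_k|x_j)\,\bra{x_j}\rho\ket{x_j}\,\ketbra{x_k}{x_k},
\]
the latter for each pair $t_k>t_j$. Each such $\Lambda_{t_k,t_j}$ is manifestly CPTP (a dephasing followed by the classical stochastic channel of conditional probabilities), and $\rho_{t_0}$ is diagonal. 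Three properties remain to be checked: (i) reproduction of the statistics, which by direct telescoping of Eq.~\eqref{eqn::Markov1} yields $\Pprob(x_1)\prod_{k\geq 2}\Pprob(x_k|x_{k-1})$ and matches the true $\Pprob_K$ by $K$-Markovianity; (ii) divisibility, which reduces to the Chapman--Kolmogorov equation $\Pprob(x_k|x_j)=\sum_{x_m}\Pprob(x_k|x_m)\Pprob(x_m|x_j)$, itself a consequence of Kolmogorov consistency ($\Pprob_2$ being a marginal of $\Pprob_3$) combined with the Markov product form of $\Pprob_3$; (iii) the NCGD identity, which is automatic because each constructed $\Lambda_{t_k,t_j}$ absorbs dephasings on both sides ($\Delta\circ\Lambda_{t_k,t_j}=\Lambda_{t_k,t_j}=\Lambda_{t_k,t_j}\circ\Delta$), so both sides of Eq.~\eqref{eqn::NCGD} collapse to bare compositions of propagators that agree by (ii).

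The main obstacle I foresee lies in the sufficiency direction: iterating the marginalization requires the matrix identity $T_{k,j}\,T_{j,i}=T_{k,i}$ not only for the originally adjacent measurement times but also for already-composed propagators at non-adjacent ones. This is settled by reading Eq.~\eqref{eqn::NCGD} as holding for all intermediate times $t_j$ in the divisible family $\{\Lambda_{t_k,t_j}\}$ (the natural reading in the divisible setting of Sec.~\ref{sec:otoma}); the composition rule on the $T$'s then extends inductively to all pairs, and the Kolmogorov consistency check goes through for every subset of measurement times. Apart from this technical point, the entire argument is direct algebraic manipulation of projectors, dephasings and the divisibility composition law.
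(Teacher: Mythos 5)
Your proof is correct and follows essentially the same route as the paper's: sufficiency by reducing to the classical transition matrices on populations and using NCGD to obtain the Chapman--Kolmogorov composition rule (together with the diagonal initial state for the first marginal), and necessity by constructing completely dephasing propagators from the conditional probabilities together with the diagonal initial state $\sum_{x_1}\Pprob(x_1)\ketbra{x_1}{x_1}$. Your explicit checks of divisibility and of the composition rule for non-adjacent times under iterated marginalization are slightly more careful than the published argument, but the substance is identical.
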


\begin{proof}
We first show that if a Markovian process can be reproduced by means of NCGD propagators $\{\Lambda_{t_{j+1},t_j}\}$
and an initial diagonal state
(both properties with respect to the computational basis), then it yields classical statistics. If the statistics is Markovian, then it follows from Eq.~\eqref{eq::markov} that the joint probability distribution on any set of times $t_n \geq \ldots \geq t_1$, with $t_i\in \mathcal{T}$, is given by 
\begin{gather}
\label{eqn::Markov_Comp}
    \Pprob_n(x_n,\dots,x_1) = \Pprob(x_n|x_{n-1})\cdots\Pprob(x_2|x_1)\Pprob(x_1)\,.
\end{gather}
As the process can, by assumption, be reproduced by the maps $\{\Lambda_{t_{j},t_{j-1}}\}$
via Eq.~\eqref{eqn::Markov1}, then for any time $t_j$ we have
\begin{align}
&\sum_{x_j} \Pprob(x_{j+1}|x_{j})\Pprob(x_{j}|x_{j-1}) \\
\notag
&=\sum_{x_j}\tr\{\Pcal_{x_{j+1}} \circ \Lambda_{t_{j+1},t_j}[\Pi_{x_j}]\} \tr\{\Pcal_{x_j}\circ \Lambda_{t_j,t_{j-1}}[\Pi_{x_{j-1}}]\} \\  
\notag
&= \tr\{\Pcal_{x_{j+1}} \circ \Lambda_{t_{j+1},t_j}\circ \Delta_j \circ  \Lambda_{t_{j},t_{j-1}}[\Pi_{x_{j-1}}]\} \\
\notag
&= \tr\{\Pcal_{x_{j+1}} \circ \Lambda_{t_{j+1},t_{j-1}}[\Pi_{x_{j-1}}]\}\, , 
\end{align}
where we have set $\Pi_{x_{j}} = \ketbra{x_{j}}{x_{j}}$ and the NCGD property was used in the last line. This equation implies 
\begin{gather}
\label{eqn::composition}
   \sum_{x_j} \Pprob(x_{j+1}|x_{j})\Pprob(x_{j}|x_{j-1}) = \Pprob(x_{j+1}|x_{j-1})\,.
\end{gather}
Moreover, the (initial) diagonal state $\rho_{t_0}$ guarantees that we have
\begin{gather}
\label{eqn::initial}
   \sum_{x_1} \Pprob(x_{2}, x_{1}) = \Pprob(x_{2})\,.
\end{gather}

As a consequence of these two previous relations, the family of joint probability distributions computed via Eq.~\eqref{eqn::Markov_Comp} satisfies Kolmogorov conditions, and is thus classical.

Conversely, if the process is classical and Markovian, Eq.~\eqref{eqn::composition} holds. We can then define the maps 
\begin{gather}\label{eq:extra6}
    \widetilde{\Lambda}_{t_{j+1},t_{j}}[\ketbra{x_j}{y_j}] = \delta_{x_jy_j} \sum_{x_{j+1}} \Pprob(x_{j+1}|x_j) \Pi_{x_{j+1}}\, ,
\end{gather}
and the initial diagonal state
\begin{gather}
    \widetilde{\rho}_{t_0} = \sum_{x_1} \Pprob(x_{1}) \Pi_{x_{1}},
\end{gather}
which also means that we identify the initial time as the time of the first measurement, $t_1=t_0$.
 The set of maps $\{\widetilde{\Lambda}_{t_{j+1},t_j}\}$ defined in this way, in conjunction with $\widetilde{\rho}_{t_0}$, reproduces the correct statistics via Eq.~\eqref{eqn::Markov1}. As they are diagonal in the computational basis
 for any pair of times $t_j$ and $t_{j+1}$, they form an NCGD set.
\end{proof}
Crucially, the connection between classicality and NCGD dynamics is one-to-one: If the obtained Markovian statistics
cannot be reproduced by a set of maps that are NCGD, then the process is non-classical. Before discussing classicality in the presence of memory effects below, it is worth discussing the intuitive meaning of this theorem, and NCGD dynamics in particular. 

If the process at hand is Markovian and classical, the maps $\{\widetilde \Lambda_{t_{j+1},t_j}\}$
(as well as the initial state $\widetilde{\rho}_{t_0}$) introduced in the proof of Theorem~\ref{thm::NCGD_classical} define an \emph{artificial} reduced dynamics of the system, whose propagators correctly reproduce all joint probability distributions for measurements in the (fixed) classical basis via Eq.~\eqref{eqn::Markov1}. 
Note that the actual propagators of the dynamics (i.e., those fixed by the unitary evolution
in Eq.~\eqref{eqn::Gen_Prob} via Eqs.~\eqref{eq:map} and~\eqref{eq:prop}) might differ
from the maps $\widetilde{\Lambda}_{t_{j+1},t_{j}}$ above
(and $\widetilde{\rho}_{t_0}$ might differ from the actual initial state $\rho_{t_0}$); indeed, the fact that they do not coincide is simply a manifestation of the basis dependence of the (sequential) measurement scheme we are focusing on here. 

Crucially, a composition rule on the level of the actual propagators does not imply a composition rule on the level of the propagators of the populations. 
This implication only holds if the propagators of the dynamics are NCGD and the resulting statistics can be computed via Eq.~\eqref{eqn::Markov1}, in which case Eq.~\eqref{eqn::NCGD} results in 
\begin{gather}
    \widetilde{\Lambda}_{t_{j+1},t_{j-1}} = \widetilde{\Lambda}_{t_{j+1},t_{j}} \circ \widetilde{\Lambda}_{t_{j},t_{j-1}} \quad \forall \ t_j,\label{eq:widett}
\end{gather}
with
\begin{equation}
    \widetilde{\Lambda}_{t_{k},t_{j}}[\ketbra{x_j}{y_j}] = \delta_{x_jy_j} \braket{x_{k}|\Lambda_{t_{k},t_j}[\ketbra{x_j}{x_j}]|x_{k}} \Pi_{x_{j}}\,
\end{equation}
(see Eqs.~\eqref{eqn::extra1} and~\eqref{eq:extra6}). These reduced propagators still produce the correct populations, which is the only relevant part for the considered statistics, and set all coherences to zero.
This composition law is then---as already seen in Eq.~\eqref{eqn::composition}---equivalent to the well-known classical Chapman-Kolmogorov equations
\begin{gather}
    \sum_{x_j} \Pprob(x_{j+1}|x_j) \Pprob(x_j|x_{j-1}) = \Pprob(x_{j+1}|x_{j-1})\,,
\label{eqn::composition1}
\end{gather}
which hold for classical Markovian processes: If the measurement statistics of a Markovian process can be reproduced by a set of NCGD maps $\{\Lambda_{t_j,t_{j-1}}\}$, then it can also be reproduced by the set of maps $\{\widetilde \Lambda_{t_j,t_{j-1}}\}$, which act non-trivially on only the populations of the computational basis and satisfies a composition law, thus the process is classical.

Conversely, if the classical composition rule of Eq.~\eqref{eqn::composition1} holds for a Markovian process, then there exists a set $\{\widetilde{\Lambda}_{t_{j+1},t_j}\}$ of propagators (e.g., those defined in Eq.~\eqref{eq:extra6}) that are NCGD and correctly reproduce all joint probability distributions for measurements in the (fixed) classical basis.

Theorem~\ref{thm::NCGD_classical} is a generalization of the main result of Ref.~\cite{smirne_coherence_2019} in two ways. First, it does not impose any restriction on the propagators of the underlying quantum evolution, while in Ref.~\cite{smirne_coherence_2019} these were required to form a semigroup,
i.e., $\Lambda_{t_{j+1},t_{j}} = e^{\Lcal(t_{j+1} - t_j)}$, for some Lindbladian $\Lcal$~\cite{gorini1976,lindblad1976}. 

Second, the definition of Markovianity used here coincides with the standard definition of classical stochastic processes, whereas in Ref.~\cite{smirne_coherence_2019}, a definition based on Eq.~\eqref{eqn::Markov1} (for
semigroups) was used. Consequently, while the maps $\{\Lambda_{t_{j+1},t_j}\}$ cannot be fully probed by measurements in the computational basis alone, the requirement of Eq.~\eqref{eqn::composition1} can be tested for by simply performing sequences of measurements in the classical basis at the relevant times, thus making our theorem fully operational.
However, this comes at the cost of dealing with propagators 
$\{\widetilde{\Lambda}_{t_{j+1},t_j}\}$
which possibly do not correspond to those of the actual reduced dynamics.

On the other hand, as we show in Appendix~\ref{app::prev_results}, a one-to-one correspondence between the dynamical propagators $\Lambda_{t_{j+1},t_j}$
and the non-classicality of the multi-time statistics 
can be established also in the general
(non-semigroup)
divisible case, when the QRF applies, 
provided that one assumes a proper invertibility condition 
on the restriction of the dynamical maps to the populations of the computational basis. 
Indeed, this also allows one to recover in a straightforward way the main result
of Ref.~\cite{smirne_coherence_2019} as a corollary by further imposing the semigroup composition law. 

Importantly, Theorem~\ref{thm::NCGD_classical} characterizes the connection between coherences and the classicality of a Markovian process. While it is not necessary that the underlying propagators do not create coherences in order for a Markovian process to be classical, it is necessary and sufficient that coherences---should they be created---cannot be detected at a later point in time by means of measurements in the computational basis. Put differently, the propagators must be such that a classical observer could not decide whether at any point in time an identity map or a completely dephasing map was performed (which is depicted in Fig.~\ref{fig::NCGD}). This requirement is exactly encapsulated in the NCGD property of the propagators.

\subsection{Coherence in the non-Markovian case: preliminary analysis}

The above connection between quantum coherence and non-classicality fails to hold in the non-Markovian case. On the one hand, in this case propagators between two times are no longer sufficient to fully characterize the multi-time statistics~\footnote{For a characterization of non-Markovian processes in terms of \textit{collections} of CPTP maps (or sequences thereof), see Refs.~\cite{sakuldee_non-markovian_2018, paz-silva_dynamics_2018}. Notably, the characterization employed in these references is equivalent to the one provided here.}. On the other hand, even if the state of the system is diagonal in the computational basis at all times, dephasing can still be invasive due to correlations with the environment, breaking the connection between coherences and the classicality of statistics. We will discuss the former problem in the subsequent sections. Using an open system model from Refs.~\cite{lindblad_1980, accardi_quantum_1982,arenz_distinguishing_2015}, an explicit \textit{ante litteram} example of the latter case has already been provided in Ref.~\cite{smirne_coherence_2019} (note also a similar investigation in Ref.~\cite{budini_conditional_2019}), albeit not with an emphasis on the lack of coherence in the system state at all times (even in between the measurements). Here, we reiterate this example, focusing on the absence of coherences in the state of the system. The details of this discussion can be found in Appendices~\ref{app::Absence} and~\ref{app::CombPocket}. A simpler, although non-continuous, example for a non-Markovian process that yields non-classical statistics but never displays coherences in the system state is provided in Appendix~\ref{app::MIC_Ex}.

	\begin{example}
		\label{ex::class_state}
		\normalfont
		Let the system of interest $s$ consist of a qubit described by $\rho_s(t)$ which is coupled to a continuous degree of freedom $p$ of the environment. The global dynamics of system and environment is governed by the unitary evolution $U_{t_j,t_i}$, acting as
		\begin{equation}
		U_{t_j,t_i} \ket{\ell, p} = e^{i \phi_\ell p (t_j-t_i)} \ket{\ell, p}\, ,
		\end{equation}
		where $\left\{\ket{\ell}\right\}_{\ell=0,1}$ is the eigenbasis of the system Pauli operator $\hat{\sigma}_z$ and $\phi_\ell = (-1)^\ell$. The initial system-environment state is assumed to be of product form $\eta(0) = \rho_s(0)\otimes \ket{\varphi^e}\bra{\varphi^e}$, with
		$
		\ket{\varphi^e} = \int_{-\infty}^{\infty} d p f(p) \ket{p}$, where $f(p)$ satisfies the normalization condition $\int_{-\infty}^{\infty} d p |f(p)|^2 =1$. By defining 
		\begin{equation}\label{eq:kt}
		k(t) := \int_{-\infty}^{\infty} d p |f(p)|^2 e^{2 i p t}\, ,
		\end{equation}
		it is straightforward to show that, expressed in the eigenbasis of $\hat \sigma_z$, the free open evolution of the state of the system (i.e., without intermediate measurements) is given by 
		\begin{equation}
		\rho_s(t)=\begin{pmatrix}
		\rho_{0 0} &  k(t)\rho_{01}\\ 
		k^*(t)\rho_{10} & \rho_{11}
		\end{pmatrix}\, ,
		\end{equation}
		where $\rho_{mn}:= \braket{m|\rho_s(0)|n}$. 
		
		If $\rho_s(0)$ is initialized in a convex mixture of the eigenvectors $\{\ket{\pm} = (\ket{0}\pm\ket{1})/\sqrt{2}\}$ of the $\hat \sigma_x$ operator, i.e., $\rho_s(0) = \alpha\ket{+}\bra{+}+(1-\alpha)\ket{-}\bra{-}$, then 
		\begin{align}
		\rho_s(t)= \frac{1}{2} &\begin{pmatrix}
		1 &  k(t)(2\alpha -1)\\ 
		k^*(t)(2\alpha -1) &1
		\end{pmatrix}\nonumber\\
		= &\frac{1}{2} \Big\{ \ketbra{+}{+} \left[ 1+\left(2\alpha-1\right) \mbox{Re} \left(k(t)\right) \right] \nonumber\\
		&- \ketbra{+}{-} \left(2\alpha-1\right)\mbox{Im}\left(k(t)\right)\nonumber\\
		&+  \ketbra{-}{+} \left(2\alpha-1\right)\mbox{Im}\left(k(t)\right) \nonumber\\
		&+ \ketbra{-}{-} \left[ 1-\left(2\alpha-1\right) \mbox{Re} \left(k(t)\right) \right] \Big\}, \label{eq:st}
		\end{align}
		i.e., no coherence w.r.t. $\hat{\sigma}_x$ will be generated if $k(t)$ is a real function of time (as noted in Ref.~\cite{smirne_coherence_2019}); this is, e.g., the case if $f(p)$ corresponds to a Lorentzian distribution centered around zero,
		\begin{equation}
		|f(p)|^2 = \frac{\Gamma}{\pi(\Gamma^2+p^2)} \mapsto k(t)= e^{-2 \Gamma |t|}\, .
		\end{equation} 
		\textit{A priori}, the fact that there are no $\hat{\sigma}_x$-coherences created in the free evolution does not mean that none are created if the system is probed at intermediate times. However, here, \textit{no} $\hat{\sigma}_x$-coherence is generated even when we take into account how the measurements modify the system's state. Specifically, immediately after a measurement in the $\hat \sigma_x$-basis is performed at time $t_1$ (yielding outcome $\pm$), the total system-environment state is of product form 
		\begin{gather}\label{eq:extra3}
		\eta^{(\pm)}(t_1) = \ketbra{\pm}{\pm} \otimes \xi^{(\pm)}(t_1)\, ,
		\end{gather}
		where $\xi^{(\pm)}(t_1)$ is a state of the environment that depends on the measurement outcome. As we show in Appendix~\ref{app::Absence}, any state of the system
		evolved from the post-measurement state of Eq.~\eqref{eq:extra3} according to the described dynamics remains diagonal in the $\{\ket{\pm}\}$ basis; this also holds true for the state of the system after any \textit{sequence} of such measurements. Together with the fact that the statistics resulting from measurements in the $\{\ket{\pm}\}$ basis  is non-classical (i.e., it does not satisfy Kolmogorov conditions, as has been shown in Ref.~\cite{smirne_coherence_2019}), this constitutes an example of a non-classical process without any coherence with respect to the measured observable ever being generated. Evidently, this behavior is only possible since the chosen example is non-Markovian. 
	\end{example}
Unlike in the Markovian case, where the absence of coherences trivially leads to classical statistics, when memory effects are present, it is the coherences of the system state \textit{as well as} the non-classical correlations between the system and its environment that can lead to non-classical behavior---in a way which will be specified in the following. Intuitively, while the completely dephasing map leaves the system unchanged if no coherences are created, it does not necessarily leave the \textit{overall} system-environment state invariant. In detail, in general we can have $\Delta[\rho^s_{t_j}] = \Ical[\rho^s_{t_j}] \ \forall \ t_j$, without it implying $\Delta\otimes \Ical^e[\eta^{se}_{t_j}] = \Ical[\eta^{se}_{t_j}] \ \forall \ t_j$. As we will see, the latter property is sufficient, but not necessary, for the satisfaction of the Kolmogorov conditions. First, though, in order to be able to go beyond the investigation of Markovian processes, and extend the existing connection between classicality and coherences, it is important to introduce \textit{quantum combs}---a suitable framework to describe general quantum processes~\cite{chiribella_theoretical_2009,pollock_non-markovian_2018}.

\section{non-Markovian classical processes}\label{sec::combs}

The previous example illustrates the subtle relation between coherence and classicality in the case of open quantum processes with memory. There, although no coherence is ever generated on the level of the system with respect to the chosen measurement basis, the system-environment correlations built up throughout the dynamics lead to non-classical statistics. To develop a more in-depth understanding of the interplay between coherences and classical phenomena, we require a suitable operational framework for approaching such scenarios. We can then employ this framework to comprehensively characterize \textit{all} quantum processes that display classical statistics. 

\subsection{Classicality and processes with memory}
The necessity of such a novel framework for the description of quantum processes that display memory effects stems from the breakdown of their modeling in terms of propagators that could be used in the Markovian case; this can already be seen for classical stochastic processes. Here, a joint probability distribution $\Pprob_K(x_K,\dots,x_1)$ fully describes a $K$-process. This probability distribution can equivalently be represented in terms of multi-time conditional probabilities as 
\begin{align}
&\Pprob_K(x_K,\dots,x_1) \\
\notag
&= \Pprob_{K}(x_{K}|x_{K-1},\dots,x_1)\cdots \Pprob_2(x_2|x_1) \Pprob_1(x_1)\, .
\end{align} 
Importantly, all of the above conditional probabilities generally depend upon \textit{all} preceding measurement results, in contrast to the Markovian case where they only depend on the most recent outcome. Consequently, two-point transition probabilities of the form $\Pprob(x_j|x_{j-1})$ are not sufficient in general to build up all joint probability distributions and thus do not completely describe the process. Similarly, two-time propagators $\{\Lambda_{t_j,t_{j-1}}\}$ are generally not sufficient to compute multi-time joint probabilities in the quantum case and therefore fail to fully characterize the process~\cite{Vacchini2011,rivas_quantum_2014}.

For classical statistics, the joint probability distribution $\Pprob_K(x_K,\dots,x_1)$ contains all information about the $K$-process, since all distributions for fewer times, as well as all conditional probabilities, can be derived once $\Pprob_K$ is known. In exactly the same way, a general quantum $K$-process is fully characterized by the joint probabilities for \emph{all possible sequences} of $K$ measurements (at times $t_1,\dots,t_K$), including non-projective and non-orthogonal ones.

As discussed in the previous section, if the complete system-environment dynamics is known, then all joint probability distributions (on times $\{t_j\}_{j=1}^n$) obtained from sequential measurements of the system can be computed via
\begin{align}
\label{eq:multitimeoutputstate}
   &\Pprob_n(x_n,\dots,x_1) \\ 
  \notag &= \tr\left[(\Pcal_{x_n}\otimes \Ical^e)\circ \Ucal_{t_n,t_{n-1}} \circ \cdots \circ (\Pcal_{x_1}\otimes \Ical^e) [\eta^{se}_{t_1}]\right]\, .
\end{align}
Here, $\{\Pcal_{x_j}\}$ correspond to projective measurements in the computational basis, but
evidently the same relation can also be used to obtain the correct probabilities when using different probing \textit{instruments}, e.g., instruments that measure sharply in a different basis or those that perform generalized measurements. More formally, an instrument $\Jcal_{k} = \{ \Mcal_{x_k} \}$ (at time $t_k$) is a collection of CP maps that add up to a CPTP map~\cite{Heinosaari2012}. For instance, the instrument corresponding to a measurement in the computational basis is given by $\Jcal_k = \{\Pcal_{x_k}\}$, and all of its elements add up to the CPTP map $\sum_{x_k} \Pcal_{x_k} = \Delta_k$. Intuitively, each outcome of an instrument corresponds to one of its constituent CP maps, which, in turn, describes how the state of the system changes upon the realization of a specific measurement outcome. With this, the probability to obtain the sequence of outcomes $x_1,\dots, x_K$, given that the instruments $\Jcal_1,\dots, \Jcal_K$ were used to probe the system, is given by
\begin{align}\label{eqn::combintroprob}
    &\Pprob_K(x_K,\dots,x_1|\Jcal_K,\dots,\Jcal_1) \\ 
    \notag &= \tr\left[(\Mcal_{x_K}\otimes \Ical^e)\circ \Ucal_{t_K,t_{K-1}} \circ \cdots \circ (\Mcal_{x_1}\otimes \Ical^e) [\eta^{se}_{t_1}]\right]\, \\
    \notag &=: \Ccal_K[\Mcal_{x_K},\dots,\Mcal_{x_1}]\,;
\end{align}
indeed the joint probability distribution for any subset of ordered times $t_n \geq \ldots \geq t_1$, with $n<K$,
can be obtained by replacing in the formula above $\Mcal_{x_j}$ with the identity operator, in
correspondence with the times $t_j$ not contained in the subset.

In what follows, whenever we drop the explicit instrument labels, it is understood that the probabilities were the result of a measurement in the computational basis at each time. The multi-linear functional $\Ccal_K$ introduced above is a special case~\footnote{In contrast to the combs discussed in Refs.~\cite{chiribella_quantum_2008,chiribella_theoretical_2009}, the combs we consider do not start on an open input line, and do not end on an open output line; or, equivalently, in our case, the Hilbert spaces of this initial input and final output space are trivial. Such combs are also called \textit{testers} in the literature.} of a \textit{quantum comb}~\cite{chiribella_quantum_2008,chiribella_theoretical_2009} and provides a natural generalization to the concept of quantum channels that by construction allows for the inclusion of memory effects.~\cite{Kretschmann2005,Caruso2014,pollock_operational_2018,pollock_non-markovian_2018} (see Fig.~\ref{fig::general_comb} for a graphical representation). 
\begin{figure}
    \centering
    \includegraphics[width=0.9\linewidth]{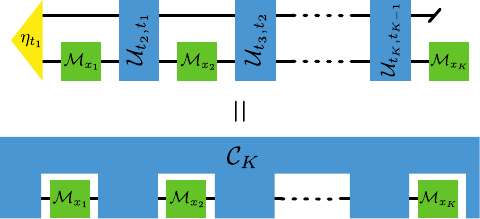}
    \caption{{\bf{Comb of a general open quantum evolution.}} The probabilities characterizing a quantum process can be understood as the action of a comb $\Ccal_K$ on the sequence of CP maps $\{\Mcal_{x_j}\}$ that correspond to the respective measurement outcomes.}
    \label{fig::general_comb}
\end{figure}
It maps \emph{any} sequence of possible experimental transformations enacted on the system to the corresponding joint probability of their occurrence. In this sense, $\Ccal_K$ plays exactly the same role that the joint probability distribution $\Pprob_K$ plays in the classical setting, and thus allows one to decide on the classicality of the resulting statistics. For example, for the completely memoryless case, i.e, the case of Markovianity with respect to measurements in any basis, the evolution between any two points in time is described solely by a sequence of independent CPTP maps that act on the system alone~\cite{pollock_operational_2018,1367-2630-18-6-063032}, and we have
\begin{align}
\label{eqn::Markov_Comb}
     &\Ccal^{\text{Markov}}_K[\Mcal_{x_K},\dots,\Mcal_{x_1}] \\
     \notag &= \tr\left[\Mcal_{x_K} \circ \Lambda_{t_K,t_{K-1}}\circ \cdots \circ \Mcal_{x_2}\circ \Lambda_{t_2,t_{1}}\circ \Mcal_{x_1}[\rho_{t_1}]\right]\, .
\end{align}
In general, however, the comb of a $K$-process does not split in the way above into independent portions of evolution between times. Thus, when analyzing the relation between coherence and classicality in the presence of memory, instead of investigating the properties of individual CPTP maps, one must consider those of the multi-time comb $\Ccal_K$. 

The comb $\Ccal_K$ is an operationally well-defined object that can---just like the joint probability distribution $\Pprob_K$---be obtained by means of probing measurements on the system alone through a generalized tomographic scheme~\cite{pollock_non-markovian_2018, milz_reconstructing_2018}. Specifically, for its reconstruction, it is \textit{not} necessary to explicitly know the system-environment dynamics: the comb does not contain direct information about the environment, but solely that of its influence on the multi-time statistics observed from measurements on the system. As such, it encapsulates all that is out of control of the experimenter
and thereby clearly separates the underlying process at hand from what can be controlled (i.e., the experimental interventions). An explicit example of the comb formalism is provided in Appendix~\ref{app::CombPocket}, where we rephrase Example~\ref{ex::class_state} in terms of the comb description.

Crucially, the comb framework allows us to consider what it means for a stochastic process \emph{with memory} to be classical, thereby permitting an extension of the results of Ref.~\cite{smirne_coherence_2019} to the non-Markovian case: given the comb $\Ccal_K$ of a process on times in $\Tcal$, \textit{all} combs correctly describing the process on fewer times $\Tcal' \subseteq \Tcal$ can be deduced by letting $\Ccal_K$ act on the identity map at the appropriate superfluous times~\cite{pollock_non-markovian_2018, milz_kolmogorov_2017}. For example, we have (see also Fig.~\ref{fig::marginal_gen})
\begin{align}
\label{eqn::Marginal_Quantum}
     &\Ccal_{K-1}[\Mcal_{x_K},\dots,\Mcal_{x_{j+1}},\Mcal_{x_{j-1}},\dots,\Mcal_{x_1}] \notag \\
     &= \Ccal_{K}[\Mcal_{x_K},\dots,\Mcal_{x_{j+1}},\Ical_j,\Mcal_{x_{j-1}},\dots,\Mcal_{x_1}] \, .
\end{align}
\begin{figure}
    \centering
    \includegraphics[width=1\linewidth]{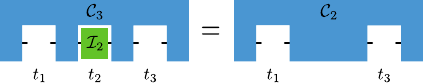}
    \caption{{\bf{Consistency condition for combs.}} Letting a comb defined on times $\Tcal$ act on identity maps at a set of times $\Tcal\setminus \Tcal'$ (i.e., the set of times in $\Tcal$ but not in $\Tcal'$) yields the correct comb on $\Tcal'$. Depicted is the situation for $\Tcal = \{t_1,t_2,t_3\}$ and $\Tcal' = \{t_1,t_3\}$.}
    \label{fig::marginal_gen}
\end{figure}
As we have discussed in the previous sections, classicality of a process means that the action of the completely dephasing map cannot be distinguished (by means of measurements in the classical basis) from not performing an operation. With the method of `generalized marginalization' given by Eq.~\eqref{eqn::Marginal_Quantum} at hand, we obtain the following characterization of classical combs:
\begin{theorem}[$K$-classical quantum combs]
\label{thm::KclassComb}
A comb $\mathcal{C}_K$ on times $\Tcal$, with $|\Tcal| = K$, yields a $K$-classical process via Eq.~\eqref{eq:multitimeoutputstate} iff it satisfies
\begin{align}
\label{eqn::Class_Comb}
    &\Ccal_K\left[\bigotimes_{t_j \in \Tcal'} \Ical_j, \bigotimes_{t_k\in \Tcal \setminus \Tcal'} \Pcal_{x_k} \right] \\ \notag
    &=\Ccal_K \left[\bigotimes_{t_j \in \Tcal'} \Delta_j, \bigotimes_{t_k\in \Tcal \setminus \Tcal'} \Pcal_{x_k}\right]\, ,
\end{align}
for all subsets $\Tcal'\subseteq \Tcal$ and all possible sequences of outcomes on $\Tcal\setminus \Tcal'$. 
\end{theorem}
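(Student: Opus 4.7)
The plan is to translate Eq.~\eqref{eqn::Class_Comb} into the Kolmogorov consistency conditions on the measured probability distributions, exploiting three elementary facts about the quantum comb formalism: (i) $\Ccal_K$ is multilinear in its slot inputs; (ii) the completely dephasing map decomposes as $\Delta_j=\sum_{x_j}\Pcal_{x_j}$; and (iii) the comb consistency identity of Eq.~\eqref{eqn::Marginal_Quantum}, which by iteration extends to arbitrary subsets $\Tcal'\subseteq \Tcal$---inserting $\Ical_j$ at every $t_j\in \Tcal'$ yields the marginal comb $\Ccal_{K-|\Tcal'|}$ on the reduced time set $\Tcal\setminus \Tcal'$.

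Combining these facts, the left-hand side of Eq.~\eqref{eqn::Class_Comb} equals $\Pprob_{K-|\Tcal'|}$ evaluated at the outcomes $\{x_k\}_{t_k\in \Tcal\setminus\Tcal'}$, i.e., the probability of the outcome sequence obtained when measurements are performed \emph{only} at the times in $\Tcal\setminus\Tcal'$. The right-hand side, by (ii) and (i), equals $\sum_{\{x_j\}_{t_j\in \Tcal'}}\Pprob_K(\{x_j\},\{x_k\})$, i.e., the full $K$-time distribution $\Pprob_K$ summed over the outcomes at the times in $\Tcal'$. Thus Eq.~\eqref{eqn::Class_Comb} for a given subset $\Tcal'$ is \emph{equivalent} to the generalized marginalization relation $\Pprob_{K-|\Tcal'|} = \sum_{\{x_j\}_{t_j\in \Tcal'}}\Pprob_K$.

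Both directions of the theorem then follow readily from this translation. For the ``if'' direction, I would apply the hypothesis with $\Tcal'=\Tcal\setminus S$ and $\Tcal'=\Tcal\setminus S'$ for $S\subset S'$ with $|S'\setminus S|=1$; subtracting (in the sense of marginalizing) the two resulting identities produces the one-step Kolmogorov condition $\sum_{x_j}\Pprob_{|S'|}=\Pprob_{|S|}$ of Definition~\ref{def::N-classical_statistics}, establishing $K$-classicality. For the ``only if'' direction, assuming $K$-classicality, I would prove the generalized marginalization identity by induction on $|\Tcal'|$: the base case $|\Tcal'|=1$ is Eq.~\eqref{eqn::Kolmo_cond}, and the inductive step peels off one further time from $\Tcal'$ by applying the one-step Kolmogorov condition to the intermediate $\Pprob_n$ supplied by the induction hypothesis. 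The main subtlety is purely bookkeeping---tracking time labels and verifying that the iterated application of (iii) genuinely reduces $\Ccal_K$ to the correct $(K-|\Tcal'|)$-time comb independently of the order in which the identity maps are inserted---rather than anything analytic; the content of the theorem is ultimately a direct repackaging of Kolmogorov consistency within the comb formalism.
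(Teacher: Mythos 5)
Your proposal is correct and follows essentially the same route as the paper: identify the left-hand side of Eq.~\eqref{eqn::Class_Comb} (via Eq.~\eqref{eqn::Marginal_Quantum}) with the directly measured distribution on $\Tcal\setminus\Tcal'$ and the right-hand side (via $\Delta_j=\sum_{x_j}\Pcal_{x_j}$ and multilinearity) with the marginal of $\Pprob_K$, so that the condition is exactly Kolmogorov consistency. Your extra bookkeeping relating the all-subsets marginalization identities to the one-step conditions of Eq.~\eqref{eqn::Kolmo_cond} is a correct elaboration of what the paper leaves implicit.
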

In slight abuse of notation, here, the argument $\bigotimes_{t_j \in \Tcal'} a_j, \bigotimes_{t_k\in \Tcal \setminus \Tcal'} b_{x_k}$ of the comb $\Ccal_K$ signifies that it acts on the maps $a_j$ at times $t_j\in \Tcal'$ and on $b_{x_k}$ at times $t_k\in  \Tcal \setminus \Tcal'$. 

Theorem~\ref{thm::KclassComb} expresses in a concise way that a general process is $K$-classical iff measurements in the computational basis cannot distinguish the action of completely dephasing maps from the action of identity maps. Let us emphasis again that the completely dephasing map does not only destroy coherences of the systems reduced state, but also quantum correlations between the system and the environment. Therefore, Theorem~~\ref{thm::KclassComb} does not directly link coherence and non-classicality as Theorem~\ref{thm::NCGD_classical} did for the case without memory.
\begin{proof}
The proof of Theorem~\ref{thm::KclassComb} is thus straightforward: If a comb satisfies Eqs.~\eqref{eqn::Class_Comb}, then the resulting statistics satisfy Kolmogorov conditions. Conversely, any joint probability distribution on a set of times $\Tcal'\subseteq \Tcal$ can either be obtained by direct measurement, or by marginalization of the corresponding distribution on $\Tcal$. The former can be computed via the first line of Eq.~\eqref{eqn::Class_Comb}, the latter via the second one. If the statistics of the process appear classical, then both resulting distributions have to coincide, and Eq.~\eqref{eqn::Class_Comb} must hold. 
\end{proof}

In the (basis dependent) Markovian case that we discussed in the previous section, Eq.~\eqref{eqn::Class_Comb} directly reduces to Eq.~\eqref{eq:widett},
the NCGD property at the level of propagators of populations. Theorem~\ref{thm::KclassComb} therefore provides the proper generalization of the results of Ref.~\cite{smirne_coherence_2019} to the non-Markovian case. Nonetheless, its consequences for the structural properties of classical combs, and, in particular, the relation of classicality and coherence remain somewhat opaque in the way Theorem~\ref{thm::KclassComb} is phrased. In order to address these questions, we now  introduce a representation of quantum combs that is favorable for the purposes of our work.

\subsection{Choi-Jamio{\l}kowski representation of general quantum processes}
Both the quantum comb describing the $K$-process at hand and the experimental interventions applied at each time are linear maps (the former being a higher-order multi-linear map). Any such map can be represented in a variety of ways, but the most natural for our present purposes makes use of the Choi-Jamio{\l}kowski isomorphism~\cite{jamiolkowski_linear_1972,Choi1975} between quantum maps and positive semi-definite Hermitian matrices. 

A general quantum map---e.g., one that corresponds to a generalized measurement---at time $t_k$ is a CP transformation $\Mcal_{x_k}: \Bcal(\Hcal_k^\inp) \rightarrow \Bcal(\Hcal_k^\out)$ that takes bounded linear operators on the (input) Hilbert space $\Hcal_k^\inp$ onto bounded linear operators on the (output) Hilbert space $\Hcal_k^\out$. Throughout this paper, we will consider the input and output spaces of such maps to be isomorphic (and of finite dimension), and the labels $\inp$ and $\out$, as well as the time label, are merely introduced for better accounting of the involved spaces. Any such quantum map $\Mcal_{x_k}$ can be isomorphically mapped onto a positive semi-definite Hermitian matrix that we will call its \textit{Choi state}, $M_{x_k} \in \Bcal(\Hcal_k^\out \otimes \Hcal_k^\inp)$, by letting it act on one half of an unnormalized maximally-entangled state $\Phi^+ = \sum_{x_k,y_k} \ketbra{x_kx_k}{y_ky_k} \in \Bcal(\Hcal_k^\inp \otimes \Hcal_k^\inp )$, i.e.,
\begin{gather}
    \label{eqn::Choi}
    M_{x_k} := (\Mcal_{x_k} \otimes \Ical )[\Phi^+] \in \Bcal(\Hcal_k^\out \otimes \Hcal_k^\inp).
\end{gather}
This isomorphism implies, e.g., the following identifications:
\begin{align}
    \label{eqn::Ident_map}
    &\text{Identity Map}: &&\Ical_k \Leftrightarrow \Phi^+_k\,,  \\
    \label{eqn::proj}
    &\text{Proj. Map}: &&\Pcal_{x_k} \Leftrightarrow \ket{x_k} \bra{x_k} \otimes \ket{x_k} \bra{x_k}\,, \\
    \label{eqn::Comp_Deph_Map}
    &\text{C. Deph. Map}: &&\Delta_k \Leftrightarrow \sum_{x_k} \ket{x_kx_k} \bra{x_kx_k}:=D_k. 
\end{align}
Here and throughout this article, we typically denote maps with calligraphic upper-case letters (as we have already done above) and their Choi state with the corresponding non-calligraphic variant---with the exception of the identity map (Eq.~\eqref{eqn::Ident_map}) and the completely dephasing map (Eq.~\eqref{eqn::Comp_Deph_Map}). For better orientation, we will continue to denote the respective time at which the maps act by an additional subscript.

Analogously, as a quantum comb $\Ccal_K$ is a multi-linear map it can---in a similar way to Eq.~\eqref{eqn::Choi}---be mapped onto a positive semi-definite Hermitian matrix $C_K$~\cite{chiribella_theoretical_2009, pollock_non-markovian_2018, milz_introduction_2017}. The action of a quantum comb on a sequence of CP maps $\{\Mcal_{x_K},\hdots,\Mcal_{x_1}\}$ is then equivalently given by~\cite{chiribella_theoretical_2009}
\begin{gather}
\label{eqn::Born_rule}
    \Ccal_K[\Mcal_{x_K},\dots,\Mcal_{x_1}] = \tr\left[(M_{x_K}^\mathrm{T} \otimes \cdots \otimes M_{x_1}^\mathrm{T})\, C_K\right]\, ,
\end{gather}
where $\sbt^\mathrm{T}$ denotes the transposition with respect to the computational basis. Eq.~\eqref{eqn::Born_rule} constitutes the Born rule for temporal processes~\cite{chiribella_memory_2008,ShrapnelCostaBorn2017}, where $C_K$ plays the role of a quantum state over time and the Choi states $M_{x_K}, \dots, M_{x_1}$ play the role that positive operator-valued measure \textbf{(POVM)} elements play in the standard Born rule.

Concretely, given an instrument sequence $\mathcal{J}_K , \hdots, \mathcal{J}_1$, by combining Eqs.~\eqref{eqn::combintroprob} and~\eqref{eqn::Born_rule}, the joint probability over the sequence of outcomes $x_K, \hdots, x_1$ is given by
\begin{align} \label{eq:extra4}
    &\mathbbm{P}_K(x_K, \hdots, x_1|\mathcal{J}_K, \hdots, \mathcal{J}_1) \\ 
    &\phantom{\mathbbm{P}_K}= \tr\left[(M_{x_K}^\mathrm{T} \otimes \cdots \otimes M_{x_1}^\mathrm{T})\, C_K\right]\, . \notag
\end{align}

Through this isomorphism, memory effects of the temporal process correspond directly to structural properties of its Choi state~\cite{milz_introduction_2017,taranto_2019L,taranto_2019A,taranto_2019S,TarantoThesis}; analogously, the classicality of a process is reflected in the properties of $C_K$. 

Represented in this way, quantum combs and the channels that they generalize have particularly nice properties. Complete positivity and trace preservation for a quantum channel $\Mcal$ correspond respectively to $M \geq 0$ and satisfaction of $\ptr{\out}{[M]} = \mathbbm{1}_\inp$. Analogously the Choi state of a quantum comb has to satisfy $C_K\geq 0$ as well as a hierarchy of trace conditions that fix the causal ordering of events~\cite{chiribella_theoretical_2009}, i.e., they ensure that later events cannot influence the statistics of earlier ones.

It is important to note that \textit{all} $K$-processes can be represented through the Choi-Jamio{\l}kowski isomorphism as (unnormalized) quantum states $C_K$. In the converse direction, any operator satisfying the aforementioned properties admits an underlying open quantum dynamics description~\cite{chiribella_quantum_2008,chiribella_theoretical_2009,pollock_non-markovian_2018}. Specifically, this means that for every proper comb, there is a (possibly fictitious) environment and a set of system-environment unitaries such that the action of the comb
on any sequence of instruments can be written as in Eq.~\eqref{eqn::combintroprob}. 
Quantum combs are hence the most general descriptors of open quantum system processes (when
the system of interest is probed at fixed times). We will call the respective underlying unitary description
that includes the environment the \textit{dilation} of the comb. As is the case for quantum channels, any such dilation is non-unique. On the other hand, the comb $\Ccal_K$ resulting from some underlying evolution is unique, and---just like the joint probability distribution $\Pprob_K$ in the classical case---constitutes the maximal descriptor of the process on the respective set of times.

\subsection{Structural properties of classical combs}
As a first step to a structural understanding of classical combs, we rephrase Theorem~\ref{thm::KclassComb} in terms of Choi states:
\begin{thmbis}{thm::KclassComb}[$K$-classical quantum combs]
\label{thm::KclassComb_prime}
A comb $\mathcal{C}_K$ on times $\Tcal$, with $|\Tcal| = K$, yields a $K$-classical process
iff its Choi state satisfies 
\begin{align}
\label{eqn::Class_Comb_prime}
    &\tr\left[\left(\bigotimes_{t_j \in \Tcal'} \Phi^+_j \bigotimes_{t_k\in \Tcal \setminus \Tcal'} P_{x_k} \right) C_K \right]
    \\
    \notag
    &=\tr\left[\left(\bigotimes_{t_j \in \Tcal'} D_j \bigotimes_{t_k\in \Tcal \setminus \Tcal'} P_{x_k} \right) C_K \right]\, . 
\end{align}
for all subsets $\Tcal'\subseteq \Tcal$ and all possible sequences of outcomes on $\Tcal\setminus \Tcal'$. 
\end{thmbis}
Using the relations~\eqref{eqn::Ident_map} --~\eqref{eqn::Comp_Deph_Map} as well as Eq.~\eqref{eq:extra4}, it is straightforward to see that this theorem is indeed equivalent to Theorem~\ref{thm::KclassComb}. Importantly, as it is stated in terms of Choi states, Theorem~\ref{thm::KclassComb_prime} allows one to derive a direct connection between general correlations and the classicality of a $K$-process. 

To see how the requirement in Eq.~\eqref{eqn::Class_Comb_prime} translates to structural constraints on classical combs, first note that any comb that yields the joint probability distribution $\Pprob_K(x_K,\dots,x_1)$ when probed in the classical basis can be written as 
\begin{gather}
\label{eqn::chiterm}
    C_K = \widetilde{C}_K^\mathrm{Cl.} + \chi\, ,
\end{gather}
where the term 
\begin{gather}
    \label{eqn::classicalComb}
    \widetilde{C}_K^\mathrm{Cl.} = \sum_{x_K,\dots,x_1} \Pprob_K(x_K,\dots,x_1) P_{x_K} \otimes \cdots  \otimes P_{x_1}\, ,
\end{gather}
contains the joint probability distribution $\Pprob_K$ on its diagonal and $\tr[(P_{x_K}\otimes \cdots \otimes P_{x_1}) \chi] =0$ for all $x_K, \dots, x_1$~\cite{milz_reconstructing_2018}. Intuitively, $\widetilde{C}_K^\mathrm{Cl.}$ corresponds to the part of $C_K$ that can be probed by measurements in the classical basis alone, while $\chi$ contains all the information about the underlying process that such measurements are blind to. If $\chi = 0$, then $C_K$ clearly satisfies the conditions of Eq.~\eqref{eqn::Class_Comb}, as $\tr[P_{x_j} \Phi^+_j] = \tr[P_{x_j} D_j]$ for all $x_j$~\footnote{For $\chi=0$, $C_K$ is actually not a proper comb, as it does not satisfy the hierarchy of trace conditions that ensure causal ordering. Nonetheless, this lack of causality could not be picked up by means of projective measurements in the classical basis alone, and does thus not pose a problem for our discussion}. In words, for $\chi=0$, the corresponding comb is classical, as it is diagonal in the classical product basis. However, this is not necessary for Eq.~\eqref{eqn::Class_Comb} to hold; rather, it suffices if $\chi$ is such that it does not allow one to distinguish between the action of the identity map and the completely dephasing map. We thus arrive at the following lemma: 

\begin{lemma}
\label{lem::coherence_non_Markov}
Let $C_K$ be the comb of a $K$-process on $\Tcal$, with $|\Tcal| = K$, and let $A_j:= \Phi^+_j - D_j$. $C_K$ yields a $K$-classical process iff it is of the form
\begin{gather}
\label{eqn::classLem}
    C_K = \widetilde{C}_K^{\mathrm{Cl.}} + \chi\, ,
\end{gather}
where $\widetilde{C}_K^{\mathrm{Cl.}}$ is obtained from some joint probability distribution $\Pprob_K$ via Eq.~\eqref{eqn::classicalComb} and $\chi$ satisfies
\begin{gather}
\label{eqn::chi_requirement}
    \tr\left[\left(\bigotimes_{t_j \in \Tcal'} A_j \bigotimes_{t_k\in \Tcal \setminus \Tcal'} P_{x_k} \right) \chi \right] =0
\end{gather}
for all subsets $\Tcal'\subseteq \Tcal$ and $\Tcal' = \emptyset$. 
\end{lemma}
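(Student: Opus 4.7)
The plan is to leverage the Choi-state reformulation of the classicality criterion given in Theorem~\ref{thm::KclassComb_prime} and translate it directly into a condition on the ``coherence part'' $\chi$ of the decomposition $C_K = \widetilde{C}_K^{\mathrm{Cl.}} + \chi$ introduced in Eq.~\eqref{eqn::chiterm}. First I would rewrite Eq.~\eqref{eqn::Class_Comb_prime} by subtracting the right-hand side from the left and grouping the dephasing and identity parts into $A_j := \Phi^+_j - D_j$, obtaining the equivalent criterion
\begin{equation*}
\tr\Bigl[\Bigl(\bigotimes_{t_j \in \Tcal'} A_j \bigotimes_{t_k \in \Tcal \setminus \Tcal'} P_{x_k}\Bigr) C_K \Bigr] = 0
\end{equation*}
for all $\Tcal' \subseteq \Tcal$ and all outcome strings on $\Tcal \setminus \Tcal'$. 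Note that for $\Tcal' = \emptyset$ this is automatic (both sides of Eq.~\eqref{eqn::Class_Comb_prime} coincide identically), so the content lies in the non-empty case.

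Next I would substitute the decomposition $C_K = \widetilde{C}_K^{\mathrm{Cl.}} + \chi$ and show that the $\widetilde{C}_K^{\mathrm{Cl.}}$ part drops out. The key single-site computation is that
\begin{equation*}
\tr\bigl[A_j\, P_{y_j}\bigr] = \tr\bigl[\Phi^+_j P_{y_j}\bigr] - \tr\bigl[D_j P_{y_j}\bigr] = 1 - 1 = 0
\end{equation*}
for every $y_j$, which one checks directly from the explicit forms $\Phi^+_j = \sum_{x,y} \ketbra{xx}{yy}$, $D_j = \sum_x \ketbra{xx}{xx}$, and $P_{y_j} = \ketbra{y_j}{y_j} \otimes \ketbra{y_j}{y_j}$. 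Since $\widetilde{C}_K^{\mathrm{Cl.}}$ is a mixture of tensor products of projectors $P_{y_j}$ weighted by $\Pprob_K(y_K,\dots,y_1)$, the trace factorizes and any factor indexed by a time in $\Tcal' \neq \emptyset$ vanishes, so that $\widetilde{C}_K^{\mathrm{Cl.}}$ contributes nothing whenever $\Tcal' \neq \emptyset$.

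Combining the two observations, the Theorem~\ref{thm::KclassComb_prime} criterion reduces exactly to Eq.~\eqref{eqn::chi_requirement} on $\chi$, which proves both directions of the iff. Finally, one should check that the $\Pprob_K$ appearing in $\widetilde{C}_K^{\mathrm{Cl.}}$ coincides with the physical joint probability distribution observed when probing in the classical basis; this is immediate from $\tr[(\bigotimes_k P_{x_k})\chi] = 0$ (the defining orthogonality in Eq.~\eqref{eqn::chiterm}), which implies $\tr[(\bigotimes_k P_{x_k}) C_K] = \Pprob_K(x_K,\dots,x_1)$, consistent with Eq.~\eqref{eq:extra4}.

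I do not anticipate a substantive obstacle: the only real computation is the single-site orthogonality $\tr[A_j P_{y_j}] = 0$, and the rest is bookkeeping of which subset $\Tcal'$ is being considered. The subtlest point is merely to handle the $\Tcal' = \emptyset$ case transparently and to confirm that the decomposition in Eq.~\eqref{eqn::chiterm} is well-defined so that $\widetilde{C}_K^{\mathrm{Cl.}}$ automatically encodes the correct multi-time statistics.
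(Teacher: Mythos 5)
Your proposal is correct and follows essentially the same route as the paper: invoke the Choi-state criterion of Theorem~\ref{thm::KclassComb_prime}, split $C_K=\widetilde{C}_K^{\mathrm{Cl.}}+\chi$ as in Eq.~\eqref{eqn::chiterm}, and observe that $\widetilde{C}_K^{\mathrm{Cl.}}$ is invisible to the conditions because $\tr[A_jP_{y_j}]=0$; you actually make explicit the single-site computation that the paper dismisses as ``straightforward to see.'' The one step you state too quickly is the claim that subtracting the two sides of Eq.~\eqref{eqn::Class_Comb_prime} directly yields the operator $\bigotimes_{t_j\in\Tcal'}A_j$: for $|\Tcal'|>1$ one has $\bigotimes_j\Phi^+_j-\bigotimes_jD_j\neq\bigotimes_j(\Phi^+_j-D_j)$, so the equivalence between the two families of conditions is not a single subtraction but requires combining conditions across subsets — e.g., via the telescoping identity $\bigotimes_{\Tcal'}\Phi^+_j-\bigotimes_{\Tcal'}D_j=\sum_{\emptyset\neq S\subseteq\Tcal'}\bigl(\bigotimes_{S}A_j\bigr)\otimes\bigl(\bigotimes_{\Tcal'\setminus S}D_j\bigr)$ for one direction, and expanding $\bigotimes_{\Tcal'}A_j$ by inclusion–exclusion and using the conditions on smaller subsets for the other. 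This is routine and all the ingredients are already in your write-up, but it should be spelled out rather than presented as an immediate regrouping.
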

\begin{proof}
It is straightforward to see that a comb of the form of Eq.~\eqref{eqn::classLem} satisfies Eq.~\eqref{eqn::Class_Comb_prime}, whenever $\chi$ fulfills Eq.~\eqref{eqn::chi_requirement}, and thus yields $K$-classical statistics. Conversely, \emph{any} comb $C_K$ on $K$ times can be written as $C_K = \widetilde{C}_K^{\mathrm{Cl.}} + \chi\,$, where $\widetilde{C}_K^{\mathrm{Cl.}}$ is of the form of Eq.~\eqref{eqn::classicalComb} for some $\Pprob_K$ and $\tr[(P_{x_K}\otimes \cdots \otimes P_{x_1}) \chi] =0$~\cite{milz_reconstructing_2018}. When measuring (in the computational basis) at $K$ times, the resulting joint probability distribution is given by $\Pprob_K$. As, by assumption, the process is classical, summation over outcomes obtained at any time $C_K$ is defined on must yield the same statistics as letting the comb act on the identity channel at this time. As this has to hold for any collection of times in $\Tcal$, $\chi$ has to satisfy the additional requirements given by Eq.~\eqref{eqn::chi_requirement}. 
\end{proof}
Intuitively, Eq.~\eqref{eqn::chi_requirement} ensures that the action of $\Delta_j$ cannot be detected at any point in time by means of measurements in the classical basis. Therefore Lemma~\ref{lem::coherence_non_Markov} is equivalent to Theorem~\ref{thm::KclassComb}. However, the former provides an explicit constraint on the structure of such combs that contain coherences that can be present in the process without making the resulting statistics non-classical. 

Indeed, if $\chi = 0$, then the corresponding comb $C_K$ is diagonal in the classical product basis and, as such, cannot create coherences and destroys any kind of coherences that could be fed into the process (e.g., by performing coherence creating operations at some time). On the other hand, if $\chi\neq 0$ and the comb contains off-diagonal terms (with respect to the classical basis), then coherences can be created over the course of the process. However, if $\chi$ satisfies Eq.~\eqref{eqn::chi_requirement}, then these coherences---or rather the invasiveness of the completely dephasing map---cannot be detected at any later time by measurements in the classical basis. This understanding of classical non-Markovian combs mirrors the intuition we had built in the Markovian setting for the case of NCGD dynamics. Consequently, Lemma~\ref{lem::coherence_non_Markov} fully characterizes the relation between coherences and the non-classicality of a process (see Fig.~\ref{fig::Venn} for a graphical representation of the different sets of processes we consider).
\begin{figure}
    \centering
    \includegraphics[width=0.75\linewidth]{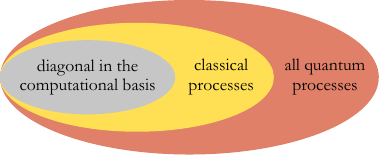}
    \caption{{\bf{Nested set of processes.}} Processes that cannot produce coherence and destroy any coherence that is fed in (i.e., their Choi states are diagonal in the computational basis) form a strict subset of processes that appear classical when sequentially probed in the computational basis. Both of these sets, as well as the set of all quantum processes, are convex.}
    \label{fig::Venn}
\end{figure}

Somewhat unsurprisingly, the above lemma implies that combs leading to classical processes are of measure zero in the set of all combs: while any comb can be written in the form of Eq.~\eqref{eqn::classLem}, Eq.~\eqref{eqn::chi_requirement} places further linear constraints on the $\chi$ term, which must be satisfied by combs leading to classical processes, but not by general combs. The set of combs leading to classical processes is thus confined to a lower dimensional subset, implying that it is of zero measure (with respect to any reasonable measure in the set of all non-Markovian combs). This fact falls in line with the intuition built above; for a randomly chosen comb, the action of a completely dephasing map in a given basis will generally be detectable. Furthermore, the vanishing volume of classical combs within the set of all combs mirrors the analogous property in the spatial setting: There, quantum states that display no discord are of measure zero in the set of all bipartite quantum states~\cite{ferraro_almost_2010} (the relation between quantum discord and classicality of processes is discussed in detail in Sec.~\ref{sec::DiscClass}).

 In the non-Markovian case, the characterization of classical processes comes at a price. In order to decide on the $K$-classicality of a given process, it is no longer sufficient to investigate propagators between pairs of times, but rather the full part of the comb $C_K$ that is relevant for sequential projective measurements must be known, due to the importance of multi-time effects. However, this behavior is to be expected, as can already be seen in the case of classical stochastic processes: the full characterization of a non-Markovian process only happens on the level of the full joint probability distribution $\Pprob_K$, and not by way of transition probabilities between adjacent times only. Despite the additional complexity brought in by the presence of memory, as we will see in the following section, measures for classicality that are both experimentally and computationally accessible can be derived based on the characterization of classical processes we have provided.

\subsection{Quantifying non-classicality}
\label{sec::Quant_non_class}
As we have seen above, the set of combs leading to classical processes is of measure zero in the set of all combs. Importantly though, this fact does not render our original definition of classicality meaningless, but rather---in conjunction with Lemma~\ref{lem::coherence_non_Markov}---allows for the derivation of a meaningful measure of non-classicality that is experimentally accessible and can be formulated by means of a linear program \textbf{(LP)}.

More specifically, we can exploit the characterization of classical processes provided by Eqs.~\eqref{eqn::classLem} and~\eqref{eqn::chi_requirement} in order to define a measure of non-classicality with a clear operational meaning. Such a measure not only classifies whether or not a comb is non-classical, but also quantifies the degree to which it is. This is crucial when assessing whether any potential non-classicality arises from inherently quantum features of the experiment or from experimental errors. In order to clarify its operational interpretation, we formulate our measure in the context of a game with two adversaries, Alice and Bob, and one referee, Rudolph. The task of Alice is to construct a classical stochastic process that is a good model for a comb she receives from Rudolph. The task of Bob is to design a test that distinguishes this model from the original comb. Let $C$ be the given comb in its Choi representation (i.e., a positive operator with some additional causality constraints). The game then proceeds as follows:

\indent \textbf{0)} Rudolph begins with a given comb $C$ and sends its description to both Alice and Bob.\\
\indent \textbf{A)} Alice prepares a classical process $C^{\textup{Cl.}}$ and sends it to Rudolph.\\
\indent \textbf{R1)} Rudolph sends the description of the classical process $C^{\textup{Cl.}}$ prepared by Alice to Bob.\\
\indent \textbf{B1)} Bob prepares a testing sequence $\{T_i(\vec{x})\}_{\vec{x}}$ and sends it to Rudolph.\\
\indent \textbf{R2)} Rudolph takes randomly either $C$ or $C^{\textup{Cl.}}$ and applies the testing sequence chosen by Bob. He yields an outcome $\vec{x}$, which he announces.\\
\indent \textbf{B2)} Bob announces whether the comb is $C$ or $C^{\textup{Cl.}}$\\
\indent \textbf{R3)} Rudolph announces whether Bob is correct or not and hence who wins the game.

Let us recall at this point that our definition of classicality relies exclusively on the statistics obtained by probing the process with projective measurements in fixed, orthonormal bases. Therefore, to only probe what is relevant within our framework, we restrict the testing sequences that Bob is allowed to prepare to only involve such measurements, i.e., the testing sequence must be of the form $T_i(\vec{x})=\bigotimes_{t_j\in \tau_i}\Phi^{+}_j \bigotimes_{t_k\in \tau_i^c} P_{x_k}$.
The figure of merit that we are interested in is the probability for Bob to win if both players play optimally. This is an operational quantity describing how well said comb can be distinguished from its best classical approximation, given that one has only access to the aforementioned restricted testing strategies that can be used to probe classicality. Making use of the arguments of Lemma~\ref{lem::coherence_non_Markov} to simplify the structure of the classical combs, in Appendix~\ref{measure_derivation} we derive this quantity; here we simply present the main results. 

The probability for Bob winning the game is given by:

\begin{align}
\label{eqn::success_distinction}
\mathbbm{P}_B(C)=\frac{1}{2}\left(1+ M(C)\right),
\end{align}
with $M(C)$ being one half of the solution of
\begin{alignat}{2}
&\text{minimize:}\ &&\max_{i} \sum_{\vec{x}} \left|\tr[(C^{\textup{Cl.}}-C) T_i(\vec{x})] \right| \label{eq:minmax} \\
&\text{subject to:}\  &&	C^{\textup{Cl.}}=\sum_{y_K,\ldots,y_1} \mathbbm{P}_K(\vec{y}) P_{y_K}\otimes\cdots \otimes P_{y_1}, \nonumber \\
& &&\mathbbm{P}_K(\vec{y})\; \mathrm{joint\, prob.\, distribution}. \nonumber
\end{alignat}
This can be transformed into the following linear program (and hence can be solved efficiently numerically; the error can be estimated and one can compute the optimal $C^{\textup{Cl.}}$ and $T_i(\vec{x})$ \cite{boyd2004convex}):
\begin{alignat}{2}
&\text{minimize:}\ && a \label{eqn::linearProgMeas} \\
&\text{subject to:}\ &&\sum_j b_{ij}-a\le0, \nonumber \\
& &&\sum_k p_k \alpha_{ijk}-\beta_{ij} -b_{ij}\le0,  \nonumber \\
& &&-\sum_k p_k \alpha_{ijk}+\beta_{ij} -b_{ij}\le0, \nonumber \\
& &&\sum_k p_k -1=0, \nonumber \\
& &&	p_k\ge0, a\ge0, b_{ij}\ge0,\nonumber
\end{alignat}
where we have defined $\alpha_{ijk}:=\tr\left[ (P_{y_K(k)}\otimes\cdots \otimes P_{y_1(k)}) T_i(\vec{x}_j) \right]$, $\beta_{ij}:=\tr \left[ C T_i\left(\vec{x}_j\right)\right]$ and $p_k:=\mathbbm{P}_K(\vec{y}(k))$. 
For completeness we also give the dual program, which by definition turns a minimization into a maximization. 
The dual problem is useful to give bounds on the found solution, to solve the problem, and potentially to find different interpretations of the quantity in question. 
The dual of the program above can be formulated as:
\begin{alignat*}{2}
&	\text{maximize:}\ &&  Z\\
& 	\text{subject to:}\ && Z \leq \sum_{ij} \left(\alpha_{ijk}-\beta_{ij}\right)\left(2 Y_{ij}-X_{i}\right)\quad\forall \ k, \\
 & && \sum_i X_i=1, \\
& &&X_i, Y_{ij}, X_i-Y_{ij} \ge0,  \\
& && Z \in \mathbbm{R}.
\end{alignat*}
It follows directly from the interpretation as the solution of the game defined above that the quantity $M(C)$ is faithful, i.e., its value is zero if the statistics is classical, and that it measures how difficult it is to simulate the given comb by a classical stochastic process. 
As such, it provides us with a properly motivated quantifier of the degree of non-classicality of quantum processes, which describes how well the obtained statistics can be simulated by a classical process.

The full evaluation of $M(C)$ would, in principle, require testing over every sequence of projective measurements
(to compute the maximization in Eq.~\eqref{eq:minmax}) and the comparison with every classical
multi-time probability distribution (to compute the minimization in Eq.~\eqref{eq:minmax}). Practically, it is then useful to consider bounds to this quantifier of non-classicality, which can be accessed via a limited number of measurements. In particular, lower bounds can be obtained by using a subset of measurement sequences 
$T_i(\vec{x})$ (in a similar way as to how one can use entanglement witnesses to construct bounds on meaningful entanglement measures~\cite{horodecki_1996,Audenaert_2006,Eisert_2007,Guehne_2007}). If such a lower bound is non-zero, this is already sufficient to conclude that the comb is non-classical.
On the other hand, upper bounds can be attained by restricting our consideration to some classical combs. As a relevant example, for any given comb $C$ one can focus on a single classical comb $\overline{C}^{\textup{Cl.}}$, 
realized by applying a dephasing map
before and after each measurement. 
This yields the statistics resulting from the marginals of the joint statistics one would obtain by measuring at every time.
Note that, while this specific choice of a classical comb only provides us with an upper bound on our measure defined above, it is nonetheless faithful.
In the simplest case where only two times are involved, $K=2$, one can easily see that by
replacing $C^{\textup{Cl.}}$ with $\overline{C}^{\textup{Cl.}}$ in Eq.~\eqref{eq:minmax}, we derive the following upper bound
\begin{equation}\label{eq:boundMeasure}
    M(C) \leq \sum_{x_2} \left|\Pprob(x_2) - \sum_{x_1} \Pprob(x_2,x_1) \right|.
\end{equation}
Such a `natural' quantifier of non-classicality has already been used 
to investigate coherence properties in transport phenomena~\cite{li2012}
and, more recently, to control the departure from any classical random walk via the manipulation of quantum
coherence in a time-multiplexed quantum walk experiment~\cite{smirne_experimental_2019}. Let us note at this point that the experimental data that was used in Ref.~\cite{smirne_experimental_2019} to evaluate the right hand side of Eq.~\eqref{eq:boundMeasure} allows one to calculate $M(C)$ too. 
Hence, $M(C)$ can be evaluated without further acquisition of experimental data, which demonstrates the applicability of our measure to current experiments. In addition, our measure---or lower bounds thereof---can be employed to investigate more complex experiments with $K>2$.

\section{Dynamical properties of \texorpdfstring{$K-\text{CLASSICAL}$}{} processes}
\label{sec::DiscClass}
Theorem~\ref{thm::KclassComb} and Lemma~\ref{lem::coherence_non_Markov} provide a full characterization of processes that yield classical statistics. Together, they allow for the derivation of classically testable quantifiers of non-classicality. For further clarification, and in order to connect non-classical processes to the respective \emph{underlying} evolution, we now discuss some concrete cases of underlying non-Markovian dynamics that lead to classical statistics. Moreover, we will connect the classicality of temporal processes to vanishing quantum discord in the joint state of the system and the environment.

\subsection{Discord and Classicality}
Recall that in the Markovian case, the classicality of a process can be decided solely in terms of propagators between pairs of times that are defined on the system of interest alone and it is linked to the ability of those maps to create and detect coherences. In particular, the set of dynamics that does not create coherences on the level of the system is contained in the set of maps that lead to classical statistics~\cite{smirne_coherence_2019}. As we have seen above, this fails to hold in the non-Markovian case, where, even if the state of the system is diagonal in the computational basis at all times, i.e., no coherence on the system level is ever generated, the statistics might not satisfy the Kolmogorov conditions.  

As soon as memory effects play a non-negligible role, it is \textit{both} the coherences of the system state \textit{and} the correlations between the system and its environment that can lead to non-classical behavior. It is thus desirable to derive a more explicit relation between coherence, correlations and classicality. 

To do so, first recall that while the completely dephasing map leaves the system unchanged if the state of the system is classical at all times, it does not necessarily leave the overall system-environment state, which, at every time $t_j$ contains all relevant memory, invariant. Specifically, in this case we have $\Delta_j[\rho^s_{t_j}] = \Ical[\rho^s_{t_j}] \ \forall \ t_j$ but not necessarily $\Delta_j \otimes \Ical_j^e [\eta^{se}_{t_j}] = \Ical^{se}[\eta^{se}_{t_j}] = \forall \ t_j$. While the latter is not necessary for the satisfaction of the Kolmogorov conditions, it is sufficient:

\begin{lemma}
\label{lem::Discord}
Let $\{p_{t_i}^m\}$ be sets of probabilities that sum to unity, $\{\Pi_{j}^m \}$ orthogonal projectors (not necessarily rank-$1$) on the system that are diagonal in the computational basis, and $\{\xi_j^m\}$ states on the environment. If at all times $t_j\in \Tcal$, with $|\Tcal| = K$, the system-environment state is of the form 
\begin{gather}
\label{eqn::basis_discord}
 \eta^{se}_{t_j} = \sum_m p_{t_j}^m\, \Pi_{j}^m \otimes \xi_j^m\, ,
\end{gather}
then the underlying process is $K$-classical, i.e., it satisfies the Kolmogorov conditions of Eq.~\eqref{eqn::Kolmo_cond}. 
\end{lemma}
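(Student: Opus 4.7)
The plan is to verify the Kolmogorov consistency conditions of Eq.~\eqref{eqn::Kolmo_cond} directly by exploiting the classical-quantum structure of $\eta^{se}_{t_j}$ at each measurement time. Since $\Delta_j=\sum_{x_j}\Pcal_{x_j}$, marginalizing the joint probability of Eq.~\eqref{eqn::Gen_Prob} over $x_j$ amounts to replacing $\Pcal_{x_j}\otimes\Ical^e$ by $\Delta_j\otimes\Ical^e$ in that expression, whereas omitting the measurement altogether replaces it by $\Ical\otimes\Ical^e$. The task therefore reduces to showing that, at every $t_j\in\Tcal$, dephasing the system component leaves the joint state invariant, i.e., $(\Delta_j\otimes\Ical^e)[\eta^{se}_{t_j}] = \eta^{se}_{t_j}$.

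This invariance is immediate from the hypothesis of Eq.~\eqref{eqn::basis_discord}: each $\Pi^m_j$ is diagonal in the computational basis, hence $\Delta_j[\Pi^m_j]=\Pi^m_j$, and by linearity
\begin{equation*}
(\Delta_j\otimes\Ical^e)[\eta^{se}_{t_j}] = \sum_m p^m_{t_j}\,\Pi^m_j\otimes \xi^m_j = \eta^{se}_{t_j}.
\end{equation*}
Iterating this substitution at every $t_j$ belonging to an arbitrary subset $\Tcal'\subseteq\Tcal$ then shows that summing the joint probability over the outcomes at those times reproduces the joint probability for the complement $\Tcal\setminus\Tcal'$ with no measurements inserted at $\Tcal'$, which is exactly the Kolmogorov condition. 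Equivalently, one can phrase the same argument at the level of the comb and verify Eq.~\eqref{eqn::Class_Comb} of Theorem~\ref{thm::KclassComb}.

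The subtle point to keep track of in the write-up is that, inside the multi-time probability formula, the dephasing at $t_j$ acts on the system-environment state reached after all previously inserted operations at $t_{j-1},\ldots,t_1$, not on an unconditional free-evolution state. The hypothesis of the lemma, however, asserts that the classical-quantum form of Eq.~\eqref{eqn::basis_discord} holds at every $t_j$ throughout the evolution, so the invariance above applies at the relevant state in each case. With this caveat kept in mind, there is no further obstacle, and the proof reduces to the single linearity calculation displayed above together with the diagonality of the $\Pi^m_j$ in the computational basis.
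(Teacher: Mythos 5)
Your proof is correct and follows essentially the same route as the paper's: both reduce the Kolmogorov conditions to the observation that $(\Delta_j\otimes\Ical^e)[\eta^{se}_{t_j}]=\eta^{se}_{t_j}$ for states of the form of Eq.~\eqref{eqn::basis_discord}, so that marginalizing over the outcome at $t_j$ is indistinguishable from not measuring. Your explicit remark that the hypothesis must be read as applying to the conditional system-environment states reached along the measurement sequence is a point the paper's proof leaves implicit, but the argument is the same.
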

Note that we assume the computational basis to be the same at every time, so that the additional subscript of $\Pi_j^m$ is somewhat superfluous and merely added to clearly signify the respective time at which the state is defined. In principle, one could define classicality with respect to projective measurements in different bases at each time $t_j$, in which case the additional subscript of $\Pi_j^m$ would denote projectors in different bases, and the above lemma would still hold. Analogously, all other results of this paper can straighforwardly be adapted to such more general probing schemes, but for simplicity, we understand classicality with respect to a \textit{fixed} basis that does not change in time (the only exception being Sec.~\ref{sec:genuinelyquantumprocess}, where we will extend the setting to allow for arbitrary measurement schemes in order to examine the nature of genuinely quantum processes.). Naturally, the environment states $\xi_j^m$ in Eq.~\eqref{eqn::basis_discord} can be diagonal in \textit{arbitrary} bases, as it is only invasiveness with respect to measurements on the system that we are concerned with. 

Before we prove Lemma~\ref{lem::Discord}, it is insightful to discuss the relation between the concept of classical temporal processes and the classical spatial system-environment correlations it introduces. Firstly, recall the full system-environment state at each time encapsulate all memory effects. Concretely, in contrast to the state of the system alone, they contain all information that is relevant to predict the future statistics. In particular, for states of the form given in Eq.~\eqref{eqn::basis_discord}, at each time $t_j$, this memory is stored in the probabilities $\{p_{t_j}^m\}$ and the enviromnment states $\{\xi_j^m\}$. States of said form have vanishing quantum discord~\cite{ollivier_quantum_2001, Zurek_2000, henderson_classical_2001, datta_condition_2010, modi_classical-quantum_2012}, i.e., they do not display any genuinely \textit{quantum} correlations between the system and the environment. 
For a general zero-discord state, the set $\{\Pi_{j}^m\}$ in Eq.~\eqref{eqn::basis_discord} could be any set of mutually orthogonal projectors, and the correlations between the system and the environment are considered to be classical, since there exists a measurement on the system with perfectly distinguishable outcomes that overall leaves the total state undisturbed~\cite{ollivier_quantum_2001,modi_classical-quantum_2012} (see also the proof below). 

As we only consider measurements on the system in a fixed basis in our setting, here, vanishing discord at all times does not yet force the resulting statistics to be classical; rather, the discord must vanish in the correct basis, i.e., the one in which the experimenter's measurements act.  While discord is often considered as a basis independent quantity---obtained by a minimization procedure over all possible measurement scenarios~\cite{modi_classical-quantum_2012}---here, and throughout the remainder of this article, we will always consider its basis dependent formulation~\cite{Zurek_2000, ollivier_quantum_2001,henderson_classical_2001, modi_classical-quantum_2012, yadin_quantum_2016, Egloff2018} and call states of the form in Eq.~\eqref{eqn::basis_discord} discord-zero \textit{with respect to the classical basis}. That is, whenever we consider a state to be of zero discord, we will always implicitly mean that it can be represented as per Eq.~\eqref{eqn::basis_discord} with the projectors being diagonal in the classical basis of the measurements. Importantly, this basis dependence mirrors the basis dependence of coherence, which is also always defined with respect to a fixed classical basis.
\begin{proof}
For states of the form in Eq.~\eqref{eqn::basis_discord}, the completely dephasing map $\Delta$ on the system has the same effect as the `do-nothing' identity channel $\Ical$, i.e.,
\begin{align}
    &\Delta_j\otimes \Ical_j^e\left[\sum_m p_{t_j}^m \Pi_{j}^m \otimes \eta_j^m\right] \\
    \notag&\phantom{asdfaaaaaaaa}= \Ical^s_j \otimes \Ical^e_j\left[\sum_m p_{t_j}^m \Pi_{j}^m \otimes \eta_j^m\right]\, .
\end{align}
Consequently, if the system-environment state is of this form at all times, the resulting statistics satisfy the Kolmogorov conditions.
\end{proof}

It is insightful to re-examine Example~\ref{ex::class_state} in light of Lemma~\ref{lem::Discord}; there, we provided an example of a process for which the state of the system never displayed coherence, but nonetheless led to non-classical statistics. Consequently, the system-environment state must have non-zero (basis dependent) discord over the course of the dynamics:

\begin{manualtheorem}{\hspace{-.18cm}\ref*{ex::class_state}$^\prime$}
\label{ex::classical_prime}
As we discuss in Appendix~\ref{app::Absence}, in Example~\ref{ex::class_state}, the system-environment state before the first measurement $(t<t_1)$ is given by
		\begin{align}
		    \rho_{se}(t)=\frac{1}{4}\sum_{i, j \in \{-,+\}} &\ketbra{i}{j} \otimes 
		    \left( 
		        i\cdot j         \ketbra{\varphi^-(t)}{\varphi^-(t)}
		    \right.
		    \nonumber\\&
		        +i (2 \alpha -1) \ketbra{\varphi^-(t)}{\varphi^+(t)}
		        \nonumber\\&
		    +j(2 \alpha -1)\ketbra{\varphi^+(t)}{\varphi^-(t)}
		    \nonumber\\&
    		\left.
    		   + \ketbra{\varphi^+(t)}{\varphi^+(t)}
		    \right)
		\end{align}
		where both
		\begin{align}
			\ket{\varphi^+(t)}= \int_{-\infty}^{\infty} d p f(p) e^{ipt} \ket{p}
		\end{align}
		and
		\begin{align}
			\ket{\varphi^-(t)}= \int_{-\infty}^{\infty} d p f(p) e^{-ipt} \ket{p}
		\end{align}
		are valid quantum states. This state has zero discord with respect to the eigenbasis of $\hat \sigma_x$ iff 
		\begin{align}\label{eq:discordZeroCond1}
			\ketbra{\varphi^+(t)}{\varphi^+(t)}-\ketbra{\varphi^-(t)}{\varphi^-(t)} =0
		\end{align}
		and either $\alpha=1/2$ or
		\begin{align}
			\ketbra{\varphi^+(t)}{\varphi^-(t)} -\ketbra{\varphi^-(t)}{\varphi^+(t)} =0.
		\end{align}
		In the case of the Lorentzian distribution, it follows from 
		\begin{align}
			\braket{\varphi^-(t)|\varphi^+(t)}=k(t)= e^{-2 \Gamma |t|}
		\end{align}
		that Eq.~(\ref{eq:discordZeroCond1}) cannot be satisfied for $t>0$, i.e., basis dependent discord is created during the evolution (and subsequently destroyed by the measurement at $t_1$). Since the state of the system itself is not altered by the measurement, but the probabilities to obtain $\pm$ at a later time are (as has been discussed in Ref.~\cite{smirne_coherence_2019}), the discord necessarily must be converted into populations by the following portion of evolution. Below, we will examine this connection between the creation and detection of basis dependent discord and non-classicality in a rigorous manner.

\end{manualtheorem}

If a state is of zero discord, it displays \textit{neither} coherences on the level of the system \textit{nor} non-classical correlations between the system and the environment, which is, to reiterate, sufficient for the classicality of the resulting process, but \textit{not} necessary. In this sense, Lemma~\ref{lem::Discord} is a direct extension of the analogous statement in the Markovian case; there, the absence of coherence in the system state at all times is also sufficient but not necessary for the process to be classical. Put differently, if all of the individual maps making up a Markovian dynamics are maximally incoherent operations \textbf{(MIO)}~\cite{aberg_catalytic_2014,streltsov2017colloquium}, i.e., they map all incoherent states onto incoherent states, then the resulting dynamics satisfies Kolmogorov conditions. However, MIO operations are a strict subset of NCGD maps~\cite{smirne_coherence_2019}.

While somewhat intuitive, the above lemma sheds light on the properties that a general non-Markovian dynamics has to satisfy in order to appear classical. For system-environment states that are discord-zero  in the computational basis (with respect to the system), a measurement on the system in the computational basis is non-invasive, i.e., it leaves the \textit{full} state unchanged (and not just the system state, as it would be the case if the system state is incoherent at all times). For comprehensiveness, in Appendix~\ref{app::discord_zero} we provide a characterization of non-discord creating processes in terms of their dynamical building blocks.

In general, the absence of discord at all times is not necessary for a process to appear classical. However, what is necessary is that at no time can there be coherences \textit{or} non-classical system-environment correlations that can be detected by means of measurements in the computational basis at a later time. This mirrors the requirement for classical processes in the Markovian case, where the individual propagators have to be NCGD, i.e., the propagators must be such that they cannot create coherences whose existence can be picked up at a later time by means of measurements in the classical basis; yet, it is still possible that the individual maps create coherences~\cite{smirne_coherence_2019}. NCGD maps are the fundamental building blocks that constitute classical Markovian combs. In what follows, utilizing the connection of classicality and discord discussed above, we will provide a characterization of the building blocks that make up classical \textit{non-Markovian} processes.

\subsection{Non-Discord-Generating-and-Detecting Dynamics and Classical Processes} 

In the Markovian case, classicality of a process can be decided on the level of CPTP maps, since in the absence of memory all higher order probability distributions can be obtained from the system state $\rho_{t_1}$ and the two-time propagators $\{\Lambda_{t_j,t_{j-1}}\}$. It suggests itself to employ this intuition in the non-Markovian case, as every non-Markovian process corresponds to a Markovian one if enough additional degrees of freedom are taken into account.

In detail, as we discussed, every non-Markovian process can be dilated to a concatenation of a (potentially correlated) system-environment state and unitary total dynamics~\cite{chiribella_theoretical_2009, pollock_non-markovian_2018}, interspersed by the operations of the experimenter on the system alone that are performed at times $\{t_j\}$ (see Fig.~\ref{fig::general_comb} for reference). If the experimenter had access to all the degrees of freedom necessary for the dilation, then the underlying process would appear Markovian, and the results of Ref.~\cite{smirne_coherence_2019} could be applied on the system-environment level for the characterization of a classical process. Here, using the Markovian case as a guideline, we aim for a similar characterization of classical processes when only the system degrees of freedom can be accessed.

To compactify notation and simplify later discussions, we can equivalently consider a general open process as a concatenation of CPTP maps that act on both the system and the environment, interspersed by the operations on the system alone. This way of describing general open system dynamics is simply a notational compression of the general case with global unitaries that allows for an easier connection to the Markovian case, but does not lead to a different set of possible combs. In what follows, we will denote these CPTP maps by $\Gamma_{t_j,t_{j-1}}$ to clearly distinguish them from the memoryless scenario (where the respective maps $\Lambda_{t_j,t_{j-1}}$ act only on the system),
so that Eq.~\eqref{eqn::Gen_Prob} generalizes to
\begin{align}
\label{eq:extra5}
    &\Pprob_n(x_n,\dots,x_1) \\ 
    \notag &= \tr\left[(\Pcal_{x_n}\otimes \Ical^e)\circ \Gamma_{t_n,t_{n-1}} \circ \cdots \circ (\Pcal_{x_1}\otimes \Ical^e) [\eta^{se}_{t_1}]\right]\,.
\end{align}
 Moreover, for the sake of generality and to ease the comparison with the Markovian case, we allow for the state before the first measurement to be evolved from some other state at an initial reference time $t_0 \leq t_1$, i.e., 
\begin{equation}\label{eq:initstdis}
    \eta^{se}_{t_1} = \Gamma_{t_1,t_{0}}\eta^{se}_{t_0};
\end{equation}
of course, if the first measurement occurs at the initial time, then $t_1=t_0$.

On this dilated level, the dynamics is Markovian---there are no additional external `wires' that can carry memory forward---and all higher order joint probability distributions could be built up when the individual CPTP maps $\{\Gamma_{t_j,t_{j-1}}\}$ (and the initial system-environment state) are known. With this, we can define \textit{non-discord-generating-and-detecting} (\textbf{NDGD}) dynamics: 
\begin{definition}[NDGD dynamics]
\label{def::NDGD}
A global system-environment dynamics with CPTP maps $\{\Gamma_{t_j,t_{j-1}}\}_{j=1}$ is called non-discord-generating-and-detecting (NDGD) if it satisfies \begin{align}
\label{eqn::NDCG}
&\Delta_{j+1} \circ \Gamma_{t_{j+1},t_j} \circ \Delta_j \circ \Gamma_{t_j,t_{j-1}} \circ \Delta_{j-1}
\\ \notag&= \Delta_{j+1} \circ \Gamma_{t_{j+1},t_j} \circ \Ical_j \circ \Gamma_{t_j,t_{j-1}} \circ \Delta_{j-1}
\end{align}
for all $\{t_{j-1},t_j,t_{j+1}\}$, where the maps $\Gamma_{t_k,t_{k-1}}$ act on the system and the environment, while $\Delta_k$ act on the system alone.

\end{definition}
We provide a graphical representation of this definition in Fig.~\ref{fig::NDGD}.
\begin{figure}
 \centering
 \includegraphics[width=1\linewidth] {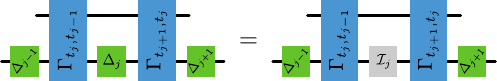}
 \caption{{\bf{NDGD system-environment dynamics.}} From the perspective of a classical observer performing projective measurements in a fixed basis, the identity map at any time $t_j$ cannot be distinguished from the completely dephasing map. Any discord (with respect to the classical basis) that is present in the system-environment state, and/or created by the system-environment CPTP maps, cannot be detected by such a classical observer.}
 \label{fig::NDGD}
\end{figure}

Formally, Definition~\ref{def::NDGD} is equivalent to the definition of NCGD dynamics, with the difference that the involved intermediary maps between times are now the system-environment maps, instead of the maps  $\{\Lambda_{t_j,t_{j-1}}\}$ acting on the system alone in the Markovian case.

Analogously to the case of NCGD, an NDGD dynamics cannot create discord (with respect to the classical basis) that can be detected at the next time (and, as such, at \textit{any} later time) by means of classical measurements. Or, equivalently, an experimenter who can only perform measurements in the classical basis cannot distinguish between a completely dephasing map and an identity map implemented at any time in $\Tcal$. As such, it provides the natural extension of NCGD to the non-Markovian case. We then have the following theorem: 
\begin{theorem}[NDGD dynamics and classicality]
\label{thm::NDCG}
 Consider a general, possibly non-Markovian, process on $\Tcal$, with $|\Tcal| = K$, obtained
 from a system-environment dynamics as in Eqs.~\eqref{eq:extra5} and~\eqref{eq:initstdis}; then the process is $K$-classical if the initial system-environment state $\eta^{se}_{t_0}$ and the set $\{\Gamma_{t_j,t_{j-1}}\}$ of maps that corresponds to it are zero discord and NDGD, respectively. 
\end{theorem}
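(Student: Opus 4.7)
The plan is to establish Kolmogorov consistency (Eq.~\eqref{eqn::Kolmo_cond}) by showing that summing over the outcome $x_j$ at any single time $t_j$ yields the same multi-time probabilities as the ones computed via Eq.~\eqref{eq:extra5} on the reduced time set $\Tcal\setminus\{t_j\}$; iterating this step then delivers $K$-classicality for every subset of times.

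First, I would record two absorption identities and one hypothesis-based one that together guarantee that every intermediate dephasing map is flanked by further dephasing maps. The identities are: (i) a rank-one sharp measurement absorbs dephasing on either side, $\Pcal_{x_k} = \Pcal_{x_k}\circ\Delta_k = \Delta_k\circ\Pcal_{x_k}$, because $\Pcal_{x_k}(\rho)=\langle x_k|\rho|x_k\rangle\,\ketbra{x_k}{x_k}$ only sees populations in the classical basis and outputs a state already diagonal in it; (ii) $\sum_{x_k}\Pcal_{x_k}=\Delta_k$; and (iii) the zero-discord hypothesis on $\eta^{se}_{t_0}$ gives $(\Delta_0\otimes\Ical^e)[\eta^{se}_{t_0}]=\eta^{se}_{t_0}$, so I may freely insert a $\Delta_0$ at the start of the sequence in Eq.~\eqref{eq:extra5}.

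Next, I would marginalize Eq.~\eqref{eq:extra5} over a single $x_j$. The case $j=K$ is immediate, since $\sum_{x_K}\Pcal_{x_K}=\Delta_K$ is trace preserving and therefore disappears under the final trace. For $1\leq j\leq K-1$ I would use (i)--(iii) to rewrite the relevant middle portion of the sequence as
\begin{align*}
&\Pcal_{x_{j+1}}\!\circ\Gamma_{t_{j+1},t_j}\!\circ\Delta_j\!\circ\Gamma_{t_j,t_{j-1}}\!\circ\Pcal_{x_{j-1}} \\
&\quad = \Pcal_{x_{j+1}}\!\circ\Delta_{j+1}\!\circ\Gamma_{t_{j+1},t_j}\!\circ\Delta_j\!\circ\Gamma_{t_j,t_{j-1}}\!\circ\Delta_{j-1}\!\circ\Pcal_{x_{j-1}},
\end{align*}
where for $j=1$ the leftmost $\Delta_{j-1}=\Delta_0$ is supplied by (iii) acting on $\eta^{se}_{t_0}$. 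The NDGD identity of Eq.~\eqref{eqn::NDCG} then permits the replacement $\Delta_j\mapsto\Ical_j$, and composing the two consecutive system--environment maps into $\Gamma_{t_{j+1},t_{j-1}}$ (using their dilation as pieces of a single unitary system--environment evolution) gives precisely the expression for the joint probability on $\Tcal\setminus\{t_j\}$, i.e. the Kolmogorov marginal.

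Finally, I would iterate: the resulting expression after one marginalization has exactly the same structural form as Eq.~\eqref{eq:extra5} on the smaller time set, with a zero-discord initial state still in place and dephasings still available on both sides of every remaining $\Gamma$. Hence the same argument removes a second time, and so on, covering all subsets up to size $K$. The one subtle step will be to justify that the NDGD property carries over to the composed maps $\Gamma_{t_{j+1},t_{j-1}}$ that appear after a marginalization; I would argue that these composed maps are themselves system--environment CPTP evolutions with the same dilation structure, so Definition~\ref{def::NDGD}, being phrased for arbitrary triples of times, applies to them directly. Once this closure under composition is in hand, the remainder of the argument is a straightforward algebraic manipulation driven by the three absorption identities and a single use of NDGD per marginalized time.
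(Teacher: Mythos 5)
Your proposal is correct and follows essentially the same route as the paper's proof in Appendix~G: the same absorption identities $\Pcal_{x_k}=\Pcal_{x_k}\circ\Delta_k=\Delta_k\circ\Pcal_{x_k}$ and $\sum_{x_k}\Pcal_{x_k}=\Delta_k$, a single application of the NDGD identity to trade $\Delta_j$ for $\Ical_j$, and the zero-discord initial state supplying the leftmost $\Delta_0$ for the marginal over the first measured time. You are in fact slightly more careful than the paper in making explicit that iterating down to smaller subsets requires the NDGD condition of Definition~\ref{def::NDGD} to hold for arbitrary (not just consecutive) triples of times, i.e., for the composed maps $\Gamma_{t_{j+1},t_{j-1}}$ — which is the intended reading of the definition and mirrors the NCGD condition of Eq.~\eqref{eqn::NCGD}.
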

The proof of this theorem is provided in Appendix~\ref{app::proof_NDGD}. It relies on the fact that measurements in the classical basis commute with the completely dephasing map and proceeds along the same lines as the analogous proof for NCGD dynamics in the Markovian setting provided in Ref.~\cite{smirne_coherence_2019}. Importantly, though, it is \textit{not} a necessity for classical statistics that the corresponding maps are NDGD, as we will discuss below. 

In order to further elucidate the relation of discord and classicality for general quantum stochastic processes, it is insightful to discuss the proximity of Theorem~\ref{thm::NDCG} to the corresponding results in Ref.~\cite{smirne_coherence_2019} for the Markovian case. Theorem~\ref{thm::NDCG} establishes the importance of the role of quantum discord for the classicality of non-Markovian processes. In the memoryless case, it is coherence---or the impossibility of detection thereof---that makes a process classical. Here, this role is played by discord, with the only difference being that instead of describing the process in terms of maps that are solely defined on the system of interest, we are forced to dilate the process to the system-environment space, where it is rendered Markovian. Consequently, the classicality of a process cannot be decided based on the master equation or dynamical maps that describe the evolution of the system alone (as has already been pointed out in Ref.~\cite{smirne_coherence_2019}). However, given, e.g., a Hamiltonian that generates the corresponding system-environment dynamics, whether or not the resulting process can be simulated classically can be decided by checking the validity of Eq.~\eqref{eqn::NDCG}. 

It would be desirable if NDGD dynamics were a sufficient \textit{and} necessary criterion for the classicality of non-Markovian processes; however, this is not the case. We provide an example of dynamics that is not NDGD, but nevertheless leads to classical dynamics, in Appendix~\ref{app::Class_non_NDGD}.
NDGD as defined in Eq.~\eqref{eqn::NDCG} is a statement about the entire system-environment dynamics, and holds for any possible initial state on the environment. However, by means of projective measurements on the system alone, one only has access to the system part, and the system-environment dynamics cannot be fully probed. Consequently, the criterion of Eq.~\eqref{eqn::NDCG} will, in general, be too strong for a given experimental scenario. 
Crucially, though, Theorem~\ref{thm::NDCG} allows us to understand the role of the discord generated by the system-environment interaction and subsequently detected
via projective measurements on the system in establishing non-classical statistics.

Nonetheless, even though it is not necessary for the underlying dynamics to be NDGD in order for a non-Markovian process to display classical statistics, for any $K$-classical process, there always \textit{exists} a dilation that is NDGD. That is, there exists a set $\{\widetilde \Gamma_{t_j,t_{j-1}}\}$ of system-environment CPTP maps that are NDGD and a zero-discord system-environment state $\widetilde{\eta}_{t_0}^{se}$ that yield the correct classical family of joint probability distributions when probed in the classical basis. Specifically, we have the following theorem:
\begin{theorem}
\label{thm::NDGD_dilation}
Let $\{\mathbb{P}_n(x_n,\ldots,x_1)\}_{n\leq K}$ define a process on $\mathcal{T}$, with $|\Tcal|=K$, coming from an underlying
evolution, fixed by the system-environment maps $\{\Gamma_{t_j,t_{j-1}}\}$ and the state $\eta_{t_0}^{se}$, according to Eqs.~\eqref{eq:extra5} and ~\eqref{eq:initstdis}. 
The resulting statistics $\{\mathbb{P}_n(x_n,\ldots,x_1)\}_{n\leq K}$ is $K$-classical iff there exists a NDGD evolution given by system-environment maps $\{\widetilde{\Gamma}_{t_j,t_{j-1}}\}$ defined on times in $\Tcal$
and a zero-discord state 
$\widetilde{\eta}_{t_0}^{se}$ that yield $\mathbb{P}_n(x_n,\ldots,x_1)$ when probed in the classical basis.
\end{theorem}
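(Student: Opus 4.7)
The statement is an ``iff'' and I would handle the two directions separately. The easy direction $(\Leftarrow)$ follows at once from Theorem~\ref{thm::NDCG}: if the process admits a dilation with zero-discord initial state and NDGD system-environment maps, then Theorem~\ref{thm::NDCG} guarantees that the resulting statistics is $K$-classical, and by hypothesis this statistics coincides with $\{\mathbb{P}_n(x_n,\ldots,x_1)\}$. The content of the theorem is therefore the forward implication $(\Rightarrow)$, which asserts that every classical process admits an NDGD dilation. The natural way to prove this is by explicit construction, mirroring the ``artificial'' reduced dynamics $\widetilde{\Lambda}_{t_{j+1},t_j}$ of Eq.~\eqref{eq:extra6} used in the Markovian case, now lifted to the system-environment level.

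Concretely, for a $K$-classical process with joint distribution $\Pprob_K(x_K,\ldots,x_1)$, I would set $t_0=t_1$, introduce an environment with computational basis $\{\ket{y_{j-1},\ldots,y_1}\}$ of dimension $d^{K-1}$, and take as initial state
\begin{equation}
\widetilde{\eta}^{se}_{t_1} = \sum_{x_1} \Pprob(x_1)\, \ket{x_1}\bra{x_1}_s \otimes \ket{x_1}\bra{x_1}_e \, ,
\end{equation}
which is classical-classical and therefore of zero discord with respect to the computational basis. I would then define $\widetilde{\Gamma}_{t_j,t_{j-1}}$ as the composition of a complete dephasing of system and environment in the computational basis with the classical stochastic update
\begin{align}
&\widetilde{\Gamma}_{t_j,t_{j-1}}\bigl[\ket{x_{j-1}}\!\bra{x_{j-1}}_s \otimes \ket{x_{j-1},\ldots,x_1}\!\bra{x_{j-1},\ldots,x_1}_e\bigr] \nonumber \\
&\quad = \sum_{x_j} \Pprob(x_j|x_{j-1},\ldots,x_1)\, \ket{x_j}\!\bra{x_j}_s \otimes \ket{x_j,x_{j-1},\ldots,x_1}\!\bra{x_j,x_{j-1},\ldots,x_1}_e \, ,
\end{align}
extended to arbitrary inputs by linearity. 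This is manifestly CPTP, and its output is always diagonal in the computational basis on the system factor. Inserting these objects into Eq.~\eqref{eq:extra5} reproduces $\Pprob_K$ by the chain rule for conditional probabilities, and the $K$-classicality of the original process (i.e.\ Kolmogorov consistency of $\{\Pprob_n\}$) then also fixes the correct marginals.

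The key step to verify is that the constructed dilation is NDGD. Since $\widetilde{\Gamma}_{t_j,t_{j-1}}$ always produces a state diagonal in the computational basis on the system, post-composing with the system dephasing $\Delta_j$ has no effect, so the two sides of Eq.~\eqref{eqn::NDCG} agree trivially for every triple of times in $\Tcal$. The main technical obstacle I anticipate is handling two `edge' issues cleanly: first, the conditional probability $\Pprob(x_j|x_{j-1},\ldots,x_1)$ is undefined on histories of probability zero, which I would resolve by assigning an arbitrary (e.g.\ uniform) conditional there, since such histories contribute nothing to the statistics; second, one has to check that the initial dephasing built into $\widetilde{\Gamma}_{t_j,t_{j-1}}$ is compatible with the NDGD condition also when $\Delta_{j-1}$ is applied to the \emph{system} only while the environment retains off-diagonal coherences---but this is automatic since our $\widetilde{\Gamma}_{t_j,t_{j-1}}$ dephases the environment as well, so any such coherences are wiped out before the classical update acts.
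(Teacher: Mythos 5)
Your proof is correct, but the forward direction takes a genuinely different route from the paper's. The paper does not build a fresh dilation from $\Pprob_K$; it takes the \emph{given} dilation and sandwiches each system-environment map with system dephasings, $\widetilde{\Gamma}_{t_j,t_{j-1}} = \Delta_j\circ\Gamma_{t_j,t_{j-1}}\circ\Delta_{j-1}$, together with the dephased initial state $\widetilde{\eta}^{se}_{t_1}=\Delta_1[\eta^{se}_{t_1}]$. Since $\Pcal_{x_j}=\Delta_j\circ\Pcal_{x_j}=\Pcal_{x_j}\circ\Delta_j$, the inserted dephasings are absorbed by the projective measurements at measured times, while at unmeasured times the $K$-classicality of the original process (Theorem~\ref{thm::KclassComb}) guarantees that replacing $\Ical_j$ by $\Delta_j$ leaves the observed statistics unchanged; NDGD of the sandwiched maps and vanishing discord of $\widetilde{\eta}^{se}_{t_1}$ are then immediate by construction. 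Your construction instead discards the original dilation entirely and realizes $\Pprob_K$ as a classical hidden-variable model with a history-recording environment. This buys something extra: it shows that a classical process admits a dilation built from purely classical resources (the resulting comb is diagonal, i.e.\ $\chi=0$ in the sense of Lemma~\ref{lem::coherence_non_Markov}, a strictly smaller set than NDGD), and it requires no knowledge of the actual environment. The paper's version buys a different insight: any \emph{given} dilation can be classicalized in place by inserting dephasings, retaining the original environment dynamics (cf.\ Fig.~\ref{fig::Transf_NDGD}). Your treatment of the two edge cases (zero-probability histories; system-only versus joint dephasing) is sound; the only bookkeeping to tighten is the input/output dimension of the growing history register, e.g.\ by fixing the environment to $K-1$ ancillary registers initialized in a reference state so that every $\widetilde{\Gamma}_{t_j,t_{j-1}}$ acts on spaces of equal dimension.
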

Before we prove this statement, it is important to contrast it with Theorem~\ref{thm::NCGD_classical}, the analogous result for Markovian processes. 
There, NCGD propagators of the system dynamics guarantee
that the process associated with sequential projective measurements is classical, and classical Markovian processes can be reproduced by a set of NCGD maps (which do not necessarily identify with the actual dynamical propagators). Analogously, here, the NDGD property of the actual system-environment evolution ensures the classicality of the process; while the converse holds for \emph{particular} dilations, there can be non-NDGD dilations that nonetheless yield classical statistics. 

In both cases the projective measurements in a fixed basis only provide a limited amount of information about the overall evolution underlying the probed statistics. While in the Markovian case the statistics can be traced back to dynamical maps acting on the open system alone, in the more general non-Markovian case it is the whole system-environment evolution that enters into play. As a consequence, only the former case allows one to establish a one-to-one correspondence between classicality and the properties of the \textit{actual} evolution
by enforcing a proper condition on the dynamics,
as discussed at the end of Sec.~\ref{sec:otoma}.
\begin{proof}
As we have already seen in the discussion of  Theorem~\ref{thm::NDCG}, the joint probability distributions obtained from an NDGD dynamics are always classical. We thus only need to prove the opposite direction. Let the underlying system-environment dynamics of the process between times be given by the maps $\{\Gamma_{t_j,t_{j-1}}\}$. As the process is classical, the set of maps $\{\widetilde{\Gamma}_{t_j,t_{j-1}} = \Delta_j \circ \Gamma_{t_j,t_{j-1}}\circ \Delta_{j-1}\}$ together with a state $\widetilde{\eta}_{t_1}^{se} = \Delta_1[\eta_{t_1}^{se}]$, where, again, $\Delta_k$ only acts on the system degrees of freedom, yields the same joint probability distributions when probed in the classical basis (see Fig.~\ref{fig::Transf_NDGD} for reference).
The process given by this set  $\{\widetilde{\Gamma}_{t_j,t_{j-1}}\}$ is NDGD by construction and $\widetilde{\eta}_{t_1}^{se}$ has vanishing discord, which means that for every $K$-classical process there is an NDGD dilation that reproduces it correctly, where we identify the initial time as the time of the first measurement, $t_0=t_1$. 
\end{proof}
\begin{figure}
    \centering
    \includegraphics[width=0.95\linewidth]{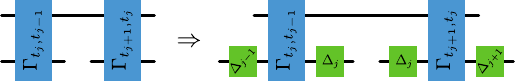}
    \caption{{\bf{Transformation to NDGD dilation.}} Any dilation of an open dynamics can be mapped onto an NDGD one by inserting completely dephasing maps on the level of the system. If the process is classical, then the transformed dilation yields the same statistics as the original one when probed in the classical basis.}
    \label{fig::Transf_NDGD}
\end{figure}

Theorems~\ref{thm::NDCG} and~\ref{thm::NDGD_dilation} complete our results for the non-Markovian setting and provide an intuitive connection between non-classical spatial correlations (i.e., discord) and classical processes.

\section{Genuinely Quantum Processes}
\label{sec:genuinelyquantumprocess}

As we have alluded to throughout this article, the classicality of a process depends on the measurement scheme that is employed to probe it; a process that appears classical in one basis---and is thus NDGD with respect to said basis---might display non-classical correlations when probed differently. This raises the question if non-classicality is merely a matter of perspective. In principle, for any process, there could exist a probing scheme that yields classical statistics. More concretely, for an experimenter that can perform arbitrary measurements, it might always be possible to `hide' the quantum nature of a process by choosing their respective measurements at the times $\{t_j\}$ such that the resulting statistics are classical. 

Naturally, such schemes with (potentially non-projective) measurements go beyond the discussion of classicality that we have conducted so far. As we will not limit the employed instruments of such schemes to be the same at every time, we will call them \textit{unrestricted} in what follows. However, we still assume that the instrument at each time is fixed in advance and is  independent of previous measurements---if the choice of instruments could depend on previous outcomes, then the employed probing scheme would be temporally correlated and marginalization at a given time would not be well-defined.

In this case, our previous results allow us to show that there exist \emph{genuinely quantum} processes, i.e., processes that display non-classical statistics with respect to \emph{every} unrestricted measurement scheme (in the sense described above) which reveals something about the probed process.

To reiterate, up to this point, our discussion of Markovianity focused on situations, where an experimenter measures in the computational basis only, thus employing the same instrument $\Jcal = \{\Pcal_{x_j}\}$ at each time, where all of the (projective) CP maps $\Pcal_{x_j}$ comprising the instrument added up to the completely dephasing map $\Delta_j$. More generally, an experimenter could use instruments $\Jcal_1 = \{\Mcal_{x_1}\}, \Jcal_2 = \{\Mcal_{x_2}\}, \dots$, each adding up to the CPTP maps $\Mcal_1, \Mcal_2, \dots$, respectively, to sequentially probe the system of interest. With this, for a process defined on times $\Tcal$, they could collect the joint probability for all subsets $\Tcal'\subseteq \Tcal$ and check if Kolmogorov consistency holds. For example, in the simplest case of two times, with $\Tcal = \{t_1,t_2\}$ and a given comb $\Ccal_2$ on $\Tcal$, an experimenter would consider the process classical, if $\Pprob(x_2|\Jcal_2) = \sum_{x_1} \Pprob(x_2,x_1|\Jcal_2,\Jcal_1)$ holds for all $x_2$, i.e., if 
\begin{gather}
    \Ccal_2[\Mcal_{x_2},\Ical_1] = \Ccal_2[\Mcal_{x_2},\Mcal_1] \quad \forall \ \Mcal_{x_2} \in \Jcal_2\, .
\end{gather}
Note that, due to causality, the second conditions, i.e., $\Ccal_2[\Ical_2,\Mcal_{x_1}] = \Ccal_2[\Mcal_{2},\Mcal_{x_1}] \ \forall \ \Mcal_{x_1} \in \Jcal_1$ holds automatically, independent of whether the process is classical or not.

In principle, there could \textit{always} exist a set of instruments $\{\Jcal_K,\dots,\Jcal_1\}$ for a given process $\Ccal_K$ on $\Tcal$, such that the resulting statistics appear classical. Naturally, for this question to make sense, the respective instruments actually have to extract information from the process at hand. In principle, an instrument could consist of a random number generator and a set of CPTP maps that the experimenter implements depending on the respective output of the random number generator. Considering these outputs as outcomes of the instruments, the experimenter could then collect statistics that are independent of the process at hand (they only depend on the statistics of the random number generators), and satisfy Kolmogorov consistency conditions (if the respective random number generators at different times are independent of each other). However, this apparent classicality would not be a statement about the properties of the underlying process, and we thus exclude such pathological instruments. We can do so by demanding that at any time $t_j$, none of the elements $\Mcal_{x_j}$ of the instrument $\Jcal_j$ is proportional to a CPTP map. Under this reasonable assumption, we now show that there are processes that are genuinely quantum, i.e., they violate Kolmogorov conditions for arbitrary choices of instruments. 

To this end, in the first step, we argue that genuinely quantum processes only exist in the non-Markovian setting, while in the memoryless case there always exists a measurement scheme that yields classical statistics. This conclusion follows from the fact that all features of a Markovian process are governed by the dynamical maps acting on the space of the system alone. Suppose then that a Markovian process is deemed to be non-classical with respect to some basis of measurements: this means that the dynamical maps constituting the process generate and detect coherence with respect to said basis. However, at each point in time throughout the process, the system to be measured is diagonal in some basis (namely, its eigenbasis); thus, in principle, if the experimenter were able to choose an unrestricted measurement scheme that is always diagonal in the same basis as the system, no coherence with respect to this basis will ever be generated and detected, implying that the statistics measured will appear classical. Consequently, in our proposed framework, genuinely quantum processes can only exist in the presence of (quantum) memory. 

A similar argument as in the Markovian case holds for the special case of non-Markovian dynamics where the system-environment state at each time is of zero discord in a basis independent sense, i.e., when there exists a basis with respect to which the joint state at each time is discord-zero. Recall that if the system-environment dynamics is NDGD (with respect to a fixed basis), then the statistics observed are classical. Now, if at each time the system-environment state has zero discord, then an experimenter can (in principle) choose the measurement basis at each time to be the one with respect to which the performed measurement is non-invasive. For such a sequence of measurements, the experimenter would not be able to distinguish between having implemented the identity map or the dephasing map (with respect to the chosen basis) at any time, since the measurement is non-invasive on the joint system-environment state (due to the lack of discord). Thus, in such a scenario, there always exists some choice of bases in which such a process looks classical. It follows then that no non-Markovian process with zero basis independent discord between system and environment at every time is genuinely quantum.

However, the above logic fails in the general setting, which we now show by explicit example. To provide intuition, we first outline the logical implication that is a consequence of the classicality demand for a chosen (two-step) process (depicted in Fig.~\ref{fig:genuinelyquantumprocess} and described below). While for two times it is \textit{always} possible to find a measurement scheme such that the statistics appear classical (even in the non-Markovian case), when a non-Markovian process extends over multiple times, finding such a measurement scheme is not possible in general. We show this in detail in Appendix~\ref{app:genuinelyquantumprocess} by considering a variant of the process shown in Fig.~\ref{fig:genuinelyquantumprocess} that is extended over four times, proving the existence of genuinely quantum processes.

\begin{figure}[b]
    \centering
    \includegraphics[width=0.75\linewidth]{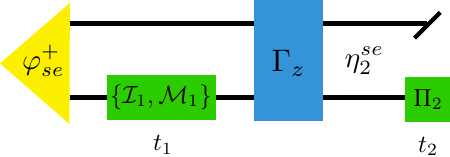}
    \caption{{\bf{First two times of a genuinely quantum process.}} The system-environment begin in a Bell state $\varphi^+_{se}$. Between times $t_1$ and $t_2$, the map $\Gamma_z$ is implemented, which biases the system in the $z$-basis if any CPTP map $\Mcal_1 \neq \mathcal{I}_1$ is performed (see Eq.~\eqref{eq:gammaz}). The label $\eta_2^{se}$ refers to the joint system-environment state immediately prior to $t_2$ (see Eqs.~\eqref{eq:sigma2seidentity1} and \eqref{eq:sigma2selambda1}). Classicality implies that the POVM $\Pi_2$ must be chosen such that it is unable to detect biases in the $z$-basis. Although this is always possible when only two times are considered, in general, classicality requires satisfaction of a growing number of constraints on the choices of later measurements, which can eventually lead to contradiction, implying the existence of genuinely quantum processes.}
    \label{fig:genuinelyquantumprocess}
\end{figure}

The explicit example process we consider begins with a two-qubit system-environment state in the Bell state $\varphi^+_{se} = \tfrac{1}{2} \sum_{ij} \ket{ii} \bra{jj}$. The experimenter can choose to measure the system (in whichever basis, or, more generally, employing any non-pathological instrument they like) at time $t_1$. Following this, the dynamics consists of a system-environment CPTP map $\Gamma_z : \mathcal{B}(\mathcal{H}^{s^\inp} \otimes \mathcal{H}^{e^\inp}) \to \mathcal{B}(\mathcal{H}^{s^\out} \otimes \mathcal{H}^{e^\out})$ whose action is to measure its joint inputs in the Bell basis, and output $\varphi^+$ if the measurement outcome indeed corresponds to $\varphi^+$, or else output a system-environment state whose system part is a pure state in the $z$-basis. The action of $\Gamma_z$ on a system-environment state $\eta^{se}$ is thus given by
\begin{align}\label{eq:gammaz}
    \Gamma_z[\eta^{se}] = &\tr(\eta^{se}\varphi_{se}^+) \, \varphi_{se}^+ \\
    \notag &+ \tr[(\mathbbm{1}_{se} - \varphi_{se}^+) \,\eta^{se}] \ketbra{0}{0}_{s} \otimes \tau_{e},
\end{align}
where $\tau_{e}$ is some quantum state on the environment. It is straightforward to check that such a map is indeed CPTP. Following this part of the dynamics, the experimenter has access to measure the system at time $t_2$.

For a genuinely quantum process, we demand that the statistics are non-classical with respect to \emph{any} possible measurement choices at times $t_1$ and $t_2$; if this is not the case, then there exists a POVM at $t_2$ that cannot distinguish between the experimenter having implemented the identity map $\mathcal{I}_1$ or an arbitrary CPTP map $\Mcal_1$ at time $t_1$, such that the statistics look classical with respect to said measurement scheme. By tracking the joint system-environment state for either choice of operation at $t_1$, we first show that such a POVM always exists. This implies that there is no genuinely quantum process defined on just two times, even in the non-Markovian setting. However, the POVM that does the trick is constrained by the demand of classicality, as we now detail. Extending the considered process to more times then imposes a number of constraints on the employed measurement devices which must be concurrently satisfied, such that finally there is no unrestricted measurement scheme that can yield classical statistics. 

Suppose that the experimenter implements $\mathcal{I}_1$ at time $t_1$; then, the system-environment state at $t_2$ is given by 
\begin{align}\label{eq:sigma2seidentity1}
    \eta_2^{se}(\mathcal{I}_1) := \Gamma_z [ (\mathcal{I}^s_1 \otimes \mathcal{I}^e) (\varphi^+_{se})] = \varphi^+_{se},
\end{align}
where the notation $\eta_2^{se}(\mathcal{I}_1)$ refers to the joint state immediately prior to $t_2$ given that the experimenter implemented the identity map at $t_1$. On the other hand, if the experimenter overall implements some CPTP map $\Mcal_1 \neq \mathcal{I}_1$ (corresponding to their instrument $\Jcal_1$ at $t_1$), then the initial Bell pair will be perturbed (as it is only locally invariant under the identity map) and therefore the system-environment state prior to $t_2$ is
\begin{align}\label{eq:sigma2selambda1}
    \eta_2^{se}(\Mcal_1) :=& \Gamma_z [(\Mcal_1^s \otimes \mathcal{I}^e) (\varphi^+_{se})] \notag \\
    =& p \varphi^+_{se} + (1-p) \ketbra{0}{0}_s \otimes \tau_e, 
\end{align}
where $p := \tr{\left[\varphi^+_{se} (\Mcal_1 \otimes \mathcal{I}^e) [\varphi^+_{se}] \right]} < 1$.
The statistics observed are gathered by making measurements on only the system, so we are now interested in the reduced system state at $t_2$ in either case: from Eq.~\eqref{eq:sigma2seidentity1}, we have the maximally mixed state $\eta_2^{s}(\mathcal{I}_1) = \tfrac{\mathbbm{1}}{2}$, whereas from Eq.~\eqref{eq:sigma2selambda1} we yield a state that is biased in the $z$-basis, $\eta_2^{s}(\Mcal_1) = \tfrac{p}{2} \mathbbm{1} + (1-p) \ket{0}\bra{0}$. As previously mentioned, classicality dictates that the POVM implemented at $t_2$ must not be able to distinguish between these two states, which leads to the fact that the chosen measurement must be blind to any bias in the $z$-basis. Mathematically, we demand
\begin{align}\label{eq:twotimeclassicality}
    \mathbbm{P}_2(x_2|\mathcal{I}_1) &\overset{!}{=} \mathbbm{P}_2(x_2|\Mcal_1), 
\end{align}
which can only be satisfied if the experimenter chooses a POVM $\Pi_2 = \{\Pi_2^{(x_2)} \}$ such that
\begin{align}\label{eq:twotimepovmdemand}
    \tr{\left[ \Pi_2^{(x_2)} \eta_2^{s}(\mathcal{I}_1)\right]} &= \tr{\left[ \Pi_2^{(x_2)} \eta_2^{s}(\Mcal_1)\right]} \quad \forall \ x_2.
\end{align}
    
A POVM that satisfies the above equation can be readily constructed: the elements $\{ \Pi_2^{(a)}, \mathbbm{1}-\Pi_2^{(a)} \}$ can always be described by $\Pi_2^{(a)} = r_2^{(0)} \mathbbm{1} + \vec{r}_2 \cdot \vec{\sigma}$, where $\vec{r}_2 = (r^{(x)}_2,r^{(y)}_2,r^{(z)}_2)$ and $\vec{\sigma} = (\sigma^{(x)}, \sigma^{(y)}, \sigma^{(z)})$ is the vector of Pauli matrices (note that we have changed notation and use the letter `$a$' to label the measurement outcome in order to avoid potential confusion with the $x$-basis direction). Demanding classicality, i.e., Eq.~\eqref{eq:twotimeclassicality}, then implies that $r^{(z)}_2 = 0$. In other words, any POVM that is not able to detect biases in the $z$-basis satisfies Eq.~\eqref{eq:twotimepovmdemand} and thus the statistics measured by such a POVM will appear classical. Importantly, here, and in what follows, we can restrict our analysis to the case of POVMs/instruments with only two elements, as any other POVM/instrument (except for the trivial case of single element ones) can always be coarse-grained to a two-element one. If such a coarse-grained instrument can detect non-classicality of statistics, then so too will the original one be able to, since it necessarily reveals more information about the process upon implementation.

However, although it might always be possible to find a basis/POVM such that the \textit{two-time} statistics for a non-Markovian process look classical, this is not the case in general. Intuitively, demanding that the experimenter cannot distinguish between implementing the identity map and an arbitrary CPTP map at different times leads to a number of constraints (e.g., above we have the constraint $r^{(z)}_2 = 0$) on the later measurement basis. In Appendix~\ref{app:genuinelyquantumprocess}, we consider a process defined across four times that is a logical extension of the two-time process considered here: in each of the first three times, depending on whether or not the system has previously been biased in either the $x$-, $y$- or $z$-basis, the process either performs an identity map (in the affirmative case) or else acts to bias the system in one of the bases. In the end, for an arbitrary CPTP map being implemented at each one of the first three times (with identity map being enacted at the others), the system state at the fourth time is biased in one of the three basis directions, and it is completely unbiased (i.e., maximally mixed) only if three consecutive identity maps are implemented. The only possible POVM at the final time that yields classical statistics must not be able to detect biases in any of the basis vector directions; the only POVM that achieves this is the one with elements proportional to the identity matrix, which corresponds to one of the measurements that we excluded because they reveal nothing about the process. Thus, the process is non-classical with respect to every possible non-pathological measurement scheme and is therefore genuinely quantum.

A relevant side-note seems in order here. Suppose that someone claims that a given process is genuinely quantum. To falsify such a statement it is enough to probe the process by whatever (non-trivial) devices one chooses; if the statistics one gets is classical, the statement is wrong. The processes that are not genuinely quantum can therefore be device-independently verified~\footnote{Note, that the term device independent does not imply that one needs no assumptions on the devices, but just that the assumptions are trivial from the setting. In the case of device independent quantum key distribution, for instance, the assumption that the devices have some kind of independence needs to be assumed. In our case, we need to assume that the measurements are not trivial, because otherwise there is nothing one can say from their statistics at all. }~\cite{Acin2007,Scarani2012,Liu2019,Kolodynski2020}. In turn, this makes the genuinely quantum processes quite peculiar, as it is impossible to hide their quantumness, and it might surprise that the set of these processes is non-empty; in fact we even conjecture that almost all many-time processes are genuinely quantum. 

\section{Conclusions and Outlook}\label{sec::ConclOut}
\subsection{Conclusions}
\label{sec::Concl}
In this paper, we have provided an operationally motivated definition of general classical stochastic processes and discussed its structural consequences and relation to quantum coherence in a system's evolution
as well as to the generation and activation of non-classical correlations between the system and the surrounding environment. While we phrased our results predominantly in the language of quantum mechanics, there is---\textit{a priori}---nothing particularly quantum mechanical about the notion of non-classicality we introduced. Rather, any process for which the potential invasiveness of performed measurements can be detected by means of said measurements is non-classical, independent of the underlying theory; as such an invasiveness is experimentally detectable, this is a fully operational notion. The question of whether or not a process is classical can thus be answered on experimentally accessible grounds and is \textit{a priori} independent of concepts that the experimenter might not be able to check for, like, e.g., coherences in the system of interest. 

Nonetheless, our definition allows for the derivation of a direct connection between the classicality of a process and coherences/non-classical correlations that might be present. While this connection can be formulated in terms of a necessary and sufficient condition for memoryless processes, there are additional subtleties to be considered in the non-Markovian case. In general, it is not sufficient for the state of the system to be diagonal in the classical basis at all times for the resulting multi-time statistics to be classical. Rather, it is the interplay of coherences, non-classical system-environment correlations, and the underlying dynamics that is of importance, as we have highlighted through a number of examples presented throughout. Using the comb framework---which can encapsulate this complex interplay---for the description of general quantum processes with memory, we have provided a characterization of quantum processes that yield classical statistics, and derived the structural properties of such processes. In principle, analogous structural properties could be derived for processes that display classical statistics when probed by means of different measurements, e.g., non-projective and/or non-orthogonal ones. However, while still enabling the derivation of structural properties, the clear connection between classicality and quantum discord would be lost as soon as sharp measurements in the computational basis are not the probing mechanism of choice anymore. In this paper, orthogonal projections were chosen as the kind of measurements that 
come closest to the ideal non-invasiveness displayed by classical measurements. More generally, our results could in principle also be extended to post-quantum theories. As the definition of classicality we provided is fully operational, the structure of classical processes in such theories could be derived in the same vein as we presented in this paper, with coherence and discord being replaced by the analogous properties of the respective theory.

Unsurprisingly, the set of classical processes turns out to be of measure zero within the set of all quantum processes. The full characterization we have provided equips the set of classical processes with an experimentally accessible measure of non-classicality that can be formulated as an linear program, thereby providing an operationally clear-cut quantification of the degree of non-classicality of a given quantum process and a general theoretical framework to define practically useful measures of non-classicality.
As an example, we showed how within our approach one can recover and motivate a quantifier of non-classicality
which is exploited in different contexts \cite{li2012} and has been used to analyze the 
properly quantum features of a given
experimental setup \cite{smirne_experimental_2019}.

Furthermore, we investigated the relation between the non-classicality of the statistics observed throughout a process and the quantumness of the prevalent spatial system-environment correlations in the underlying dynamics. While the absence of coherence in the state of the system of interest is no longer sufficient in the non-Markovian case to guarantee classicality, the absence of (basis dependent) discord is. This latter fact is somewhat intuitive, as the absence of discord at all times means that there are neither non-classical system-environment correlations nor coherences in the system that could influence the multi-time statistics deduced. Specifically, we have shown that the non-Markovian case to some extent mirrors the memoryless one: If the underlying dynamics is NDGD, i.e., any discord that is created at some point in time cannot be detected at a later time, then the process appears classical. While the converse of this statement does not hold, we have further shown that any classical process admits an NDGD dilation.

Finally, we demonstrated that, even if we extend our notion of classicality to the case of unrestricted measurement schemes, there exist processes that display non-classical statistics independent of how they are probed. This can happen only for non-Markovian processes, thus showing that genuine non-classicality can be seen as a further degree of complexity introduced by the presence of memory effects in the
multi-time statistics of quantum systems.

As our definition of classicality is tantamount to the assumptions of realism and non-invasiveness that underlie the derivation of Leggett-Garg inequalities, our results furnish experiments that test for the aforementioned properties with a clear interpretation: If the observed statistics satisfy a Leggett-Garg inequality, then the underlying process can be assumed to be NDGD. It does not have to be composed of fully classical resources, though. On the other hand, violation of a Leggett-Garg inequality implies that quantum discord must have been created (and later detected) over the course of the experiment.

\subsection{Outlook}
While we have provided a comprehensive picture of the interplay between the non-classical resources that are present in the underlying process and the non-classicality of the resulting non-Markovian multi-time statistics, the \textit{mechanisms} that lead to the emergence of classical behavior on macroscopic scales remain unclear. Na\"ively, the fact that classical processes only constitute a vanishing fraction of the set of all processes, renders it puzzling that classical processes can be observed at all. This apparent `puzzle' is reminiscent of the superposition principle which restricts the set of states that are diagonal in a fixed basis to be of measure zero in the set of all pure states, yet superpositions are generally not observed in the macroscopic domain, where one fixed basis seems to be singled out~\footnote{For experiments that detect somewhat macroscopic superpositions, see, e.g., Refs.~\cite{nairz_quantum_2003,oconnell_quantum_2010,eibenberger_matterwave_2013}}. While for the latter case, decoherence has been identified as the mechanism that fixes a preferred basis---and as such leads to the emergence of classicality in the spatial setting~\cite{zurek_decoherence_2003,schlosshauer_decoherence_2005}---an analogous investigation for temporal processes remains outstanding. Our results pave the way towards the analysis of the onset of classicality in general quantum processes when system and/or environment size increases.

Beyond this foundational perspective, the characterization of the set of classical processes, as well as the measure of non-classicality we have provided, lend themselves naturally to the development of a resource theory of non-classicality in which processes defined by Eqs.~\eqref{eqn::classLem} and~\eqref{eqn::chi_requirement} are free. Additionally, our approach yields a definite theoretical background which allows one to deal with different quantifiers of the degree of non-classicality, related to practical situations where different sets of operations are available to investigate the quantumness of physical processes.

On the structural side, we have fully characterized the set of classical processes and have shown that there exist processes that are genuinely quantum. However, the explicit partitioning of the set of quantum processes into classical, non-classical, and genuinely quantum processes remains opaque and requires further investigation. It suggests itself to assume that the set of genuinely quantum processes is of full measure: as the set of discordant states is of full measure in the set of all states~\cite{ferraro_almost_2010}, for a randomly chosen process, at any time $t_j$ there will generally not exist a measurement that leaves the respective system-environment state invariant, and the subsequent dynamics would have to be highly fine-tuned in order to disguise this invasiveness. More specifically, based on the arguments employed in the explicit construction  example of a genuinely quantum process we provided, where four measurement times were necessary to prove the genuine quantumness, we conjecture that almost all processes
associated with a $d$-dimensional system are genuinely quantum, if the system is probed
$d^2$ or more times. A rigorous proof of this statement is subject of future research. Moreover, since genuinely non-classical processes lead to non-classical statistics in a device-independent manner, their quantumness cannot be disguised. It then seems natural to explore if these processes  can be used for technological applications. 

Finally, the full characterization of general, non-Markovian quantum processes possessing an equivalent classical description, will likely be useful to better understand the different facets of memory effects in the classical and quantum realm. Although the operational framework of quantum combs does not \emph{a priori} concern any inherent timescales, as the choice of the discrete set of times is arbitrary, from a physical perspective one expects a connection between some relevant timescales of an underlying system-environment Hamiltonian generating a dynamics and the properties of the corresponding comb that arises upon specification of a set of times. Analogously, the timescales---and number of measurements---over which the non-classicality of a process can be deduced experimentally will be related to the pertinent timescales of the dynamics. However, determining the properties of an underlying system-environment Hamiltonian that leads to classicality and how the different timescales relate is an interesting, yet multi-layered and far from trivial, open problem.

The complexity arises due to the various temporal effects that play a significant role in determining the classicality (or absence thereof) of a given process and the relevant timescales over which it can be detected. For instance, we have already seen that the presence of multi-time memory effects is one such property; however, the connection between memory and classicality is a subtle one. One of the key differences between classical and quantum memory effects arises from the generically invasive nature of measurements in quantum mechanics, which leads to an inherent dependence of memory effects on the probing instruments employed~\cite{TarantoThesis}. The very notion of relevant memory timescales associated with the evolution of a quantum system therefore crucially depends on whether one wants to infer such timescales via sequential measurements over the course of the evolution or only at some final (possibly varying) time, as is done, e.g., in master equation approaches. In the latter case, the memory of the final statistics on the previous states of the system is dictated by the interplay of different timescales, related with the system of interest, its environment and their mutual interaction \cite{breuer_theory_2007}. Such a memory ultimately determines the complexity of the description of the system evolution, as provided, e.g., by memory kernels \cite{Nakajima1958,Zwanzig1960}, Green functions \cite{Zhang2012} or path integrals \cite{Chakraborty2018}.

In the case where the temporal correlations of the environment rapidly decay, the process can often be approximated as a Markovian one. When the process is indeed Markovian, i.e., described by a sequence of individual channels between times, as we have shown, it is the NCGD property of the evolution that is necessary and sufficient for classicality; however, this is not easy to relate to the relevant timescales. A property that would be sufficient for classicality, and more straightforwardly related to the inherent timescales of a Hamiltonian generating the evolution is \emph{forgetfulness} of any initial system state.

For instance, suppose one has a Markovian process generated by some Hamiltonian, which has a natural timescale of system forgetfulness, e.g., one that leads to an exponential decay of correlations between any preparations and final measurements. Then, if one probes such a process at sufficiently spaced time instants, one should expect to see classicality: the Markovianity property means that all relevant information can be determined solely on the system level, and forgetfulness ensures that any temporal correlations---in particular the ability to detect a distinction between a complete dephasing and an identity map---between adjacent times vanish. Strictly speaking, in the standard setting of testing for classicality, where a choice of measurement basis is fixed, one only requires forgetfulness with respect to projective measurements in said basis, rather than complete forgetfulness, for this argument to hold; however, besides being too strict a requirement, connecting such an instrument-specific forgetfulness to the relevant timescales is---like in the NCGD case---a difficult task.

In the presence of memory, the connection between classicality and the relevant timescales of the evolution is more involved yet. Here we have a subtle interplay between the question concerning the forgetfulness of the system of any initial non-classicality, as well as how much any non-classical effects can be transmitted through the environment via the memory mechanism. The fact that forgetfulness of the system alone here is insufficient to imply classicality is related to the crucial point that all multi-time effects must be captured in order to properly describe processes with memory. Thus, in the non-Markovian setting, the relevant timescales must typically be determined via sequential measurements over the course of the evolution. 

However, different interrogation procedures will lead to the exhibition of different multi-time memory effects. For instance, when the system is left unperturbed, the memory can be solely attributed to properties of the underlying Hamiltonian (e.g., those leading to the decay of environmental correlations), whereas when the system is measured, the effect of conditioning the environment state also plays a role. Similarly to the Markovian setting discussed above, the question of classicality of a non-Markovian process does not necessarily concern all such temporal correlations in the process (both those transmitted on the level of the system itself and the genuine memory effects due to the environment), but rather only those that can distinguish between the completely-dephasing instrument and the identity map applied to the system. These memory effects are, in turn, a special case of instrument-specific quantum Markov order, which has been recently introduced using the quantum comb formalism~\cite{taranto_2019L,taranto_2019A}. Connecting such memory effects of the process, and their subsequent impact on the classicality of observed statistics, with the timescales associated to the corresponding Hamiltonian that generates a given process poses a promising avenue for future research.

While we anticipate the above open questions to generate much theoretical interest, we also expect our results to find immediate application in a broad range of situations where it is relevant to assess whether experimental outcomes are not amenable to a classical description in order to certify some type of quantum advantage or benchmark some genuinely quantum behavior. The former include metrological schemes operating beyond the standard quantum limit~\cite{Demkowicz2015,Smirne2016,Haase2018,Yang2019} while the latter can refer to the simulation of many body quantum systems~\cite{Gemmer2001,Prior2010,Tamascelli2015,Farrelly2017,Tamascelli2019}. Also, the role that the emergence of classicality plays in system thermalization and homogenization can be investigated in a systematic and quantitatively tractable manner within our proposed approach.

\begin{acknowledgments}
Independently from the present work, Philipp Strasberg and Mar{\'i}a Garc{\'i}a D{\'i}az derived related results in Ref.~\cite{strasberg_classical_2019}, where, in particular, the case  of  non-rank-1  projectors  is  considered,  and  a definition of classicality for temporal quantum processes similar to our Theorem~\ref{thm::KclassComb} is put forward. 

We thank Konstantin Beyer, Benjamin Desef, Mar{\'i}a Garc{\'i}a D{\'i}az, Nana Liu, Kimmo Luoma, Kavan Modi, Felix A. Pollock, Philipp Strasberg, and Walter Strunz for valuable discussions. SM is grateful to the Monash Postgraduate Publication Award for financial support. This work was supported by the Swiss National Science Foundation SNSF (Grant No. P2SKP2\_184068), the Austrian Science Fund (FWF): ZK3 (Zukunftkolleg) and Y879-N27 (START project), the  European  Union's Horizon 2020 research and innovation programme under the Marie Sk{\l}odowska Curie grant agreement No 801110, the Austrian Federal Ministry of Education, Science and Research (BMBWF), and the ERC Synergy grant BioQ.
\end{acknowledgments}

\FloatBarrier
\bibliographystyle{prx_bib}
\bibliography{references.bib}

\appendix

\section{Connection to previous results}
\label{app::prev_results}

In this section we show that the result derived in the main text for the Markovian case (that is, Theorem~\ref{thm::NCGD_classical}) implies the preceding one in Ref.~\cite{smirne_coherence_2019}. 
For the ease of the reader, we restate both results here (changing slightly the terminology of the latter to the one used here).
\begin{thmbis}{thm::NCGD_classical}
Let $\left\{\mathbb{P}_n(x_n,\ldots,x_1)\right\}_{n\leq K}$ be a $K$-Markovian process (Definition~\ref{def::markov}). Then, the process is also $K$-classical (Definition~\ref{def::N-classical_statistics}) if and only if there exist a system state $\rho_{t_0}$ (at a time $t_0\leq t_1$) which is diagonal in the computational basis $\left\{\ket{x}\right\}_{x\in \mathcal{X}}$ and a set of propagators $\left\{\Lambda_{t_j,t_{j-1}}\right\}_{j=1,\ldots,K}$ which are NCGD with respect to $\left\{\ket{x}\right\}_{x\in \mathcal{X}}$, 
	such that $\rho_{t_0}$ and $\left\{\Lambda_{t_j,t_{j-1}}\right\}_{j=1,\ldots,K}$ yield
	$\left\{\mathbb{P}_n(x_n,\ldots,x_1)\right\}_{n\leq K}$ via Eq.~\eqref{eqn::Markov1}.
\end{thmbis}

\begin{theorem}[Theorem 2 of Ref.~\cite{smirne_coherence_2019}]
Let $\left\{\mathbb{P}_n(x_n,\ldots,x_1)\right\}_{n\leq K}$ be the process
fixed by the QRF Eq.~\eqref{eqn::Markov1},
with respect to a set of propagators forming a CPTP semigroup,
i.e., $\Lambda_{t_{l},t_j}=e^{\Lcal(t_{l} - t_j)}$ for any $t_l\geq t_j$
with $\Lcal$ a Lindblad generator~\cite{gorini1976,lindblad1976}, and an initial state $\rho_{t_0}$.
Then the process $\left\{\mathbb{P}_n(x_n,\ldots,x_1)\right\}_{n\leq K}$ is $K$-classical (Definition~\ref{def::N-classical_statistics})
for any $\rho_{t_0}$ diagonal in the computational
basis if and only if the 
family of propagators is NCGD in the sense that
\begin{align}
\label{eqn::NCGD22}
&\Delta\circ\Lambda_{{s_3},s_2}\circ\Delta \circ \Lambda_{s_2,s_{1}}\circ \Delta
\\ \notag
&= \Delta\circ\Lambda_{{s_3},{s_1}}\circ \Delta\quad \forall \ s_3\geq s_2\geq s_1\geq t_0\, .
\end{align}
\end{theorem}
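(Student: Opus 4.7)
The plan is to recover Theorem~2 of Ref.~\cite{smirne_coherence_2019} as a direct corollary of Theorem~\ref{thm::NCGD_classical}, exploiting the fact that a CPTP semigroup is a special case of divisible dynamics. The argument splits naturally into the two directions of the biconditional.

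For the sufficient (``$\Leftarrow$'') direction, I would assume that the family $\{\Lambda_{t_j,t_{j-1}}\}$ is NCGD in the sense of Eq.~(\ref{eqn::NCGD22}) and that $\rho_{t_0}$ is diagonal in the computational basis. The QRF of Eq.~(\ref{eqn::Markov1}) immediately yields multi-time statistics that is Markovian in the sense of Definition~\ref{def::markov}. I would then invoke the ``if'' part of Theorem~\ref{thm::NCGD_classical} with the trivial identification $\widetilde{\Lambda}_{t_j,t_{j-1}} = \Lambda_{t_j,t_{j-1}}$ and $\widetilde{\rho}_{t_0} = \rho_{t_0}$, which reproduces the statistics tautologically via Eq.~(\ref{eqn::Markov1}), concluding $K$-classicality for every diagonal $\rho_{t_0}$.

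For the necessary (``$\Rightarrow$'') direction, suppose the process is $K$-classical for every diagonal $\rho_{t_0}$. The crucial step is the QRF identification of conditional probabilities with diagonal matrix elements of the \emph{actual} propagators, $\Pprob(x_k|x_j) = \braket{x_k|\Lambda_{t_k,t_j}[\Pi_{x_j}]|x_k}$ from Eq.~(\ref{eqn::extra1}). Classicality together with Markovianity forces the classical Chapman--Kolmogorov composition law Eq.~(\ref{eqn::composition1}) at any triple of measurement times. After substituting the QRF identification, this becomes an equality between the diagonal matrix elements (in the computational basis) of $\Lambda_{t_3,t_1}$ and of $\Lambda_{t_3,t_2}\circ \Delta \circ \Lambda_{t_2,t_1}$ evaluated on computational-basis projectors $\Pi_{x_1}$; multiplying by $\Pi_{x_3}$, summing over $x_3$, and using linearity turns this into the NCGD condition Eq.~(\ref{eqn::NCGD22}) applied to $\Pi_{x_1}$. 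Because both sides of Eq.~(\ref{eqn::NCGD22}) annihilate off-diagonal inputs due to the rightmost $\Delta$, agreement on the projectors $\{\Pi_{x_1}\}$ extends by linearity to the full operator identity Eq.~(\ref{eqn::NCGD22}) at the chosen triple of times.

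The semigroup hypothesis enters only to ensure that $\Lambda_{s_k,s_j}$ is well-defined for \emph{any} pair of times, so that Eq.~(\ref{eqn::NCGD22}) can legitimately be required for all $s_3\geq s_2\geq s_1\geq t_0$: any such triple may be treated as a possible triple of measurement times that the experimenter could choose to probe, and the preceding argument then applies. The main obstacle in presenting the argument cleanly is conceptual rather than technical, namely being careful to distinguish the \emph{existence} of NCGD propagators reproducing the statistics (which is all that Theorem~\ref{thm::NCGD_classical} delivers by itself) from the stronger conclusion that the \emph{actual} semigroup propagators are themselves NCGD---a strengthening that relies precisely on the QRF-based identification of conditional probabilities with diagonal entries of those actual propagators, and which is the content that distinguishes the present corollary from Theorem~\ref{thm::NCGD_classical}.
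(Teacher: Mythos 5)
Your overall strategy coincides with the paper's: both directions are obtained by specializing Theorem~\ref{thm::NCGD_classical} (the paper routes this through the intermediate Corollary~\ref{cor::NCGD_classical} for divisible dynamics and then notes that a semigroup is a special case), and your forward direction is correct and matches the paper's essentially verbatim.

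There is, however, a genuine gap in your converse, concentrated in the sentence claiming that the semigroup hypothesis enters \emph{only} to make $\Lambda_{s_k,s_j}$ well-defined between arbitrary pairs of times. That gives you divisibility, but divisibility alone is not enough. Your argument passes from the Kolmogorov identity $\sum_{x_2}\Pprob_3(x_3,x_2,x_1)=\Pprob_2(x_3,x_1)$ to the Chapman--Kolmogorov law~\eqref{eqn::composition1} and then to the operator identity~\eqref{eqn::NCGD22} ``applied to $\Pi_{x_1}$, extended by linearity.'' But the measured statistics only constrain the propagators on the subnormalized states $\Pcal_{x_1}\circ\Lambda_{s_1,t_0}[\rho_{t_0}] = \Pprob_1(x_1)\,\Pi_{x_1}$; dividing out $\Pprob_1(x_1)$ to reach the identity on $\Pi_{x_1}$ itself is only legitimate when $\Pprob_1(x_1)>0$. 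If some outcome $x_1$ at some intermediate time $s_1$ occurs with zero probability for every diagonal initial state, the statistics say nothing about the action of $\Lambda_{s_3,s_2}\circ\Delta\circ\Lambda_{s_2,s_1}$ versus $\Lambda_{s_3,s_1}$ on that $\Pi_{x_1}$, and the full NCGD identity cannot be concluded. This is precisely why the paper's Corollary~\ref{cor::NCGD_classical} carries an explicit invertibility hypothesis ($\Pprob_1(x_j)\neq 0$ for a full-rank diagonal initial state at all finite times), and why the semigroup assumption is doing real work beyond divisibility: finite-dimensional Lindblad semigroups cannot decrease the rank of a state in finite time, so $\bra{x_1}e^{\Lcal s_1}[\rho_{t_0}]\ket{x_1}\neq 0$ for a full-rank diagonal $\rho_{t_0}$, which is what licenses your ``extension by linearity.'' To close the gap you should choose $\rho_{t_0}$ full rank (permitted, since classicality is assumed for \emph{any} diagonal initial state), invoke this rank-preservation property of the semigroup, and only then perform the linearity step.
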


While the two theorems are clearly related, there are two relevant differences. The new result is more operational in the sense that the statements only depend on the statistics one obtains by making the measurements in the classical basis at the specified times, whereas the statement in Ref.~\cite{smirne_coherence_2019} relies on two underlying assumptions on the Markovianity of the quantum dynamics. The first of these assumptions is that the system
multi-time statistics satisfy the QRF (Eq.~\eqref{eqn::Markov1}), and the second is that the dynamics forms a semigroup. As we will see below, the second of these assumptions can be relaxed, but the first is crucial if one wants to have the benefit of the statement in Ref.~\cite{smirne_coherence_2019}, which not only relates possible models for the statistics~\footnote{namely that it is equivalent to be able to model the statistics as coming from an evolution satisfying NCGD, as to be able to model it coming from a classical evolution}, but makes also a statement about how the possibility of modelling a process classically implies that the 
propagators referred to the actual underlying evolution have
to satisfy NGCD. To be able to make this connection between the statistics and the underlying quantum evolution, we need to restrict by assumption the types of evolutions we are considering. For the Markov case, considered here, the natural choice is the QRF (Eq.~\eqref{eqn::Markov1}), as we discussed in the main text that they are closely related.

To prove the connection between the two theorems, it is useful
to consider the following corollary to Theorem~\ref{thm::NCGD_classical} of the main text:
\begin{corollary}\label{cor::NCGD_classical}
Let $\left\{\mathbb{P}_n(x_n,\ldots,x_1)\right\}_{n\leq K}$ be the process 
fixed by the QRF Eq.~\eqref{eqn::Markov1},
with respect to a set of divisible propagators and an initial state $\rho_{t_0}$.

Let the classical dynamics of this process be invertible, that is, $\mathbb{P}_1(x_j) \neq 0$ for an initial diagonal state that is full-rank, for any $t_j<\infty$. Then, the process $\left\{\mathbb{P}_n(x_n,\ldots,x_1)\right\}_{n\leq K}$ is
$K$-classical (Definition~\ref{def::N-classical_statistics}) 
for any $\rho_{t_0}$ diagonal in the computational
basis
	if and only if the family of propagators is NCGD, see Eq.~\eqref{eqn::NCGD22}.
\end{corollary}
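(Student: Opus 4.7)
The strategy is to prove both directions by combining the QRF with the Chapman-Kolmogorov relations that classicality imposes. The ``if'' direction is essentially immediate from Theorem~\ref{thm::NCGD_classical}: since the QRF entails $K$-Markovianity, and a diagonal initial state $\rho_{t_0}$ together with NCGD propagators $\{\Lambda_{t_j,t_{j-1}}\}$ reproduces the statistics via Eq.~\eqref{eqn::Markov1} by hypothesis, that theorem directly certifies the process as $K$-classical. Note that invertibility is not needed in this direction.

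For the non-trivial ``only if'' direction one must show that the \emph{actual} propagators---and not merely some auxiliary set reproducing the observed diagonal statistics---inherit the NCGD property. The first step is to fix a full-rank diagonal initial state, say the maximally mixed one; invertibility then guarantees $\Pprob_1(x_j)\neq 0$ at every finite time $t_j$, so that all conditional probabilities $\Pprob(x_k|x_j)$ are well-defined, and the QRF yields the operational identification $\Pprob(x_k|x_j)=\bra{x_k}\Lambda_{t_k,t_j}[\ketbra{x_j}{x_j}]\ket{x_k}$ for all $t_k>t_j$, exactly as in Eq.~\eqref{eqn::extra1}. Next I would invoke the Chapman-Kolmogorov relation $\Pprob(x_{j+1}|x_{j-1})=\sum_{x_j}\Pprob(x_{j+1}|x_j)\Pprob(x_j|x_{j-1})$ implied by $K$-classicality (cf.\ Eq.~\eqref{eqn::composition1}), and translate it into operator language: the right-hand side becomes $\bra{x_{j+1}}\Lambda_{t_{j+1},t_j}\circ\Delta_j\circ\Lambda_{t_j,t_{j-1}}[\ketbra{x_{j-1}}{x_{j-1}}]\ket{x_{j+1}}$, which must equal $\bra{x_{j+1}}\Lambda_{t_{j+1},t_{j-1}}[\ketbra{x_{j-1}}{x_{j-1}}]\ket{x_{j+1}}$ for every pair $(x_{j+1},x_{j-1})$. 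Because this is equality on a full basis of projectors, extending by linearity and absorbing the diagonal input condition into an initial $\Delta_{j-1}$ delivers exactly the NCGD condition of Eq.~\eqref{eqn::NCGD22} for any three finite times $t_{j-1}\leq t_j\leq t_{j+1}$.

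The main obstacle is precisely the step that passes from a statement about \emph{probabilities} to one about the \emph{actual propagators}. Without invertibility, whenever some $\Pprob(x_j)$ vanishes the identity above cannot be used to pin down the diagonal entries of $\Lambda_{t_k,t_j}$; distinct propagators could then produce identical observable statistics, and $K$-classicality would only constrain the effective population dynamics (the $\widetilde{\Lambda}_{t_k,t_j}$ built in the proof of Theorem~\ref{thm::NCGD_classical}) rather than the true reduced dynamics itself. The invertibility hypothesis is exactly what upgrades the purely existential characterization of Theorem~\ref{thm::NCGD_classical} into the one-to-one correspondence stated in Corollary~\ref{cor::NCGD_classical}.
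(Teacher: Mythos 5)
Your proposal is correct and follows essentially the same route as the paper's own proof: the ``if'' direction is an immediate application of Theorem~\ref{thm::NCGD_classical}, and the ``only if'' direction uses a full-rank diagonal initial state together with invertibility to convert the Kolmogorov/Chapman--Kolmogorov constraint on the observed probabilities into the operator identity $\Delta\circ\Lambda_{s_3,s_2}\circ\Delta\circ\Lambda_{s_2,s_1}\circ\Delta=\Delta\circ\Lambda_{s_3,s_1}\circ\Delta$ by linearity over the full set of basis projectors (the paper works directly with joint probabilities and the nonvanishing of $\Pcal_{x_1}\circ\Lambda_{s_1}[\rho_0]$, whereas you phrase it via conditional probabilities, which is an inessential difference). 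Your closing remark on why invertibility is exactly what upgrades the existential statement of Theorem~\ref{thm::NCGD_classical} to a claim about the actual propagators matches the paper's discussion.
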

\begin{proof}
Let $\left\{\mathbb{P}_n(x_n,\ldots,x_1)\right\}_{n\leq K}$ be a process 
satisfying the QRF Eq.~\eqref{eqn::Markov1},
with respect to a set of divisible propagators satisfying Eq.~\eqref{eqn::NCGD22}.
Since the latter implies Eq.~\eqref{eqn::NCGD}
and the QRF implies that the process is $K$-Markovian, for any initial diagonal state
in the computational basis $K$-classicality follows from Theorem~\ref{thm::NCGD_classical}.
	
	Conversely, let the assumptions hold and the process be K-classical,
	in particular for an initial diagonal full-rank state. NCGD follows from the equation
	\begin{align}
	&
	\tr\left[{\Pcal}_{x_3}\circ\Lambda_{s_3,s_2}\circ\Delta \circ\Lambda_{s_2,s_1}\circ {\Pcal}_{x_1}\circ \Lambda_{s_1}[\rho_0]
	\right]
	\nonumber\\
	&
	=\sum_{x_2}\tr\left[{\Pcal}_{x_3} \circ \Lambda_{s_3,s_2}\circ {\Pcal}_{x_2} \circ \Lambda_{s_2,s_1}\circ {\Pcal}_{x_1} \circ \Lambda_{s_1} [\rho_0]\right]
	\notag\\
	&	
	=\sum_{x_2} \Pprob_3\left(x_3,x_2, x_1\right)
	\notag\\
	&	
	= \Pprob_2\left(x_3, x_1\right)
	\notag\\
	&
	=\tr\left[{\Pcal}_{x_3}\circ\Lambda_{s_3,s_1}\circ{\Pcal}_{x_1}  \circ\Lambda_{s_1}[\rho_0]
	\right]\, 
	\end{align}
	(for $s_3\geq s_2\geq s_1$ in $\mathcal{T}$) by linearity, since from the assumptions (invertibility of the classical dynamics and taking a diagonal, full-rank initial state) we have that ${\Pcal}_{x_1} \circ \Lambda_{s_1} [\rho_0]\neq 0 \ \forall \ x_1, s_1 < \infty$
	(for $s_1, s_2, s_3 \rightarrow \infty$,  $\Lambda_{s_i,s_j}\rightarrow \mathbbm{1}$ and NCGD holds trivially).
\end{proof}
The only difference between this corollary and Theorem~2 of Ref.~\cite{smirne_coherence_2019} is that here we have the
divisibility of the `full' propagators and
invertibility of the classical propagators in the assumptions, while there the dynamics was assumed to be of Lindblad type. This latter assumption is however strictly stronger, as
it implies divisibility and that ${\Pcal}_{x_j} \circ e^{\mathcal{L}t_j}[\rho]\neq 0 \ \forall \ x_j, \, t_j<\infty$
and for any full-rank $\rho$, since (finite-dimensional) semigroup evolutions cannot decrease the rank of a state
on a finite time~\cite{Baumgartner2008}.

In total, we have shown in this section that Theorem~2 of Ref.~\cite{smirne_coherence_2019} can be interpreted as a corollary of Theorem~\ref{thm::NCGD_classical} by using the connection between the QRF and Markovianity and further restricting to the case of Lindblad evolution.
Moreover, Corollary~\ref{cor::NCGD_classical} shows how, by relaxing such restriction and assuming
a proper invertibility condition on the classical dynamics, it is possible to establish
a one-to-one correspondence between the classicality of a process satisfying the QRF
and the NCGD property, where the latter is referred to the propagators of the actual dynamics.

\section{Absence of coherence for a model system: qubit coupled to a continuous degree of freedom}
\label{app::Absence}
In this Appendix, we provide the mathematical details missing in the main text for Example~\ref{ex::class_state}.
We begin with the expression of the global state at time $t_1$, immediately before the first measurement:
\begin{eqnarray}
	\rho_{se}&&(t_1)= \int_{-\infty}^{\infty} dp dp' f(p) f^*(p')\left(
	\rho_{00}e^{i(p-p')t_1}\ket{0p}\bra{0p'} \right. \nonumber \\
	&&\left.+\rho_{0 1}e^{i(p+p')t_1}\ket{0p}\bra{1p'} + \rho_{10}e^{-i(p+p')t_1}\ket{1p}\bra{0p'}\right.\nonumber\\
	&&\left.+
	\rho_{1 1}e^{-i(p-p')t_1}\ket{1p}\bra{1p'}
	\right).
\end{eqnarray}
After a measurement at time $t_1$ with outcome $\pm$, the state is subsequently given by
\begin{align}
	\rho^{(\pm)}_{se}(t_1)
	&=& \ket{\pm}\bra{\pm} \otimes \int_{-\infty}^{\infty} dp dp' f^{(\pm)}_{1;t_1}(p,p') \ket{p}\bra{p'},
\end{align}
where we emphasize that we have a tensor product state and have introduced the amplitude
\begin{eqnarray}
	f^{(\pm)}_{1;t_1}&&(p,p') \equiv \frac{1}{C_{t_1}^{(\pm)}}f(p) f^*(p')\left(\rho_{00}e^{i(p-p')t_1} \right.\nonumber\\
	&&\left.\pm
	\rho_{0 1}e^{i(p+p')t_1}\pm\rho_{10}e^{-i(p+p')t_1}+
	\rho_{1 1}e^{-i(p-p')t_1}\right), \nonumber
\end{eqnarray}
as well as the normalization factor $C_{t_1}^{(\pm)}=\int_{-\infty}^{\infty} dp |f(p)|^2(1\pm2\mbox{Re}(\rho_{0 1}e^{2ipt_1}))$.
Note that no $\hat{\sigma}_x$-coherence is present at this stage.

If we now let the system-environment evolve up to a certain time $\tau > t_1$, the global state will be 
\begin{eqnarray}
	\rho^{(\pm)}_{se}&&(\tau)
	= \frac{1}{2}\int_{-\infty}^{\infty} dp dp' f^{(\pm)}_{1;t_1}(p,p')\left(
	e^{i(p-p')(\tau-t_1)}\ket{0p}\bra{0p'}\right.\nonumber\\
	&&\left.\pm
	e^{i(p+p')(\tau-t_1)}\ket{0p}\bra{1p'}\pm e^{-i(p+p')(\tau-t_1})\ket{1p}\bra{0p'}\right.\nonumber\\
	&&\left.+
	e^{-i(p-p')(\tau-t_1)}\ket{1p}\bra{1p'}
	\right),
\end{eqnarray}
where the superscript $\pm$ refers to the outcome of the first measurement at time $t_1$. The corresponding system state at time $\tau$ is then given by tracing out the environmental degrees of freedom, resulting in
\begin{equation}
	\rho^{(\pm)}_{s}(\tau) = \frac{1}{2}\begin{pmatrix}
		1 &  \pm k^{(\pm)}(\tau,t_1)\\ 
		\pm k^{(\pm) *}(\tau,t_1) &1,
	\end{pmatrix}
\end{equation}
with 
\begin{align}
	k^{(\pm)}&(\tau,t_1) = \int_{-\infty}^{\infty} dp f^{(\pm)}_{1;t_1}(p,p) e^{2 i p(\tau-t_1)} \\
	&=\frac{1}{C^{(\pm)}_{t_1}} \int_{-\infty}^{\infty} dp |f(p)|^2\left(1\pm\rho_{0 1}e^{2 i p t_1}\pm\right.\nonumber\\
	&\left.\rho_{1 0}e^{-2 i p t_1}  \right) e^{2 i p(\tau-t_1)}
	\nonumber\\
	&=\frac{1}{C^{(\pm)}_{t_1}}\left( k(\tau-t_1)\pm \rho_{0 1} k(\tau) \pm \rho_{1 0} k(\tau-2 t_1)\right). \nonumber
\end{align}
Once again, we see that if the initial system state is a convex mixture of $\ket{+}$ and $\ket{-}$
and $k(t)$ is real (e.g., a Lorentzian distribution centered at 0) then no $\hat{\sigma}_x$-coherence is present at any time $\tau$. This can be seen because the reduced state can be written as in Eq.~(\ref{eq:st}) for the real $\alpha=(\pm k^{(\pm)}(\tau,t_1)+1)/2$.  
As a side remark, we note that even if the initial state had some coherences w.r.t. $\hat{\sigma}_x$, these would have been destroyed after the first measurement at time $t_1$ and, as long as $\rho_{01}\in \mathbbm{R}$, would not have been `re-generated' by the subsequent evolution.

Indeed, the argument above can be reiterated for the subsequent measurements; for instance, if we consider the global state after the second measurement at time $t_2$, we find
\begin{eqnarray}
	\rho^{(s)}_{se}(t_2)
	&=& \ket{\pm}\bra{\pm} \otimes \int_{-\infty}^{\infty} dp dp' f^{(s)}_{2;t_2,t_1}(p,p') \ket{p}\bra{p'} \nonumber
\end{eqnarray}
with
\begin{eqnarray}
f^{(s)}_{2;t_2,t_1}&&(p,p') = \frac{1}{ C^{(s)}_{t_2,t_1}}f^{(\pm)}_{1;t_1}(p,p')
\left(e^{i(p-p')(t_2-t_1)}\right.\nonumber\\
	&&\left.+sg(s)
e^{i(p+p')(t_2-t_1)} +sg(s)e^{-i(p+p')(t_2-t_1)}\right.\nonumber\\
	&&\left.+
e^{-i(p-p')(t_2-t_1)}\right),
\end{eqnarray}
where $s$ denotes the sequence of $+$ and $-$ obtained in the measurements and $sg(s)$ the sign of the corresponding product. The entire procedure can be iterated, by replacing $f^{(\pm)}_{1;t_1}(p,p')$
with $f^{(s)}_{2;t_2,t_1}(p,p')$, so that the state at any subsequent time would remain in the form of Eq.~(\ref{eq:st}), with the off-diagonal
elements given by a linear combination with real coefficients of the real function $k(t)$
evaluated at different times. In Appendix~\ref{app::CombPocket}, we will show how Example~\ref{ex::class_state} can be described using a comb representation as introduced in Section~\ref{sec::combs}.

\FloatBarrier
\begin{widetext}
\section{Comb representation of a model system: qubit coupled to a continuous degree of freedom}
\label{app::CombPocket}

In Appendix~\ref{app::Absence}, we showed the absence of coherence in the state of the system at all times for the dynamics of Example~\ref{ex::class_state}. To do so, we computed the full system-environment dynamics; however, the full knowledge of the system-environment dynamics is not necessary to understand the multi-time probabilities of observables of the system alone. Moreover, the state of the environment is often not experimentally accessible in practice, as it is typically highly complex. Therefore, it is convenient to only describe the influence that the environment has on the multi-time probabilities. Importantly, this influence, and the resulting correct descriptor of the underlying process, can be deduced by probing the system alone. 

Such a descriptor can be derived using the concept of quantum combs~\cite{chiribella_quantum_2008,chiribella_theoretical_2009}, which we briefly reviewed in Section~\ref{sec::combs}. A quantum comb contains all statistical information that can be inferred about the process it describes (on the set of times upon which it is defined). While here we will construct the comb for Example~\ref{ex::class_state} by explicitly solving the system-environment dynamics, it is important to note that it could be reconstructed experimentally by means of measurements on the system alone, without any access to or knowledge of the environmental degrees of freedom, through a generalized tomographic scheme~\cite{pollock_non-markovian_2018}.

In slight deviation from the notation of the main text, in this appendix, for better orientation, here we explicitly write the labels of the Hilbert spaces a comb acts on, and the times it is defined upon, as sub- and superscripts, respectively.

As described in Example~\ref{ex::class_state}, we start with a system-environment state $\eta_{se}(t_0=0) = \rho_s(t_0=0)\otimes \ket{\varphi^e}\bra{\varphi^e}$ where $\ket{\varphi^e}$ is fixed. As shown in Fig.~\ref{fig::HilberspacesExample}, the initial system state $\rho_s(t_0)$ is associated with the Hilbert space with label 1. The channel 
\begin{figure}
	\centering
	\includegraphics[scale=2.]{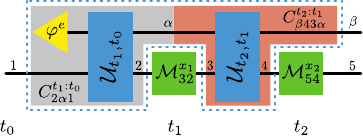}
	\caption{{\bf{Labeling of Hilbert spaces used for the comb description of Example~\ref{ex::class_state}.}} The grey box contains the comb $C^{t_1:t_0}_{2\alpha 1}$ and the red box the comb $C^{t_2:t_1}_{4\beta 3\alpha}$. The comb $C^{t_2:t_1:t_0}_{4\beta 321}$ corresponds to everything inside the dashed box and consists of the contraction of the two combs $C^{t_1:t_0}_{2\alpha 1}$ and $C^{t_2:t_1}_{4\beta 3 \alpha}$.}
	\label{fig::HilberspacesExample}
\end{figure}

\begin{align}
	\mathcal{C}^{t_1:t_0}(\rho_s)=&\Ucal_{t_1,t_0}\rho_s\otimes\ket{\varphi^e}\bra{\varphi^e} 
\end{align}
maps the initial system state to the full system-environment state at time $t_1$ directly before the intervention. The corresponding channel in comb description is given by
\begin{align}
	C^{t_1:t_0}_{2\alpha 1}=&\sum_{i,j}U_{t_1,t_0} \left( \ketbra{i}{j}_2\otimes\int_{-\infty}^{\infty}dp\int_{-\infty}^{\infty}dq f(p) f^*(q) \ketbra{p}{q}_{\alpha}\right)U^\dagger_{t_1,t_0} \otimes \ketbra{i}{j}_1 \nonumber \\
	=&\sum_{i,j}\int_{-\infty}^{\infty}dp\int_{-\infty}^{\infty}dq f(p) f^*(q) e^{i(\phi_ip-\phi_j q)t_1} \ketbra{i}{j}_2\otimes \ketbra{p}{q}_{\alpha} \otimes \ketbra{i}{j}_1,
\end{align}
where the superscripts denote the intervention times and the subscripts the Hilbert spaces on which the comb is acting. The object $C^{t_1:t_0}_{2\alpha 1}$ above is nothing other than the Choi state associated with the channel. The dynamics from time $t_1$ to time $t_2$ is similarly given by the channel
\begin{align}
	\mathcal{C}^{t_2:t_1}(\rho_{se})=&\Ucal_{t_2,t_1}\rho_{se}
\end{align}
applied to the combined system-environment state directly after the first intervention. Again, this channel admits a Choi state description
\begin{align}
	C^{t_2:t_1}_{4\beta 3\alpha}=&\sum_{i,j} \int_{-\infty}^{\infty}dp\int_{-\infty}^{\infty}dq e^{i(\phi_i p-\phi_j q) (t_2-t_1)} \ketbra{ipip}{jqjq}_{4\beta 3\alpha}.
\end{align}	
The next step is to eliminate the explicit description of the environment state on Hilbert space $\alpha$. To do this, we contract the Choi states of the two channels described above using the link product $\star$ described in Refs.~\cite{chiribella_quantum_2008,chiribella_theoretical_2009}. This leaves us with the comb describing the dynamics on both times
\begin{align}
	C^{t_2:t_1:t_0}_{4\beta 321}=&C^{t_2:t_1}_{4\beta 3\alpha} \star C^{t_1:t_0}_{2\alpha 1} \nonumber \\
	=&\tr_\alpha \left[\left(\ident_{4\beta 3}\otimes {C^{t_1:t_0}_{2\alpha 1}}^{T_\alpha}\right)\left(C^{t_2:t_1}_{4\beta 3\alpha}\otimes\ident_{21}\right)\right] \nonumber \\
	=&\int_{-\infty}^{\infty}ds \bra{s}_\alpha \sum_{i,j} \int_{-\infty}^{\infty}dp\int_{-\infty}^{\infty}dq f(p) f^*(q) e^{i(\phi_i p -\phi_j q) t_1} \ketbra{iqi}{jpj}_{2\alpha 1} \nonumber \\
	&\sum_{k,l} \int_{-\infty}^{\infty}dr\int_{-\infty}^{\infty}dt e^{i(\phi_k r-\phi_l t)(t_2-t_1)}\ketbra{krkr}{ltlt}_{4\beta 3\alpha} \ket{s}_\alpha \nonumber \\
	=&\sum_{i,j,k,l}\idotsint_{-\infty}^{\infty}ds\  dp\ dq\ dr\ dt\ \delta(s-q)  \delta(s-t) \delta(p-r) f(p) f^*(q) e^{i(\phi_i p -\phi_j q) t_1} \ketbra{ii}{jj}_{21}\nonumber\\
	&   e^{i(\phi_k r-\phi_l t)(t_2-t_1)}\ketbra{krk}{ltl}_{4\beta 3}  \nonumber \\
	=&\sum_{i,j,k,l}\iint_{-\infty}^{\infty}ds\  dp\   f(p) f^*(s) e^{i(\phi_i p -\phi_j s) t_1} \ketbra{ii}{jj}_{21}
	   e^{i(\phi_k p-\phi_l s)(t_2-t_1)}\ketbra{kpk}{lsl}_{4\beta 3}  \nonumber \\
	=&\sum_{i,j,k,l}\iint_{-\infty}^{\infty}ds\  dp\   f(p) f^*(s) e^{i(\phi_i p -\phi_j s) t_1} e^{i(\phi_k p-\phi_l s)(t_2-t_1)} \ketbra{kpkii}{lsljj}_{4\beta 321}. 
\end{align}

We can also describe the projectors corresponding to the observed measurement outcomes using Choi states, e.g., if we measured in the eigenbasis of $\hat{\sigma}_x$ and obtained outcome +, the corresponding Choi state is given by 
\begin{align}
    M^+=\ketbra{+}{+}\otimes\ident \sum_{i,j} \ketbra{ii}{jj}\ketbra{+}{+}\otimes\ident=\frac{1}{4}\sum_{i,j,k,l}\ketbra{ij}{lk}.
\end{align}
Again, using the link product, we can obtain the unnormalized joint system-environment state directly after the second intervention at time $t_2$, conditioned on the initial state of the system $\rho_s(0)$ and the interventions $M^{x_1}, M^{x_2}$ (where the superscripts $x_i$ refer to the outcomes) as follows
\begin{align}
	\rho_{se}^{(x_2,x_1)}(t_2)_{5\beta}=& C^{t_2:t_1:t_0}_{4\beta 321}\star\rho_s(t_0)_1\star M^{x_1}_{32}\star M^{x_2}_{54} \nonumber \\
	=&\tr_{4321}\left[\rho_s(t_0)^T_1\otimes {M^{x_1}_{32}}^{T_2}\otimes {M^{x_2}_{54}}^{T_4} C^{t_2:t_1:t_0}_{4\beta 321}\right].
\end{align}
For instance, if we observed the outcome + twice, the joint state after the second intervention is given by
\begin{align}
	\rho_{se}^{(+,+)}(t_2)_{5\beta}=&\sum_{i,j,k,l,m,n,x,y,a,b,c,d,f,g,h,o} \bra{fgho}_{4321}\rho_{mn}\ketbra{n}{m}_1\otimes\frac{1}{4}\ketbra{cx}{dy}_{32}\otimes\frac{1}{2}\ketbra{+}{+}_5\otimes\ketbra{a}{b}_\beta \nonumber \\
	&\iint_{-\infty}^{\infty}ds\  dp\   f(p) f^*(s) e^{i(\phi_i p -\phi_j s) t_1} \ketbra{ii}{jj}_{21}   e^{i(\phi_k p-\phi_l s)(t_2-t_1)}\ketbra{kpk}{lsl}_{4\beta 3} \ket{fgho}_{4321} \nonumber \\
	=&\frac{1}{8}\ketbra{+}{+}_5\otimes\sum_{i,j,k,l} \rho_{ij} \iint_{-\infty}^{\infty}ds\  dp\   f(p) f^*(s) e^{i(\phi_i p -\phi_j s) t_1}   e^{i(\phi_k p-\phi_l s)(t_2-t_1)} \ketbra{p}{s}_\beta \nonumber \\
	=&\frac{1}{8}\ketbra{+}{+}_5\otimes \iint_{-\infty}^{\infty}dp\  ds\   \widetilde{f}^{(+,+)}_{2;t_2,t_1}(p,s)  \ketbra{p}{s}_\beta,
\end{align}
where we have introduced 
\begin{align}
	\widetilde{f}^{(+,+)}_{2;t_2,t_1}(p,s)=&\sum_{i,j,k,l} \rho_{ij} f(p) f^*(s) e^{i(\phi_i p -\phi_j s) t_1}   e^{i(\phi_k p-\phi_l s)(t_2-t_1)}  \nonumber \\
	=&f(p) f^*(s) \left(\rho_{00} e^{i(p -s) t_1}  + \rho_{01} e^{i(p +s) t_1}  + \rho_{10} e^{-i(p +s) t_1}  +\rho_{11} e^{-i(p -s) t_1}  \right)  \nonumber\\
	&\left(e^{i(p-s)(t_2-t_1)}+	e^{i(p+s)(t_2-t_1)} +e^{-i(p+s)(t_2-t_1)}+
	e^{-i(p-s)(t_2-t_1)}\right) \nonumber \\
	=&f^{(+,+)}_{2;t_2,t_1}(p,s)
\end{align}
and checked consistency with the direct description in Appendix~\ref{app::Absence}.

Since we are mainly interested in the question of whether the obtained measurement statistics can be explained classically, we restrict our attention to the unnormalized state of the system alone, because the probability of obtaining a specific sequence of measurement outcomes is encoded in the trace of the corresponding system state. Therefore we eliminate the description of the environment by tracing over the Hilbert space $\beta$, which we can do directly at the level of the comb itself
\begin{align}
	\widetilde{C}^{t_2:t_1:t_0}_{4321}=&\tr_\beta[ C^{t_2:t_1:t_0}_{4\beta 321}]
	\nonumber\\
	=&\sum_{i,j,k,l}\int_{-\infty}^{\infty}dq \left|f(q)\right|^2 e^{i(\phi_i-\phi_j)qt_1} e^{i(\phi_k-\phi_l)q(t_2-t_1)}\ketbra{kkii}{lljj}_{4321}.
\end{align}
Following the same procedure as above, we then obtain the system state after the second intervention
\begin{align}
	\rho^{(x_2,x_1)}_s(t_2)_5=& \widetilde{C}^{t_2:t_1:t_0}_{4\beta 321}\star\rho_s(t_0)_1\star M^{x_1}_{32}\star M^{x_2}_{54}.
\end{align}
Similarly, the probability to obtain, e.g., twice the measurement result + is given by
\begin{align}
	\Pprob_2(+,t_2;+,t_1)=&\tr\left[\rho_s^{(+,+)}(t_2)_5\right].
\end{align}
If we introduce $\tau_n:=t_n-t_{n-1}$, by way of induction, we find that
\begin{figure}
	\centering
	\includegraphics[scale=2.]{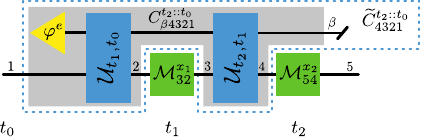}
	\caption{\textbf{Dilation for Example~\ref{ex::class_state}.} Pictorial representation of the quantum combs describing Example~\ref{ex::class_state} with two interventions.}
	\label{fig::HilberspacesExampleRela}
\end{figure}
\begin{align}
	C^{t_n::t_0}=&\sum_{i_{2n}...i_1, j_{2n}...j_1}\iint_{-\infty}^{\infty}dp\ dq\ f(p) f^*(q) \ketbra{p}{q}\bigotimes_{a=1}^{2n} e^{i(\phi_{i_a}p-\phi_{j_a}q)\tau_a}\ketbra{i_a i_a}{j_a j_a}_{2a,2a-1},\nonumber \\
	\widetilde{C}^{t_n::t_0}=&\sum_{i_{2n}...i_1, j_{2n}...j_1}\int_{-\infty}^{\infty}dp\ |f(p)|^2 \bigotimes_{a=1}^{2n} e^{i(\phi_{i_a}-\phi_{j_a})p\tau_a}\ketbra{i_a i_a}{j_a j_a}_{2a,2a-1}, 
\end{align}
where we suppressed the subscripts of the combs. As above, $C^{t_n::t_0}$ denotes the comb including the outgoing environment and $\widetilde{C}^{t_n::t_0}$ the comb describing the system alone, see Fig.~\ref{fig::HilberspacesExampleRela} for a pictorial representation. Therefore, the joint probability distribution for sequences of measurement outcomes is given by
\begin{align}
	\Pprob_n(x_n,t_n;\ldots;x_1,t_1)=&\tr\left[\rho_s(0)_1^T\bigotimes_{a=1}^n {\left(M_{2a+1,2a}^{x_a}\right)}^T \widetilde{C}^{t_n::t_0} \right].
\end{align}
\end{widetext}

\section{Alternative example for non-classical dynamics that do not create coherences}
\label{app::MIC_Ex}
Here, we provide an alternative example of a process where the state of the system is diagonal in the computational basis at all times but does not yield classical statistics. To this end, consider the following circuit (see Fig.~\ref{fig::MIC_Kolmo}): Let the initial system-environment state at time $t_0$ be a maximally entangled two qubit state $\varphi^+$ that undergoes trivial evolution between $t_0$ and $t_1$. At $t_1$ the system alone is thus in a maximally mixed state $\rho_{t_1}$ Between $t_1$ and $t_2$, the system and the environment undergo a CPTP map $\Ecal_{t_2,t_1}$ (which could---in principle---be dilated to a unitary map~\cite{stinespring1955}, but for conciseness, we restrict ourselves to the relevant part of it), that yields output $\ket{0}$ on the system, if system and environment are in the state $\varphi^+$, and $\ket{1}$ otherwise, i.e., when the system-environment state is orthogonal to $\varphi^+$. Consequently, its action can be written as  
\begin{gather}
    \Ecal_{t_2,t_1}[\eta] = \tr(\varphi^+\eta)\ketbra{0}{0} + \tr[(\ident - \varphi^+)\eta] \ketbra{1}{1} \, .
\end{gather}
It is easy to check that $\Ecal_{t_3,t_2}$ is indeed CPTP, and the state of the system at $t_2$ is a convex mixture of $\ketbra{0}{0}$ and $\ketbra{1}{1}$ for all possible experimental interventions at $t_1$; there are thus no coherences in the state of the system at any of the times $\{t_1,t_2\}$. However, this process does not satisfy the Kolmogorov condition.

To see this, consider the probabilities for a measurement in the computational basis at $t_2$, with no operation performed at $t_1$. In this case, the system-environment state before the action of $\Ecal_{t_2,t_1}$ is equal to $\varphi^+$, which means that we have $\rho_{t_2} = \ketbra{0}{0}$. Consequently, a measurement in the computational basis at $t_2$ yields the probabilities
\begin{gather}
\label{eqn::Prob}
    \Pprob_1(0,t_2) = 1 \quad \text{and} \quad  \Pprob_1(1,t_2) = 0\, .
\end{gather}
On the other hand, performing a measurement at $t_1$ and discarding the outcomes amounts to performing the completely dephasing map $\Delta_1$. Immediately after this map, i.e., right before $\Ecal_{t_2,t_1}$, the system-environment state is of the form 
\begin{gather}
    \eta_{t_1}^{se} = \frac{1}{2} \sum_{x_1} \ketbra{x_1}{x_1} \otimes \ketbra{x_1}{x_1} = \frac{1}{2}\left(\varphi^+ + \varphi^-\right)\, ,
\end{gather}
where $\varphi^- = (\sigma_z\otimes \ident)\varphi^+ (\sigma_z\otimes \ident)$ is a Bell state. Consequently, in this case the final system state $\rho_{t_2}$ is of the form  $\rho_{t_2} = \frac{1}{2}\left(\ketbra{0}{0} + \ketbra{1}{1}\right)$. Finally, the obtained probabilities for a measurement in the computational basis at $t_2$ are
\begin{align}
\notag
    &\Pprob^{\Delta_{1}}_1(0,t_2) = \sum_{x_1}\Pprob_2(x_1,t_1;0,t_2) = \frac{1}{2} \\
    \text{and} \quad &\Pprob_1^{\Delta_{1}}(1,t_2) = \sum_{x_1}\Pprob_2(x_1,t_1;1,t_2) = \frac{1}{2}\, ,
\end{align}
which does not coincide with~\eqref{eqn::Prob}. Even though the state of the system is incoherent at all considered times, i.e., appears to be classical, the multi-time statistics do not satisfy the Kolmogorov condition.

\begin{figure}
    \centering
    \includegraphics[width=0.75\linewidth]{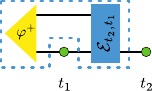}
    \caption{{\bf{Non-classical process that does not display coherences.}} The state of the system is classical, i.e., it does not contain coherences with respect to the classical basis, at any step of the process. The corresponding statistics do not satisfy the Kolmogorov conditions, though. Potential measurements are depicted as green circles. The blue dotted line signifies the comb of the process (see Sec.~\ref{sec::combs}).}
    \label{fig::MIC_Kolmo}
\end{figure}

\section{Measure of non-classicality}\label{measure_derivation}
In this appendix we derive the optimal solution of the game which defines our measure of non-classicality M(C) and show that it can be formulated as a linear program. 
We also derive the dual of this problem for completeness.

In our game, Bob can choose the points in time at which he wants Rudolph to perform projective measurements and those for which Rudolph should not interfere with the natural evolution of the system. This defines a sequence of measurements $T_i(\vec{x})=\otimes_{t_j\in \tau_i}\Phi_j^{+} \otimes_{t_k\in \tau_i^c} P_{x_k}$. Given the choice of any sequence of this form and labelling the obtained outcome sequence of the experiment by $\vec{x}$, the best strategy for Bob is to announce that the comb that was tested is $C$ if the probability for measuring outcome $\vec{x}$ with said sequence $T_i(\vec{x})$ is higher for $C$ than for $C^{\textup{Cl.}}$ (i.e., if $\tr[(C^{\textup{Cl.}}-C)T_i(\vec{x})]<0$), and announcing $C^{\textup{Cl.}}$ otherwise. The probability that he is correct when announcing $C$, given that the outcome obtained was $\vec{x}$, is given by:
\begin{align}
\mathbbm{P}(C|\vec{x})=\frac{\mathbbm{P}(C,\vec{x})}{\mathbbm{P}(\vec{x})}=\frac{\mathbbm{P}(\vec{x}|C)}{\mathbbm{P}(\vec{x})} \mathbbm{P}(C),
\end{align}
where the prior probability is $\mathbbm{P}(C)=1/2$. Denoting by $S^{\textup{Cl.}}$ the set of all $\vec{x}$ such that $\tr[(C^{\textup{Cl.}}-C)T_i(\vec{x})]>0$ and $S^{\textup{Cl.}}_c$ its complement, the probability 
that Bob wins the game is given by
\begin{align}
&\sum_{\vec{x}\in S^{\textup{Cl.}}_c} \mathbbm{P}(C|\vec{x})\mathbbm{P}(\vec{x})+\sum_{\vec{x}\in S^{\textup{Cl.}}} \mathbbm{P}(C^{\textup{Cl.}}|\vec{x})\mathbbm{P}(\vec{x}) \notag \\
&=\frac{1}{2} \left(\sum_{\vec{x}\in S^{\textup{Cl.}}_c}\mathbbm{P}(\vec{x}|C)+\sum_{\vec{x}\in S^{\textup{Cl.}}} \mathbbm{P}(\vec{x}|C^{\textup{Cl.}})\right)
\notag \\
&=\frac{1}{2} \left(\sum_{\vec{x}\in S^{\textup{Cl.}}_c}\tr[C T_i(\vec{x})]+\sum_{\vec{x}\in S^{\textup{Cl.}}}\tr[C^{\textup{Cl.}} T_i(\vec{x})]\right)
\notag \\
&=\frac{1}{2} \left(1+\sum_{\vec{x}\in S^{\textup{Cl.}}}(-\tr[C T_i(\vec{x})]+\tr[C^{\textup{Cl.}} T_i(\vec{x})])\right).
\end{align}

Assuming that both Alice and Bob play ideally, using Lemma~\ref{lem::coherence_non_Markov}, the probability $\mathbbm{P}_B(C)$ that Bob wins is given by
\begin{align}
\mathbbm{P}_B(C)=\frac{1}{2}\left(1+ M(C)\right),
\end{align}
where $M(C)$ is the solution of
\begin{alignat*}{2}
&	\text{minimize:}\ &&\max_{i} \sum_{\vec{x}\in S^{\textup{Cl.}}} \tr[(C^{\textup{Cl.}}-C) T_i(\vec{x})] \\
& 	\text{subject to:}\ &&S^{\textup{Cl.}}=\{\vec{x}|\tr[(C^{\textup{Cl.}}-C)T_i(\vec{x})]\geq 0\},\\
& &&C^{\textup{Cl.}}=\sum_{y_K,\ldots,y_1} \mathbbm{P}_K(\vec{y}) P_{y_K}\otimes\cdots \otimes P_{y_1}+\chi, \\
& &&\tr[\chi \; \cdot \; (\otimes_{t_j\in \tau_i}A_j \otimes_{t_k\in \tau_i^c} P_{z_k}) ]=0 ,\\
& &&C^{\textup{Cl.}}\geq 0,\\
& &&\tr_{{K^\inp}}[ C^{\textup{Cl.}}]=\ident_{{(K-1)^\out}}\otimes\Theta_{K-1},\\
& &&\vdots\\
& &&\tr_{{2^\inp}}[ \Theta_2]=\ident_{{1^\out}} \otimes \rho_{1^\inp},\\
& &&\mathbbm{P}_K(\vec{y})\; \mathrm{joint\, prob.\, distribution},
\end{alignat*}
where we defined $A_j:=\Phi_j^{+}-D_j$ and $\rho_{1^\inp}$ is a valid quantum state. The hierarchy of partial trace conditions on the comb written above ensure that the overall action of any instrument at a later time cannot influence previous statistics~\cite{chiribella_transforming_2008,chiribella_theoretical_2009}.

Starting from the above program, we see that $\chi$ does not contribute to the trace, as $\tr[\chi T_i(\vec{x})]$ is, by definition, a marginal of a zero-distribution (due to the third constraint above), see also the proof of Lemma~1). This leaves us with contributions only from the diagonal parts of the operator $C^{\textup{Cl.}}$, where the non-zero entries are those that correspond to $\mathbbm{P}_K(\vec{y}) P_{y_K}\otimes\cdots \otimes P_{y_1}$, which must satisfy $\tr[C^{\textup{Cl.}}]=1$ and $C^{\textup{Cl.}}\geq 0$ due to the requirement that $\mathbbm{P}_K(\vec{y})$ is a valid probability distribution. Note that for any such an operator, there exists a $\chi$ such that the total operator satisfies the additional requirements in the above program, since one simply must add terms of the form $\sum_{y_K,\ldots,y_1} \mathbbm{P}_K(\vec{y}) P_{y_K,z_K}\otimes\cdots \otimes P_{y_1,z_1}$, where the  $P_{y_j,z_j}$ are projectors up to a permutation on the input basis (i.e., $P_{y_j,z_j}=\ketbra{y_j}{y_j}_\out \otimes\ketbra {z_j}{z_j}_\inp$). We are then left with:
\begin{alignat*}{2}
&	\text{minimize:}\ &&\max_{i} \sum_{\vec{x}\in S^{\textup{Cl.}}} \tr[(C^{\textup{Cl.}}-C) T_i(\vec{x})] \\
& 	\text{subject to:}\ && 	S^{\textup{Cl.}}=\{\vec{x}|\tr[(C^{\textup{Cl.}}-C)T_i(\vec{x})]\geq 0\},\\
&	&&C^{\textup{Cl.}}=\sum_{y_K,\ldots,y_1} \mathbbm{P}_K(\vec{y}) P_{y_K}\otimes\cdots \otimes P_{y_1}, \\
& &&\mathbbm{P}_K(\vec{y})\; \mathrm{joint\, prob.\, distribution}.
\end{alignat*}
Since both $C$ and $C^{\textup{Cl.}}$ represent (up to a non-contributing $\chi$) deterministic quantum combs, we have
\begin{align}
\sum_{\vec{x}} \tr[(C^{\textup{Cl.}}-C) T_i(\vec{x})]=0
\end{align}
and thus
\begin{align}
\sum_{\vec{x}} \left|\tr[(C^{\textup{Cl.}}-C) T_i(\vec{x})]\right|=2 \sum_{\vec{x}\in S^{\textup{Cl.}}} \tr[(C^{\textup{Cl.}}-C) T_i(\vec{x})].
\end{align}
This allows us to express $M(C)$ as half of the solution of
\begin{alignat*}{2}
&	\text{minimize:}\ &&\max_{i} \sum_{\vec{x}} \left|\tr[(C^{\textup{Cl.}}-C) T_i(\vec{x})] \right|  \\
& 	\text{subject to:}\  &&	C^{\textup{Cl.}}=\sum_{y_K,\ldots,y_1} \mathbbm{P}_K(\vec{y}) P_{y_K}\otimes\cdots \otimes P_{y_1}, \\
& &&\mathbbm{P}_K(\vec{y})\; \mathrm{joint\, prob.\, distribution}.
\end{alignat*}
In order to transform this program into an LP, for every testing sequence $\{T_i(\vec{x})\}_{\vec{x}}$, we define an arbitrary order of the outcomes $\vec{x}$, i.e, we label them as $\vec{x}_{j}$. Then
\begin{align}
\max_i& \sum_{\vec{x}} \left|\tr[(C^{\textup{Cl.}}-C) T_i(\vec{x})] \right|
\end{align}
is the solution of
\begin{alignat*}{2}
&	\text{minimize:}\ && a \\
& 	\text{subject to:}\ &&a \ge  \sum_{j} \left|\tr[(C^{\textup{Cl.}}-C) T_i(\vec{x}_j)] \right|, 
\end{alignat*}
which is equivalent to
\begin{alignat*}{2}
&	\text{minimize:}\ && a \\
& 	\text{subject to:}\ &&a\ge s_{i}, \\
& &&s_{i}=\sum_j b_{ij}, \\
& &&b_{ij}\ge c_{ij}\ge -b_{ij},\\
& &&c_{ij}=\tr\left[(C^{\textup{Cl.}}-C)T_i(\vec{x}_{j})\right].
\end{alignat*}
Combining this with the outer minimization, we finally have that $M(C)$ is half of the solution of
\begin{alignat*}{2}
&	\text{minimize:}\ && a \\
& 	\text{subject to:}\ &&a\ge s_{i}, \\
& &&s_{i}=\sum_j b_{ij}, \\
& &&b_{ij}\ge c_{ij}\ge -b_{ij},\\
& &&c_{ij}=\tr\left[(C^{\textup{Cl.}}-C)T_i(\vec{x}_{j})\right], \\
& &&	C^{\textup{Cl.}}=\sum_{y_K,\ldots,y_1} \mathbbm{P}_K(\vec{y}) P_{y_K}\otimes\cdots \otimes P_{y_1}, \\
& &&\mathbbm{P}_K(\vec{y})\; \mathrm{joint\, prob.\, distribution},
\end{alignat*}
which is a linear program.

In order to simplify the numerical implementation and the derivation of the dual program, we will also order the vectors $\vec{y}$ (arbitrarily), identify $p_k$ with $\mathbbm{P}_K(\vec{y}(k))$, and define $\alpha_{ijk}$
\begin{align}
\tr\left[C^{\textup{Cl.}} T_i(\vec{x}_{j})\right]=&\sum_k p_k \alpha_{ijk}
\end{align}
for all $p_k$, i.e., 
\begin{align}
\alpha_{ijk}=\tr\left[ P_{y_K(k)}\otimes\cdots \otimes P_{y_1(k)} T_i(\vec{x}_j) \right]
\end{align} 
for the sequence $y_K(k),...,y_1(k)$ corresponding to $\vec{y}(k)$. In addition, we define 
\begin{align}
\beta_{ij}=\tr \left[ C T_i\left(\vec{x}_j\right)\right],
\end{align}
which allows us to write 
\begin{align}
c_{ij}=&\tr\left[(C^{\textup{Cl.}}-C)T_i(\vec{x}_{j})\right] \notag \\
=& \sum_k p_k \alpha_{ijk}-\beta_{ij}.
\end{align}

Then, the above optimization problem is equivalent to
\begin{alignat*}{2}
&	\text{minimize:}\ && a \\
& 	\text{subject to:}\ &&\sum_j b_{ij}-a\le0, \\
& &&\sum_k p_k \alpha_{ijk}-\beta_{ij} -b_{ij}\le0, \\
& &&-\sum_k p_k \alpha_{ijk}+\beta_{ij} -b_{ij}\le0,\\
& &&\sum_k p_k -1=0, \\
& &&	p_k, a, b_{ij}\ge0. 
\end{alignat*}
The Lagrangian corresponding to this problem is
\begin{align}
L(a, &p_k, b_{ij}, X_i, Y_{ij}, Z_{ij}, W) \nonumber \\
=&a \left[1-\sum_i X_i \right]+\sum_{ij} b_{ij} \left[X_i-Y_{ij}-Z_{ij}\right] \nonumber \\
&+\sum_k p_k \left[\sum_{ij}\alpha_{ijk}\left(Y_{ij}-Z_{ij}\right)-W\right] \nonumber \\
&+W+\sum_{ij} \beta_{ij}\left(Z_{ij}-Y_{ij}\right)
\end{align}
and the dual function explicitly written
\begin{align}
q( X_i,& Y_{ij}, Z_{ij}, W) \nonumber\\
=&\inf_{p_k\ge0, a, b_{ij}} L(a, p_k, b_{ij}, X_i, Y_{ij}, Z_{ij}, W),
\end{align}
where we used that $a, b_{ij} \ge0$ is implicit in the remaining conditions.
The dual problem is then given by
\begin{alignat*}{2}
&	\text{maximize:}\ && W+\sum_{ij} \beta_{ij}(Z_{ij}-Y_{ij}) \\
& 	\text{subject to:}\ &&\sum_i X_i=1, \\
& &&X_i-Y_{ij}-Z_{ij}=0\quad\forall\; ij , \\ 
& &&\sum_{ij} \alpha_{ijk}\left(Y_{ij}-Z_{ij}\right)-W\ge0 \quad\forall\; k,\\
& &&X_i, Y_{ij}, Z_{ij}\ge0, \\
& &&W \in \mathbbm{R},
\end{alignat*} 
which can straightforwardly be reformulated as
\begin{alignat*}{2}
&	\text{maximize:}\ &&  \Omega\\
& 	\text{subject to:}\ && \Omega \leq \sum_{ij} \left(\alpha_{ijk}-\beta_{ij}\right)\left(2 Y_{ij}-X_{i}\right)\quad\forall \ k, \\
& && \sum_i X_i=1, \\
& &&X_i, Y_{ij}, X_i-Y_{ij} \ge0,  \\
& && \Omega \in \mathbbm{R}.
\end{alignat*}
Evidently, the above considerations are amenable to many extensions but that is the matter of future work.

\section{Non-discord-creating maps}
\label{app::discord_zero}
Here, for comprehensiveness, we characterize the set of maps $\Gamma: \Bcal(\Hcal^\inp_s \otimes \Hcal^\inp_e) \rightarrow \Bcal(\Hcal^\out_s \otimes \Hcal^\out_e)$ that map discord-zero states to discord-zero states, where we mean discord-zero with respect to the classical basis. Such system-environment maps form a subset of the NDGD maps of Definition~\ref{def::NDGD} (in the sense that a set of them would satisfy Eq.~\eqref{eqn::NDCG}) and would thus lead to classical statistics on the level of the system. However, for classical statistics, it is not necessary that the underlying maps do not create discord.  

To facilitate notation, throughout this Appendix, we will denote discord-zero states as \DOpr states, and maps that do not create discord as \DOpr maps. We have the following lemma:
\begin{lemma}[Structure of \DOpr maps]
\label{lem::DO}
The Choi state $G$ of a \DOthm map $\Gamma: \Bcal(\Hcal^\inp_s \otimes \Hcal^\inp_e) \rightarrow \Bcal(\Hcal^\out_s \otimes \Hcal^\out_e)$ is of the form 
\begin{gather}
\label{eqn::FormDO}
    G = \sum_{k,j = 1}^{d_s} p_{k|j} \Pi_k^{\out} \otimes \Pi_j^{\inp} \otimes O_{jk}^{\out \inp} + G^\perp\, ,
\end{gather}
where $\{\Pi_l^{\inp/\out}\}$ are orthogonal rank-$1$ projectors on $\Hcal_{s}^{\inp/\out}$ that are diagonal in the computational basis, $O_{jk}^{\out\inp} \in  \Bcal(\Hcal^\out_e \otimes \Hcal^\inp_e)$ is the Choi state of a CPTP map $\Omega_{jk}: \Bcal(\Hcal^\inp_e) \rightarrow \Bcal(\Hcal^\out_e)$, $p_{k|j}$ is a conditional probability distribution, i.e., $\sum_k p_{k|j} = 1$ and $p_{k|j} \geq 0$, and $G^\perp \in \Bcal(\Hcal^\out_s \otimes \Hcal^\out_e \otimes \Hcal^\inp_s \otimes \Hcal^\inp_e)$ is orthogonal to the set of \DOthm states, i.e., $\tr[(\mathbbm{1} \otimes \rho) G^\perp] = 0$ for all \DOthm states $\rho \in \Bcal(\Hcal^\inp_s \otimes \Hcal^\inp_e)$.
\end{lemma}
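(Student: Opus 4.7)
The plan hinges on the characterization that a state $\rho$ is \DO (with respect to the computational basis on $s$) if and only if $(\Delta^s \otimes \Ical^e)[\rho] = \rho$. Consequently, $\Gamma$ maps \DO states to \DO states precisely when
\begin{equation*}
    (\Delta^s_\out \otimes \Ical^e_\out) \circ \Gamma \circ (\Delta^s_\inp \otimes \Ical^e_\inp) = \Gamma \circ (\Delta^s_\inp \otimes \Ical^e_\inp),
\end{equation*}
an identity that transfers directly to the Choi state $G$ and drives the target decomposition~\eqref{eqn::FormDO}.

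First, I would split $G = G_\mathrm{diag} + G^\perp$ according to whether the piece is block-diagonal in the $s$-input computational basis, with $G_\mathrm{diag} := (\Ical_\out \otimes \Delta^s_\inp \otimes \Ical^e_\inp)[G]$ and $G^\perp := G - G_\mathrm{diag}$. The off-diagonal piece $G^\perp$ vanishes on every \DO input $\rho = \sum_j \Pi_j^\inp \otimes \xi_j$ because in $\tr_\inp[G^\perp (\mathbbm{1}_\out \otimes \rho^T)]$ only $\Pi_j^\inp$-block-diagonal entries of $G^\perp$ survive, and those are zero by construction; this is precisely the orthogonality condition stated for $G^\perp$. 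Next, for each fixed $j$ I would analyze the restricted CP map $\Gamma_j[\xi] := \Gamma[\Pi_j^\inp \otimes \xi]$, whose Choi state equals $\langle j|_{\inp,s} G_\mathrm{diag} |j\rangle_{\inp,s}$. The \DO hypothesis forces $\Gamma_j[\xi]$ to be \DO for every $\xi \geq 0$, which at the Choi level makes this operator diagonal in the $s$-output computational basis, so it must equal $\sum_k \Pi_k^\out \otimes \tilde{O}_{jk}$ with $\tilde{O}_{jk} \geq 0$. Reassembling gives $G_\mathrm{diag} = \sum_{j,k} \Pi_k^\out \otimes \Pi_j^\inp \otimes \tilde{O}_{jk}$, which is~\eqref{eqn::FormDO} up to the final normalization.

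The last step factors $\tilde{O}_{jk} = p_{k|j}\, O_{jk}$ so that $O_{jk}$ is the Choi of a CPTP environment map $\Omega_{jk}$ and $\sum_k p_{k|j} = 1$. Trace preservation of $\Gamma$ (that is, $\tr_\out G = \mathbbm{1}_\inp$) splits cleanly as $\tr_\out G^\perp = 0$ together with $\sum_k \tr_{\out,e}[\tilde{O}_{jk}] = \mathbbm{1}_{e,\inp}$ for each $j$, so that the environment map whose Choi is $\sum_k \tilde{O}_{jk}$ is CPTP. The main obstacle sits here: a general positive $\tilde{O}_{jk}$ need not be proportional to a CPTP Choi, so the per-$k$ factorization claimed by the lemma is not automatic from the pointwise positivity of the $\tilde{O}_{jk}$ and requires the specific normalization $p_{k|j} := \tr[\tilde{O}_{jk}]/d_e$ (with trivial reassignments when $p_{k|j}=0$), combined with the just-derived trace identity to ensure consistency. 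The converse direction (any $G$ of the stated form yields a \DO map) is then immediate by linearity together with the vanishing of $G^\perp$ on \DO inputs.
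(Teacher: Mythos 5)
Your proposal follows essentially the same route as the paper's proof: isolate the part of the Choi state that governs the action on product inputs $\Pi_j^\inp\otimes\eta$, use the discord-zero requirement on those inputs to force block-diagonality in both the $s$-input and the $s$-output computational bases, and then normalize the surviving environment blocks. Your treatment of the first two steps is, if anything, cleaner than the paper's basis-expansion argument: the fixed-point characterization $(\Delta^s\otimes\Ical^e)[\rho]=\rho$ of discord-zero states and the induced identity $(\Delta_{\out}\otimes\Ical)\circ\Gamma\circ(\Delta_{\inp}\otimes\Ical)=\Gamma\circ(\Delta_{\inp}\otimes\Ical)$ deliver the split $G=G_{\mathrm{diag}}+G^{\perp}$ and the form $G_{\mathrm{diag}}=\sum_{j,k}\Pi_k^{\out}\otimes\Pi_j^{\inp}\otimes\tilde O_{jk}$ in one sweep, and they establish the stronger, operator-valued orthogonality $\tr_{\inp}[(\mathbbm{1}\otimes\rho^{\mathrm{T}})G^{\perp}]=0$, which implies the scalar condition in the statement.

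The obstacle you flag in the last step is genuine, and your proposed normalization does not remove it. Each $\tilde O_{jk}$ is the Choi state of the instrument element $\eta\mapsto\bra{k}\Gamma[\Pi_j^{\inp}\otimes\eta]\ket{k}$ (partial inner product on $\Hcal_s^{\out}$); for a \DOthm map that, e.g., measures the environment and steers the system output conditionally on the result, the outcome probability depends on $\eta$, so the trace of $\tilde O_{jk}$ over $\Hcal_e^{\out}$ is not proportional to $\mathbbm{1}_{e}^{\inp}$, and no scalar rescaling turns $\tilde O_{jk}$ into a CPTP Choi state; setting $p_{k|j}=\tr[\tilde O_{jk}]/d_e$ only fixes $\sum_k p_{k|j}=1$. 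You have not overlooked an ingredient that the authors supply: the paper's own proof makes exactly the same jump, writing the output as $\sum_r p_{r|\ell}\,\Pi_r^{\out}\otimes\xi_{r|\ell}^{\out}$ while suppressing the dependence of $p_{r|\ell}$ and $\xi_{r|\ell}$ on the environment input $\eta_\ell$. The way to reach the form as literally stated is to pick any probabilities $p_{k|j}$ with $\sum_k p_{k|j}=1$ and CPTP Choi states $O_{jk}$, and to absorb the remainders $R_{jk}=\tilde O_{jk}-p_{k|j}O_{jk}$ into $G^{\perp}$: since $\sum_k\ptr{\out}[R_{jk}]=0$ by trace preservation, the term $\sum_{j,k}\Pi_k^{\out}\otimes\Pi_j^{\inp}\otimes R_{jk}$ does satisfy the scalar orthogonality $\tr[(\mathbbm{1}\otimes\rho)G^{\perp}]=0$ demanded of $G^{\perp}$ in the lemma---although then $G^{\perp}$ contributes to $\Gamma[\rho]$ on discord-zero inputs with vanishing total trace only, which is weaker than what your construction of $G^{\perp}$ achieves. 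If one insists that $G^{\perp}$ annihilate discord-zero inputs as an operator identity, the honest conclusion of your argument is that the collection $\{\tilde O_{jk}\}_k$ forms an environment instrument for each $j$, rather than a probability vector times CPTP maps.
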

Before we prove this lemma, we emphasize its structural relation to the representation of MIOs, i.e., the structure of maps $\Fcal: \Bcal(\Hcal^\inp_s) \rightarrow \Bcal(\Hcal^\out_s)$ that map incoherent states $\rho \in \Xi \subset \Bcal(\Hcal^\inp_s)$ onto incoherent states $\rho' = \Fcal[\rho] \in \Xi \subset \Bcal(\Hcal^\out_s)$, where $\Xi$ denotes the set of incoherent states with respect to the computational basis. The Choi state $F$ of the map $\Fcal$ is a positive element of $\Bcal(\Hcal^\out_s \otimes \Hcal^\inp_s)$. Choosing a basis $\{\tau^{\out}_k\}_{k=1}^{d_s^2}$ and $\{\omega^{\inp}_j\}_{j=1}^{d_s^2}$ for $\Bcal(\Hcal^\out_s)$ and  $\Bcal(\Hcal^\inp_s)$, respectively, any $F$ can be written as 
\begin{gather}
\label{eqn::FormF}
    F = \sum_{j,k} f_{j k} \, \tau^{\out}_k \otimes \omega^\inp_j\, ,
\end{gather}
where $f_{jk} \in \mathbbm{R}$. We can choose the basis $\{\omega^{\inp}_j\}$ to consist of the $d_s$ rank-$1$ projectors $\Pi^\inp_j$ in the computational basis and $d_s(d_s-1)$ elements $\widetilde \Pi^\inp_s$ that are orthogonal to these projectors, i.e., such that $\tr(\Pi^\inp_j\widetilde{\Pi}^\inp_s) = 0$ (e.g., one could choose the off-diagonal elements $\ketbra{m}{n} + \ketbra{n}{m}$ and $\mathbbm{i}(\ketbra{m}{n} - \ketbra{n}{m})$). With this choice of basis elements Eq.~\eqref{eqn::FormF} reads
\begin{gather}
    F = \sum_{j,k} f_{kj} \, \tau^{\out}_k \otimes \Pi^\inp_j + \sum_{r,s} \widetilde f_{rs}  \, \tau^{\out}_r \otimes \widetilde \Pi_{s}^{\inp}\, .
\end{gather}
Imposing the requirement that $\Fcal$ does not create coherences with respect to the classical basis then yields 
\begin{gather}
\label{eqn::FormFincoherent}
    F = \sum_{j,k} p_{k|j}\, \Pi_k^\out \otimes \Pi_j^\inp + \sum_{r,s} \widetilde f_{rs} \, \tau^{\out}_r \otimes \widetilde \Pi_{s}^{\inp}\, ,
\end{gather}
where $p_{k|j} \geq 0$, $\sum_k p_{k|j} = 1$, and $\tau_{r}^\out \in \Bcal(\Hcal_s^{\out})$. Indeed, an $F$ of the form of Eq.~\eqref{eqn::FormFincoherent} yields an incoherent output state for any incoherent input state $\rho_\mathrm{cl} = \sum_{r=1}^{d_s} q_r \Pi_r^{\inp} \in \Xi$:
\begin{gather}
    \Fcal[\rho_\mathrm{cl}] = \tr_\inp\left[\left(\mathbbm{1}^\out \otimes \rho_{\mathrm{cl}}^{\mathrm{T}}\right) F \right] = \sum_{kr} p_{k|r} q_r \Pi_k^\out\,.
\end{gather}
Importantly, Eq.~\eqref{eqn::FormFincoherent} constitutes a decomposition of the form $F = F^{\parallel} + F^\perp$, where $F^{\parallel}  = \sum_{j,k} p_{k|j}\, \Pi_k^\out \otimes \Pi_j^\inp$ encapsulates the action of $\Fcal$ on incoherent states, and $F^\perp$ is such that all incoherent states lie in its kernel, i.e., $\tr(\rho F^\perp) = 0$ for all $\rho \in \Xi$. The fact that $F^\perp$ does not have to vanish in order for $\Fcal$ to be an MIO demonstrates in a transparent way the (well-known) fact that there are MIOs that necessitate coherent resources for their implementation~\cite{chitambar2016critical,chitambar2016comparison,marvian2016quantify}.

As emphasized throughout the main body of this paper, \DOpr states reduce to incoherent ones when the environment is trivial. Consequently, \DOpr maps are the natural extension of MIOs, and the proof of Lemma~\ref{lem::DO} follows similar logic to the above proof for the structural properties of MIOs:
\begin{proof}
Employing the reasoning that led to Eq.~\eqref{eqn::FormFincoherent}, any \DOpr map $\Gamma$ has a Choi state $G$ of the form 
\begin{align}
\notag
 G = &\sum_{kj\mu\nu} g_{k\mu j\nu}\,  \tau^{\out}_k \otimes \Pi_j^{\inp} \otimes N_{\mu \nu}^{\out \inp} \\
 \label{eqn::Discord0maps}
 &+ \sum_{r s \mu \nu} \widetilde g_{r\mu s\nu}\,  \tau^{\out}_r \otimes \widetilde \Pi_s^{\inp} \otimes N_{\mu \nu}^{\out \inp}\,,
 \end{align}
where $g_{k\mu j\nu}, \widetilde g_{r\mu s\nu} \in \mathbbm{R}$ and $\{N_{\mu\nu}^{\out \inp}\}_{\mu,\nu=1}^{d_e^2}$ is a basis of $\Bcal(\Hcal_e^{\out} \otimes \Hcal_e^{\inp})$. As for the case of MIOs, Eq.~\eqref{eqn::Discord0maps} constitutes a decomposition $G = G^{\parallel} + G^{\perp} $, where $G^\perp$ is orthogonal to the set of \DOpr states. Consequently, the action of $\Gamma$ on any \DOpr state is entirely encapsulated in $G^{\parallel}$ and it remains to show that this term is of the form given in the lemma. To this end, we note that a map $\Gamma$ is \DOpr iff it maps \textit{any} state of the form $\Pi^{\inp}_\ell \otimes \eta^{\inp}_\ell$ to a \DOpr state. Letting $\Gamma$ act on such a product state, we obtain
\begin{align}
\notag
    \Gamma[\Pi^{\inp}_\ell \otimes \eta^{\inp}_\ell] &= \tr_{\inp}\left\{\left[\mathbbm{1}^\out\otimes \left(\Pi^{\inp}_\ell \otimes \eta^{\inp}_\ell\right)^{\mathrm{T}}\right] G^\parallel\right\} \\
    \label{eqn::DOpar1}
    &=\sum_{k\mu\nu} g_{k\mu \ell\nu}\, \tau^\out_k \otimes \tr_{\inp}\left[\left(\mathbbm{1}^\out \otimes \eta_\ell^{\inp \mathrm{T}}\right) N_{\mu \nu}^{\out \inp}\right] \\
    \label{eqn::DOpar2}
    &\stackrel{!}{=} \sum_r p_{r|\ell}\, \Pi_t^{\out} \otimes \xi_{r|\ell}^\out\, ,
\end{align} 
where $\sum_r p_{r|\ell} = 1$ and $p_{r|\ell} \geq 0$, and $\xi_r^{\out} \in \Bcal(\Hcal_e^{\out})$ are states of the environment. The last line of Eq.~\eqref{eqn::DOpar2} stems from the requirement that $\Gamma$ is a \DOpr map, and the remaining open index $\ell$ signifies that the resulting output state depends on the input state $\Pi^{\inp}_\ell \otimes \eta^{\inp}_\ell$. In the same way as above, we can choose the basis $\{\tau_k^\out\}$ to consist of projectors $\{\Pi^\out_k\}$ onto the computational basis and elements that are orthogonal to these projectors. Then, comparing Eqs.~\eqref{eqn::DOpar1} and~\eqref{eqn::DOpar2}, we see that all of the terms of $G^\parallel$ where $\tau^\out_k$ is not a projector onto the computational basis must vanish. Finally, the terms $N_{\mu \nu}^{\out \inp}$ have to be such that $\tr_{\inp}\left[\left(\mathbbm{1} \otimes \eta_l^{\inp \mathrm{T}}\right) \sum_{\mu \nu} g_{k\mu l\nu} N_{\mu \nu}^{\out \inp}\right]$ yields the correct output state $p_{k|\ell} \xi_{k|\ell}^\out$. Consequently, $\sum_{\mu \nu} g_{k\mu l\nu} N_{\mu \nu}^{\out\inp}$ can be chosen to be (up to normalization $p_{k|\ell}$) the Choi state $O^{\out\inp}_{k\ell}$ of a CPTP map. Putting these observations together yields Eq.~\eqref{eqn::FormDO}.
\end{proof}

\section{Proof that NDGD \texorpdfstring{$\Rightarrow$}{} classical process}
\label{app::proof_NDGD}
For the proof of Theorem~\ref{thm::NDCG}, we employ the fact that the completely dephasing map has no influence on the outcomes of a measurement in the computational basis, i.e., 
\begin{gather}
\label{eqn::Commute_class}
    \Pcal_{x_j} = \Delta_j \circ \Pcal_{x_j} = \Pcal_{x_j} \circ \Delta_j \quad \forall \ x_j\, .
\end{gather}
The probability $\Pprob_k(x_k,\dots,x_1)$ to measure outcomes $\{x_k,\dots,x_1\}$ at times $\{t_k,\dots,t_1\}$ is given by 
(see Eq.~\eqref{eq:extra5})
\begin{gather}
\label{eqn::NDGD_class_Prob}
    \tr\{(\Pcal_{x_k}{\otimes \Ical^e})\circ \cdots \circ \Gamma_{t_2,t_1} \circ (\Pcal_{x_1}\otimes \Ical^e)[\eta^{se}_{t_1}]\}\, ,
\end{gather}
where $\{\Gamma_{t_j,t_{j-1}}\}$ are system-environment CPTP maps and $\eta^{se}_{t_1}$ is the system-environment state at time $t_1$.
Summing this probability distribution over the outcomes at time $t_j$ amounts to replacing $\Pcal_{x_j}$ in~\eqref{eqn::NDGD_class_Prob} by $\Delta_j$. `Zooming in' on the relevant time
(and leaving the $\Ical^e$ implicit), we see that
\begin{align}
\notag
    & \Pcal_{x_{j+1}} \circ \Gamma_{t_{j+1},t_j} \circ \Delta_j \circ \Gamma_{t_j,t_{j-1}} \circ \Pcal_{x_{j-1}}\\
        \notag&= \Pcal_{x_{j+1}} \circ \Delta_{j+1} \circ \Gamma_{t_{j+1},t_j} \circ \Delta_j \circ \Gamma_{t_j,t_{j-1}} \circ \Delta_{j-1} \circ \Pcal_{x_{j-1}}\\
\label{eqn::Classical_1}
   & = \Pcal_{x_{j+1}} \circ \Gamma_{t_{j+1},t_j} \circ \Ical_j \circ \Gamma_{t_j,t_{j-1}} \circ \Pcal_{x_{j-1}}\, ,
\end{align}
where we have used Eq.~\eqref{eqn::Commute_class} in the first line, and both the fact that the dynamics is NDGD 
and Eq.~\eqref{eqn::Commute_class} in the second line. As Eq.~\eqref{eqn::Classical_1} holds for arbitrary times $t_j$, it implies that for NDGD dynamics, the completely dephasing map cannot be distinguished from the identity map when the process is probed by measurements in the computational basis, 
which implies that the Kolmogorov condition holds for any joint probabilities with at least 3 different times. For the 2-time joint probabilities, we can exploit, along with the NDGD property, the fact that the initial state is zero-discord. We have
\begin{align}
\notag &\sum_{x_1} \Pprob_2(x_2,x_1)  \nonumber\\
&= \tr\{\Pcal_{x_2}\circ \Gamma_{t_2,t_1} \circ \Delta_1\circ\Gamma_{t_1,t_0}[\eta^{se}_{t_0}]\} \nonumber\\
&= \tr\{\Pcal_{x_2}\circ \Delta_2\circ \Gamma_{t_2,t_1} \circ \Delta_1\circ\Gamma_{t_1,t_0}[\eta^{se}_{t_0}]\} \nonumber\\
& =\tr\{\Pcal_{x_2}\circ \Delta_2\circ \Gamma_{t_2,t_1} \circ \Delta_1\circ \Gamma_{t_1,t_0}\circ\Delta_0[\eta^{se}_{t_0}]\} \nonumber \\
& =\tr\{\Pcal_{x_2}\circ \Gamma_{t_2,t_0}[\eta^{se}_{t_0}] \}
= \Pprob_1(x_2),
\end{align}
where we used Eq. \eqref{eqn::NDGD_class_Prob} and $\sum_{x_1}\Pcal_{x_1}=\Delta_1$ in the first line,
Eq.~\eqref{eqn::Commute_class} in the second line,
the invariance of the initial zero-discord state with respect to $\Delta_0$
in the third line, and finally the definition of NDGD dynamics,
Eq.~\eqref{eqn::Commute_class}, and the invariance of $\eta^{se}_{t_0}$ in the fourth line. Consequently, the resulting statistics satisfy all of the Kolmogorov conditions and are thus classical.

\section{Classicality \texorpdfstring{$\cancel \Rightarrow$}{} NDGD}
\label{app::Class_non_NDGD}

Here, we provide an example of dynamics that are not NDGD, yet lead to classical dynamics, thus demonstrating that it is not necessary for a dynamics to be NDGD in order for it to appear classical. 
\begin{figure}
\vspace{.5cm}
\centering
 \includegraphics[width=0.95\linewidth]{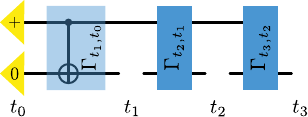}
 \caption{\textbf{Non-NDGD dynamics that leads to classical statistics.} The first map $\Gamma_{t_1,t_0}$ (blue transparent box) performs a CNOT gate on the system and the environment. The subsequent CPTP map $\Gamma_{t_2,t_1}$ maps $\varphi^+$ and $\mathbbm{1}/4$ onto two different system-environment states with the same reduced system state $\rho_{t_3} = \mathbbm{1}/2$. The final CPTP map $\Gamma_{t_3,t_2}$ is such that it induces a unital dynamics on the system. Consequently, the system state at $t_1, t_2,$ and $t_3$ is maximally mixed independent of whether the completely dephasing, or the identity map was implemented at $t_1$ and $t_2$.}
 \label{fig::Non_NDGD}
 \vspace{-.5cm}
\end{figure}
We consider the following situation (see Fig.~\ref{fig::Non_NDGD} for a graphical representation): Let the system of interest be a qubit that is initially in state $\ket{0}$ and let the initial environment be in a plus state, i.e., $\tau_{t_0}^e = \tfrac{1}{\sqrt{2}}(\ket{0}+\ket{1}))$. The first evolution $\Gamma_{t_1,t_0}$ from $t_0$ to $t_1$ is a CNOT gate, such that the system-environment state at $t_1$ is a maximally entangled state. The second evolution $\Gamma_{t_2,t_1}$ from $t_1$ to $t_2$ is such that it yields a system-environment state $\mathbbm{1}_s/2 \otimes \ketbra{0}{0}$ if the $s e'$ input state is $\varphi^+_{se'}$, and $\mathbbm{1}_s/2 \otimes \ketbra{1}{1}$ otherwise. Consequently, when the completely dephasing map is applied at $t_1$, the system-environment state at $t_2$ is $\mathbbm{1}_s/2 \otimes \mathbbm{1}_e/2$, while it is equal to $\mathbbm{1}_s/2 \otimes \ketbra{0}{0}$ if the identity map was implemented, and as such, the dynamics is not NDGD. However, the system state is always maximally mixed, independent of whether $\Delta_1$ or $\Ical_1$ was implemented at time $t_1$. To make the example non-trivial, we add a third free dynamics $\Gamma_{t_3,t_2}$ from $t_2$ to $t_3$. We choose $\Gamma_{t_3,t_2}$ such that it induces a unital dynamics on the level of the system, independent of the environment state at $t_2$. This happens, e.g., when the corresponding system-environment Hamiltonian is of product form, i.e., $H_{se} = H_s \otimes H_e$, independent of the explicit form of the respective terms~\cite{sakuldee_non-markovian_2018}. With this final dynamics, the system state at each of the times $t_1,t_2,$ and $t_3$ is maximally mixed, and the resulting statistics satisfy Kolmogorov conditions, i.e., they are classical. 

\section{Example of a genuinely quantum process}
\label{app:genuinelyquantumprocess}

\begin{figure*}
	\centering
	\includegraphics[scale=0.95]{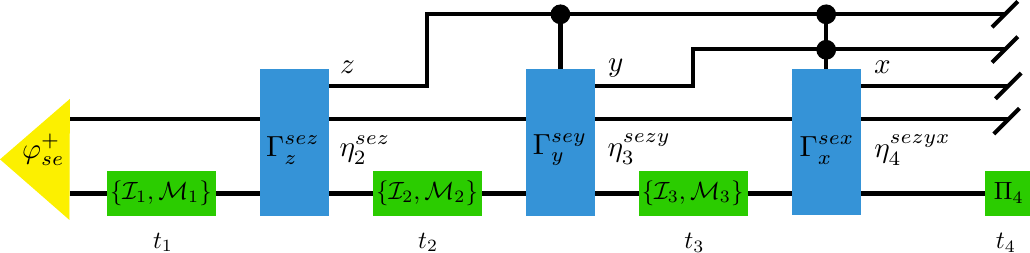}
	\caption{\textbf{Genuinely quantum process.} The system-environment begin in a maximally entangled Bell state $\varphi^+_{se}$. As described in the text, the process dynamics consists of a sequence of maps, $\Gamma_z^{sez}, \Gamma_y^{sey}, \Gamma_x^{sex}$, that either output $\varphi^+_{se}$ or else bias the system in either the $z$-, $y$- or $x$-basis respectively (see Eqs.~\eqref{eq:gammazapp} -- \eqref{eq:gammaxapp}). The overall implementation of each of these maps is controlled on the joint state of all previous classical flag states $z,y,x$, which encode whether or not the system has already been biased. We show that this process is genuinely quantum by tracking the system-environment state throughout the dynamics, conditioned on whether the identity map $\mathcal{I}$ or an arbitrary CPTP map $\Mcal_i$ was implemented at time $t_i$; the labels $\eta_2^{sez}, \eta_3^{sezy}$ and $\eta_4^{sezyz}$ refer to the overall joint state immediately prior to the interrogation at the relevant time (see Eqs.~\eqref{eq:appetajointIII}, \eqref{eq:appetajointIIL}, \eqref{eq:appetajointILI3}, \eqref{eq:appetajointILI}, \eqref{eq:appetajointLII2} and \eqref{eq:appetajointLII}). In particular, we show that there does not exist a non-pathological POVM $\Pi_4$ that an experimenter can implement at $t_4$ such that the four sequences $\{ \mathcal{I}_1,\mathcal{I}_2,\mathcal{I}_3\}$, $\{ \Mcal_1,\mathcal{I}_2,\mathcal{I}_3\}$, $\{ \mathcal{I}_1,\Mcal_2,\mathcal{I}_3\}$ and $\{ \mathcal{I}_1,\mathcal{I}_2,\Mcal_3\}$ cannot be distinguished, thereby proving that the process is genuinely quantum.}
	\label{fig:appgenuinelyquantum}
\end{figure*}

Consider the following process, depicted in Fig.~\ref{fig:appgenuinelyquantum}, which is a variation on that presented in Section~\ref{sec:genuinelyquantumprocess}. The process begins with a two-qubit system-environment state in the Bell pair $\varphi^+_{se}$, the system part of which the experimenter has access to measure at $t_1$. Following this, the process `performs' the CPTP system-environment map $\Gamma_z^{sez} : \mathcal{B}(\mathcal{H}^{s^\inp} \otimes \mathcal{H}^{e^\inp}) \to \mathcal{B}(\mathcal{H}^{s^\out} \otimes \mathcal{H}^{e^\out} \otimes \mathcal{H}^{z^\out})$, whose action is as follows: it measures its joint inputs in the Bell basis and if the measurement outcome corresponds to $\varphi^+_{se}$, it outputs a $\varphi^+_{se}$ system-environment state as well as a classical \emph{flag} state $\ket{0}_z$; on the other hand, if the measurement outcome does not correspond to $\varphi^+_{se}$, it outputs a system-environment state whose system part is a pure state in the $z$-basis, and sets the flag state to $\ket{1}_z$ to indicate that the system state has been biased in the $z$-basis. The action of the map is thus as follows: 
\begin{align}\label{eq:gammazapp}
    \Gamma_z^{sez}[\eta_{se}] =& \tr{[\varphi_{se}^+  \eta_{se}]} \varphi_{se}^+ \otimes \ket{0}\bra{0}_{z} \notag \\ +& \tr{[(\mathbbm{1}_{se} - \varphi_{se}^+) \eta_{se}]}  \ket{0} \bra{0}_{s} \otimes \tau_{e} \otimes \ket{1}\bra{1}_{z}.
\end{align}
For this map (and all that follow in this example), the output state of the environment when the $\varphi^+_{se}$ outcome is \emph{not} recorded is irrelevant for our argument; as such, we simply write a generic quantum state $\tau_{e}$.

Following this part of the dynamics, the experimenter has access to measure the system at time $t_2$. The subsequent dynamics of the process is controlled on the state of the classical $z$ flag: if it is in the state $\ket{0}_z$, the system-environment is subject to a similar dynamics as before, $\Gamma_y^{sey} : \mathcal{B}(\mathcal{H}^{s^\inp} \otimes \mathcal{H}^{e^\inp}) \to \mathcal{B}(\mathcal{H}^{s^\out} \otimes \mathcal{H}^{e^\out} \otimes \mathcal{H}^{y^\out})$; however, this time if the Bell basis measurement outcome does not correspond to $\varphi^+_{se}$, the system is biased in the $y$-basis, e.g., set to the $-1$ eigenstate of $\sigma^{(y)}$, $\ket{-^{(y)}} := \tfrac{1}{\sqrt{2}} (\ket{0} - i \ket{1})$, with a classical $y$ flag set to the state $\ket{1}_y$ and sent forward. If, on the other hand, the $z$ flag is in the state $\ket{1}_z$, the system-environment undergoes trivial dynamics (i.e., is subject to the identity map) and the $y$ flag is set to $\ket{0}_y$. In either case, the previous $z$ flag state is also sent forward unperturbed. Thus,  between $t_2$ and $t_3$, the system-environment evolves conditionally according to
\begin{align}\label{eq:gammayapp}
    &\underline{z = 0}: \notag \\
    &\Gamma_y^{sey}[\eta_{se}] = \tr{[\varphi_{se}^+  \eta_{se}]} \varphi_{se}^+ \otimes \ket{0}\bra{0}_{ y}  \notag \\
    &+ \tr{[(\mathbbm{1}_{se} - \varphi_{se}^+)  \eta_{se}]} \ket{-^{(y)}} \bra{-^{(y)}}_{s} \otimes \tau_{e} \otimes \ket{1}\bra{1}_{ y} \notag \\
    &\underline{z = 1}: \notag \\
    &\mathcal{I}^{se}[\eta_{se}] \otimes \ket{0}\bra{0}_{y}.
\end{align}
Following this, the experimenter has access to the system at $t_3$.

The final portion of the dynamics between $t_3$ and $t_4$ follows a similar construction to above, but the implementation of the map $\Gamma_x^{sex} : \mathcal{B}(\mathcal{H}^{s^\inp} \otimes \mathcal{H}^{e^\inp}) \to \mathcal{B}(\mathcal{H}^{s^\out} \otimes \mathcal{H}^{e^\out} \otimes \mathcal{H}^{x^\out})$ is controlled on the \emph{joint} state of the $z$ and $y$ classical flags. If $zy=00$, the system-environment is measured in the Bell basis: if the measurement outcome does not correspond to $\varphi^+$, the system is biased in the $x$-basis, e.g., set to the $-1$ eigenstate of $\sigma^{(x)}$, $\ket{-^{(x)}} := \tfrac{1}{\sqrt{2}} (\ket{0} - \ket{1})$, with a classical $x$ flag set to the state $\ket{1}_x$ and sent forward. If, on the other hand, $zy \neq 00$, the system-environment undergoes trivial dynamics (i.e., is subject to the identity map) and the $x$ flag is set to $\ket{0}_x$. Mathematically, the controlled dynamics is described as 
\begin{align}\label{eq:gammaxapp}
    &\underline{zy = 00}: \notag \\
    &\Gamma_x^{sex}[\eta_{se}] = \tr{[\varphi_{se}^+  \eta_{se}]} \varphi_{se}^+ \otimes \ket{0}\bra{0}_{ x}  \notag \\
    &+ \tr{[ (\mathbbm{1}_{se} - \varphi_{se}^+) \eta_{se}]} \ket{-^{(x)}} \bra{-^{(x)}}_{s} \otimes \tau_{e} \otimes \ket{1}\bra{1}_{ x} \notag \\
    &\underline{zy = 10, 01}: \notag \\
    &\mathcal{I}^{se}[\eta_{se}] \otimes \ket{0}\bra{0}_{x}.
\end{align}
Note that the flag state $zy=11$ cannot occur. Finally, the environment and all flag states are discarded and the experimenter has access to the system at $t_4$, concluding the process.

We now show that there exist no unrestricted measurement scheme for this process such that the statistics observed are classical, i.e., we prove that the process is genuinely quantum. As in the main text, we do this by considering the state of the system to be measured at the final time $t_4$ conditioned on a history of identity maps and arbitrary CPTP maps $\{\Mcal_1,\Mcal_2,\Mcal_3 \}$ implemented at various sets of earlier times. In each case, by demanding classicality we end up with a different constraint on the structure of the POVM at the final time, and the only valid POVMs that simultaneously satisfy all conditions are the pathological ones that do not reveal anything about the process. The conclusion is that any non-pathological POVM at $t_4$ will be able to distinguish between previous implementations of the identity map or an arbitrary non-pathological instrument at a given time, therefore picking up on the invasiveness of (at least some of) the previous interrogations and leading to non-classical statistics.

Consider first the scenario where the experimenter implements identity maps at the first three times, $\mathcal{I}_1, \mathcal{I}_2, \mathcal{I}_3$. In this case, the overall state immediately prior to the measurement at $t_4$ is
\begin{align}\label{eq:appetajointIII}
    \eta_4^{sezyx}(\mathcal{I}_1, \mathcal{I}_2, \mathcal{I}_3) = \varphi_{se}^+ \otimes \ket{000}\bra{000}_{zyx}.
\end{align}
The reduced system state is then maximally mixed:
\begin{align}\label{eq:appetasystemIII}
    \eta_4^{s}(\mathcal{I}_1, \mathcal{I}_2, \mathcal{I}_3) = \frac{\mathbbm{1}}{2}.
\end{align}

Next, consider the case where the experimenter implements the identity map at the first two times, $\mathcal{I}_1, \mathcal{I}_2$, followed by an arbitrary CPTP map $\Mcal_3\neq \mathcal{I}_3$ at $t_3$. The system-environment joint state immediately prior to $t_3$ is $\varphi^+_{se}$, since the previous identity maps on the system and the dynamics $\Gamma_z^{sez}, \Gamma_y^{sey}$ leading up to $t_3$ preserve the initial state; moreover, the $zy$ flag is in the joint state $00$, since both previous Bell basis measurements are necessarily successful. Now, the system-local CPTP map $\Mcal_3 \neq \mathcal{I}_3$ will perturb the joint system-environment state, and so the map $\Gamma_x^{sex}$ (which is implemented due to the joint state of the input flags) only successfully records the outcome corresponding to $\varphi^+_{se}$ with some probability $r = \tr{\left[\varphi^+_{se} (\Mcal_3^s \otimes \mathcal{I}^e) [\varphi^+_{se}] \right]} < 1$; otherwise, the system is biased in the $x$-basis. The total joint state immediately prior to $t_4$ in this scenario is then
\begin{align}\label{eq:appetajointIIL}
    &\eta_4^{sezyx}(\mathcal{I}_1, \mathcal{I}_2, \Mcal_3) = r \varphi_{se}^+ \otimes \ket{000}\bra{000}_{zyx} \notag \\
    &+ (1-r) \ket{-^{(x)}} \bra{-^{(x)}}_{s} \otimes \tau_{e} \otimes \ket{001}\bra{001}_{zyx}.
\end{align}
The reduced system state is thus biased in the $x$-basis:
\begin{align}\label{eq:appetasystemIIL}
    \eta_4^{s}(\mathcal{I}_1, \mathcal{I}_2, \Mcal_3) = \frac{r}{2} \mathbbm{1} + (1-r) \ket{-^{(x)}} \bra{-^{(x)}}.
\end{align}

Next, consider the case where the experimenter implements the identity map at the first and third time, $\mathcal{I}_1, \mathcal{I}_3$, with an arbitrary CPTP map $\Mcal_2\neq \mathcal{I}_2$ implemented in between at time $t_2$. The system-environment joint state immediately prior to $t_2$ is $\varphi^+_{se}$, since the previous identity map on the system and the dynamics $\Gamma_z^{sez}$ prior to $t_2$ again preserve the initial state; moreover, the $z$ flag is in the state $0$, since the earlier Bell basis measurement is necessarily successful. Again, the system-local CPTP map $\Mcal_2 \neq \mathcal{I}_2$ will perturb the joint system-environment state, and so the map $\Gamma_y^{sey}$ (which is implemented due to the state of the input flag) only successfully records the outcome corresponding to $\varphi^+_{se}$ with some probability $q = \tr{\left[\varphi^+_{se} (\Mcal_2^s \otimes \mathcal{I}^e) [\varphi^+_{se}] \right]} < 1$; otherwise, the system is biased in the $y$-basis. The total joint state immediately prior to $t_3$ in this scenario is then
\begin{align}\label{eq:appetajointILI3}
    &\eta_3^{sezy}(\mathcal{I}_1, \Mcal_2) = q \varphi_{se}^+ \otimes \ket{00}\bra{00}_{zy} \notag \\
    &+ (1-q) \ket{-^{(y)}} \bra{-^{(y)}}_{s} \otimes \tau_{e} \otimes \ket{01}\bra{01}_{zy}.
\end{align}
In this case, the experimenter then implements the identity map to the system at $t_3$, which leaves the overall state invariant. The subsequent system-environment dynamics $\Gamma_x^{sex}$ will be enacted when $zy=00$, i.e., with probability $q$: in each such run, the system-environment state is guaranteed to be in the state $\varphi^+_{se}$, thus the system-environment state output by $\Gamma_x^{sex}$ will be also. In the other cases, when $zy\neq00$, the subsequent dynamics will be trivial. Thus, the total joint state immediately prior to $t_4$ in this scenario is 
\begin{align}\label{eq:appetajointILI}
    &\eta_4^{sezyx}(\mathcal{I}_1, \Mcal_2, \mathcal{I}_3) = q \varphi_{se}^+ \otimes \ket{000}\bra{000}_{zyx} \notag \\
   &+ (1-q) \ket{-^{(y)}} \bra{-^{(y)}}_{s} \otimes \tau_{e} \otimes \ket{010}\bra{010}_{zyx}.
\end{align}
The final reduced system state is thus biased in the $y$-basis:
\begin{align}\label{eq:appetasystemILI}
    \eta_4^{s}(\mathcal{I}_1, \Mcal_2, \mathcal{I}_3) = \frac{q}{2} \mathbbm{1} + (1-q) \ket{-^{(y)}} \bra{-^{(y)}}.
\end{align}

Lastly, consider the scenario where the experimenter first implements an arbitrary CPTP map $\Mcal_1\neq \mathcal{I}_1$ at $t_1$, followed by identity maps at the second and third time, $\mathcal{I}_2, \mathcal{I}_3$. Just as in the main text, $\Mcal_1 \neq \mathcal{I}_1$ will perturb the initial system-environment state and so the map $\Gamma_z^{sez}$ will only successfully record the outcome corresponding to $\varphi^+_{se}$ with some probability $p = \tr{\left[\varphi^+_{se} (\Mcal_1^s \otimes \mathcal{I}^e) [\varphi^+_{se}] \right]} < 1$; otherwise, the system will be biased in the $z$-basis. The total joint state immediately prior to $t_2$ in this scenario is then
\begin{align}\label{eq:appetajointLII2}
    &\eta_2^{sez}(\Mcal_1) = p \varphi_{se}^+ \otimes \ket{0}\bra{0}_{z} \notag \\
    &+ (1-p) \ket{0} \bra{0}_{s} \otimes \tau_{e} \otimes \ket{1}\bra{1}_{z}.
\end{align}
The identity map implemented by the experimenter on the system at $t_2$ does not change this state. Thus, $\Gamma_y^{sey}$ will subsequently be enacted with probability $p$, i.e., when $z=0$: in such cases, the system-environment is in the state $\varphi^+_{se}$ and the output of the map $\Gamma_y^{sey}$ will be so also, accompanied by the classical $y$ flag with the value $0$. In the other cases, the system-environment undergoes trivial dynamics. Again, at $t_3$ implementation of the identity map on the system leaves the joint state unperturbed. Only when the joint state of $zy$ is $00$ will the map $\Gamma_x^{sex}$ be implemented: in each such run, the system-environment is guaranteed to be in the state $\varphi^+_{se}$, and thus so too will be the output of the map. In the other cases, trivial dynamics ensues. The overall joint state in this scenario immediately prior to $t_4$ is then
\begin{align}\label{eq:appetajointLII}
    &\eta_4^{sezyx}(\Mcal_1, \mathcal{I}_2, \mathcal{I}_3) = p \varphi_{se}^+ \otimes \ket{000}\bra{000}_{zyx} \notag \\
   &+ (1-p) \ket{0} \bra{0}_{s} \otimes \tau_{e} \otimes \ket{100}\bra{100}_{zyx},
\end{align}
and so the reduced system state is biased in the $z$-basis:
\begin{align}\label{eq:appetasystemLII}
    \eta_4^{s}(\Mcal_1, \mathcal{I}_2, \mathcal{I}_3) = \frac{p}{2} \mathbbm{1} + (1-p) \ket{0} \bra{0}.
\end{align}

We are now in a position to prove the claim that we set out to, namely that the process considered is genuinely quantum. Demanding classicality means that the experimenter cannot distinguish whether an identity map or a dephasing map was implemented at any subset of previous times: to allow for arbitrary and possibly unrestricted interrogation schemes, here we have considered the more general case where the experimenter is allowed to implement \emph{arbitrary} CPTP maps, of which any POVM measurement followed by an arbitrary preparation is a special case. This more general notion of classicality (with respect to a general, possibly unrestricted, interrogation scheme) means that the experimenter cannot distinguish between the implementation of the identity map or the CPTP map at any subset of previous times and thereby provides a valid notion of a genuinely quantum process. Above, in Eqs.~\eqref{eq:appetasystemIIL},~\eqref{eq:appetasystemILI} and~\eqref{eq:appetasystemLII}, we have calculated the system state that would be measured at $t_4$ conditioned on the fact that a CPTP map was implemented at each one of the previous three times (as well as the case where only a sequence of identity maps was implemented in Eq.~\eqref{eq:appetasystemIII}). Intuitively, in each of the three cases, the system is biased in one of the $x$-, $y$- or $z$-basis directions, and in the case where the experimenter interacts only trivially with the system, it is completely unbiased. The only way that a measurement at $t_4$ cannot distinguish between these four scenarios is if it is blind to biases in \emph{every} basis. The only types of POVM that can achieve this are trivial, with all elements proportional to the identity matrix, $\{ \Pi_4^{(x_4)} \} \propto \mathbbm{1} \ \forall \ x_4$. Thus, there is no (non-trivial) measurement scheme for this process such that the full statistics appears classical, and thus it is a genuinely quantum process.

\end{document}